\pgfplotsset{compat=1.18}
\numberwithin{equation}{section}
\theoremstyle{plain} 
\newtheorem{theorem}{Theorem}[section]
\newtheorem*{lemma*}{Lemma}
\newtheorem{lemma}[theorem]{Lemma}
\newtheorem{corollary}[theorem]{Corollary}
\newtheorem{proposition}[theorem]{Proposition}
\theoremstyle{definition} 
\newtheorem{definition}[theorem]{Definition}
\newtheorem{remark}[theorem]{Remark}
\newcommand{\oo}{\text{\o}}
\newcommand{\ooo}{\emph{\o}}
\newcommand{\vp}{\varphi}
\newcommand{\ola}[1]{\overleftarrow{#1}}
\newcommand{\R}{\mathbb{R}}
\newcommand{\N}{\mathbb{N}}
\newcommand{\eps}{\varepsilon}
\newcommand{\E}{\mathbf{E}}
\newcommand{\prob}{\mathbb{P}}
\newcommand{\I}[1]{\mathbf{1}_{\{#1\}}}
\newcommand{\hgt}{\mathrm{ht}}
\newcommand{\bbU}{\mathbb{U}}
\newcommand{\rw}{\mathrm{w}}
\newcommand{\rr}{\mathrm{r}}
\newcommand{\rk}{\mathrm{k}}
\newcommand{\dd}{\mathrm{d}}
\def\ed{\stackrel{d}{=}} 
\def\edi{\sim}
\def\aseq{\mathrel{\raisebox{-.3ex}{$\stackrel{\mathrm{a.s.}}{=}$}}} %
\def\convas{\mathrel{\raisebox{-.3ex}{$\stackrel{\mathrm{a.s.}}{\longrightarrow}$}}} %
\newcommand{\myitem}[1]{%
\item[#1]\protected@edef\@currentlabel{#1}%
}
\DeclareRobustCommand{\pdfd}{\texorpdfstring{$d$}{d}}
\begin{document}

\begin{frontmatter}
\title{Optimal root recovery for uniform attachment trees and \pdfd-regular growing trees}
\runtitle{Optimal root recovery for random trees}

\begin{aug}
\author[A]{\fnms{Louigi}~\snm{Addario-Berry}\ead[label=e1]{louigi.addario@mcgill.ca}}
\author[A]{\fnms{Catherine}~\snm{Fontaine}\ead[label=e2]{catherine.fontaine2@mail.mcgill.ca}}
\author[A]{\fnms{Robin}~\snm{Khanfir}\ead[label=e3]{robin.khanfir@mcgill.ca}}
\author[A]{\fnms{Louis-Roy}~\snm{Langevin}\ead[label=e4]{louis-roy.langevin@mail.mcgill.ca}}
\author[B]{\fnms{Simone}~\snm{T\^etu}\ead[label=e5]{simone\_tetu@brown.edu}}

\address[A]{Department of Mathematics and Statistics, McGill University\printead[presep={,\ }]{e1,e2,e3,e4}}

\address[B]{Division of Applied Mathematics, Brown University\printead[presep={,\ }]{e5}}
\end{aug}

\begin{abstract}
We consider root-finding algorithms for random rooted trees grown by uniform attachment. Given an unlabeled copy of the tree and a target accuracy $\eps > 0$, such an algorithm outputs a set of nodes that contains the root with probability at least $1 - \eps$. {We focus on the algorithm introduced in \cite{FindingAdam} and proved to be optimal in \cite{Inference}.} We prove that an output set of size $\exp(O(\sqrt{\log(\sfrac{1}{\eps})}))$ suffices; this bound is sharp and answers a question from \cite{FindingAdam}. We prove similar bounds for random regular trees that grow by uniform attachment, strengthening a result from~\cite{Confidence}.
\end{abstract}

\begin{keyword}[class=MSC]
\kwdgroup[type=primary]{\kwd{60C05} \kwd{62M05}}
\kwdgroup[type=secondary]{\kwd{05C80} \kwd{94C15}}
\end{keyword}

\begin{keyword}
\kwd{Uniform attachment tree}
\kwd{random recursive tree}
\kwd{rumor centrality}
\kwd{root reconstruction}
\kwd{network archaeology}
\kwd{diffusions on regular trees}
\end{keyword}

\end{frontmatter}

\section{Introduction}\label{sec:intro}
A sequence $(T_n)_{n \ge 1}$ of trees is said to follow the {\em uniform attachment} or {\em UA} model if it is generated as follows: $T_1$ consists of a single node (the root, denoted by \o); for each $n \ge 1$, the tree $T_{n+1}$ is generated from $T_n$ by attaching a new node to an existing node chosen uniformly at random, independently of the previous choices. If $(T_n)_{n \ge 1}$ follows the UA model then write $T_n \edi\mathrm{UA}(n)$.
\smallskip

{\em Root finding} asks the following question: on observing $T_n$ but not the identity of the root \o, with what confidence can \o~be recovered? 
More precisely, {\em a root-finding algorithm} is a function $A$ that, given an (unlabeled, finite) input tree $T$, outputs a set $A(T)$ of nodes of $T$. The {\em size} of $A$ is the function  
$s(A):=\sup(|A(T)|,T\mbox{ a finite tree})$. The {\em error} of $A$ is $\sup(\prob(\oo \not\in A(T_n)),n \ge 1)$, where $T_n\edi \mathrm{UA}(n)$. In other words, it is the worst-case probability that the algorithm fails to return the root when the input is a uniform attachment tree of some size. 
Bubeck, Devroye and Lugosi \cite{FindingAdam} showed that for all $\eps > 0$, there exists a UA root-finding algorithm $A^{\eps}$ with error at most $\eps$ and size at most $\exp(c \log(\sfrac1\eps)/\log\log(\sfrac1\eps))$, where $c>0$ is a universal constant. They also showed that
any root finding algorithm $A$ with error at most $\eps$ must have size at least $\exp(c' \sqrt{\log\sfrac{1}{\eps}})$, for another universal constant $c'>0$, and posed the question of whether either of these bounds are tight as an open problem. 

\smallskip
The first main result of this work is to show that the above lower bound is tight, up to the value of~$c'$. 
In order to precisely state this result, we first describe the  algorithm we study. Fix a finite tree $T=(V,E)$ with $|T|:=|V|<\infty$. Define a {\em centrality measure} $\vp_T:V \to \R$ by setting
\begin{align}\label{def phi in graph}
    \varphi_T(u)=\prod_{v\in V\setminus \{u\}}|(T,u)_{v\downarrow}| 
\end{align}
for $u \in V$; here $(T,u)$ denotes the tree $T$ rooted at node $u$ and $(T,u)_{v\downarrow}$ denotes the subtree rooted at $v$ in $(T, u)$. (We say that a node $u\in V$ is \emph{more central} than a node $v\in V$ when $\varphi_T(u)\leq\varphi_T(v)$.) Next, write $n=|T|$ and order the nodes of $T$ as $(v_1,\ldots,v_n)=(v_1(T),\ldots,v_n(T))$ so that $\vp_T(v_1)\le \vp_T(v_2) \le \ldots \le \vp_T(v_n)$, breaking ties arbitrarily. Given a positive integer $k$, for an input tree $T$ with $|T|=n$, the algorithm $\mathcal{A}_k$ has output $\mathcal{A}_k(T)=\{v_1(T),\ldots,v_{k \wedge n}(T)\}$ consisting of the nodes with the $k$ smallest $\vp_T$ values (or all nodes, if the input $T$ has fewer than $k$ nodes). Clearly, $\mathcal{A}_k$ has size $k$. This algorithm was introduced by Bubeck, Devroye and Lugosi \cite{FindingAdam}, and was proved to have minimal error among algorithms of size $k$, for a fairly wide range of growing tree models, including the ones studied in this paper, in \cite[Theorem 3]{Inference}. The first main contribution of this paper is a tight analysis of the performance of this algorithm in the setting of uniform attachment trees. 
\smallskip
In the next theorem, and throughout, write $\N_1$ for the set of strictly positive integers.
\begin{theorem}\label{thm:mainua}
    There exist $C^*,c^*>0$ such that the following holds. For $\eps > 0$, let $K= K(\eps)=C^*\exp(c^*\sqrt{\log\sfrac1\eps})$. 
    Then for all $n \in \N_1$, for $T_n \edi \mathrm{UA}(n)$, it holds that $\prob(\mathrm{\ooo}\not\in \mathcal{A}_K(T_n)) \le \eps$.
\end{theorem}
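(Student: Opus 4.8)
The plan is to reduce the bound to a statement about the rumour-centrality profile of a uniform attachment tree, and then to control that profile using the recursive subtree structure of $\mathrm{UA}$ trees. Write $R_T(u):=(n-1)!/\varphi_T(u)$ for an $n$-vertex tree $T$; by the hook-length formula for trees $R_T(u)$ is the number of increasing labellings of $(T,u)$ — equivalently, the number of orderings in which $T$ could have been grown by uniform attachment from the root $u$ — so minimising $\varphi_T$ is the same as maximising $R_T$, and $\mathcal{A}_k$ returns the $k$ vertices of largest $R_T$. I would use two facts. First, the deterministic inclusion $\{\o\notin\mathcal{A}_K(T_n)\}\subseteq\{N_n\ge K\}$, where $N_n:=\#\{u: R_{T_n}(u)\ge R_{T_n}(\o)\}$ is the number of ``competitors'' of the root. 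Second, the exact identity
\[
\prob\big(\o\notin\mathcal{A}_K(T_n)\big)=\E\!\left[\frac{\sum_{j>K}R_{(j)}(T_n)}{\sum_{j}R_{(j)}(T_n)}\right],
\]
where $R_{(1)}\ge R_{(2)}\ge\cdots$ are the rumour centralities of the vertices of $T_n$ in decreasing order, which holds because, conditionally on the unlabelled shape of $T_n$, the location of the root is distributed proportionally to $R$. Bounding $\prob(N_n\ge K)$ alone appears to recover only a version of Theorem~\ref{thm:mainua} with the weaker size $\exp(O(\log(\sfrac1\eps)/\log\log(\sfrac1\eps)))$ of \cite{FindingAdam}; to reach the sharp exponent I would work with the finer identity.

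Next I would record the multiplicative structure of $R$. If $u\sim w$ in $T$ and deleting the edge $uw$ splits $T$ into parts of sizes $a\ni u$ and $n-a\ni w$, then $R_T(u)/R_T(w)=a/(n-a)$; telescoping along the path from $\o$ to $u$ gives
\[
\frac{R_{T_n}(u)}{R_{T_n}(\o)}=\prod_{i=1}^{\hgt(u)}\frac{|S_i|}{\,n-|S_i|\,},
\]
where $S_1\supsetneq S_2\supsetneq\cdots$ are the subtrees of $(T_n,\o)$ rooted at the successive strict ancestors of $u$. Since the factors are decreasing and each is at most $|S_1|/(n-|S_1|)$, a vertex $u\neq\o$ can compete only if it lies in the (unique, if it exists) child-subtree of $\o$ of size $>n/2$, and inside that subtree the competitors of $\o$ are exactly the vertices exceeding a rescaled threshold. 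Iterating, the competitor set is carried by a nested family of ``heavy'' subtrees, and both $N_n$ and, more usefully, the whole profile $(R_{(j)})_j$ obey distributional recursions driven by the root's child-subtree sizes. Those sizes are tractable: at step $k$ a new vertex opens a new branch with probability $1/k$ and otherwise joins an existing branch with probability proportional to its current size, so the largest child-subtree has relative size $\ge 1-\delta$ with probability $\Theta(\delta)$, uniformly in $n$, and conditionally on its size it is again a $\mathrm{UA}$ tree; equivalently one may pass to the continuous-time (Yule) embedding, in which the normalised child-subtree sizes become a ratio of points of a rate-$1$ Poisson process and distinct subtrees evolve independently.

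With these in hand, the finer identity reduces the theorem to showing that, with probability $\ge 1-\eps$ (up to constants), the $K=C^*\exp(c^*\sqrt{\log(\sfrac1\eps)})$ largest rumour centralities of $T_n$ account for a $(1-\eps)$-fraction of the total — i.e.\ the rumour-centrality mass concentrates on the $K$ most central vertices. Using the multiplicative representation along the heavy spine of $T_n$ (the path obtained by repeatedly descending into the largest subtree), this becomes the estimate of a product functional of the spine's branch ratios and of the off-spine subtree sizes, which I would analyse by the recursion, simultaneously tracking the accumulated ``bonus'' $\prod_i |S_i|/(n-|S_i|)$, the number of vertices it allows to compete, and the probabilistic cost of keeping the spine near full size. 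The delicate point — and the step where \cite{FindingAdam} loses a factor — is the trade-off in this recursion: a heavy branch of deficit $\delta$ costs probability $\Theta(\delta)$ but multiplies the bonus, hence the potential number of competitors, so one must balance the length of the spine (which should come out $\Theta(\log K)$), the deficits along it (which should be $\Theta(1/\log K)$, so that the spine stays of size $\Theta(n)$), and the combinatorial proliferation of competitors hanging off the spine, and show that the total probability is $\exp\big(-\Theta((\log K)^2)\big)$; rearranging $\exp(-\Theta((\log K)^2))\le\eps$ then yields $K=C^*\exp(c^*\sqrt{\log(\sfrac1\eps)})$. I expect this optimisation — carried out uniformly in $n$, and with care for the (harmless) tie-breaking in the definitions of $\mathcal{A}_K$ and of $N_n$ — to be the main difficulty; the $d$-regular statement announced in the abstract should then follow along the same lines, with the Yule embedding replaced by the analogous branching structure on the regular tree.
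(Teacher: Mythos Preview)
Your proposal is a sketch, not a proof, and it rests on a mistaken premise. You assert that ``bounding $\prob(N_n\ge K)$ alone appears to recover only'' the weaker $\exp(O(\log(\sfrac1\eps)/\log\log(\sfrac1\eps)))$ bound of \cite{FindingAdam}, and that the finer Bayesian identity is needed to reach the sharp exponent. This is exactly backwards: the paper proves Theorem~\ref{thm:mainua} precisely by bounding $\prob(|B_n|\ge K)$, where $B_n=\{v:\varphi_{T_n}(v)\le\varphi_{T_n}(\varnothing)\}$ is your $N_n$ up to the convention on ties. The Bayesian identity $\prob(\o\notin\mathcal{A}_K)=\E[\sum_{j>K}R_{(j)}/\sum_j R_{(j)}]$ is never used, and no concentration statement for the rumour-centrality profile is proved. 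So your entire motivation for switching to the profile-mass formulation is based on an incorrect assessment of what the direct route can yield.

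The substantive gap is that you have not identified the mechanism that makes the direct count of competitors work. The paper's key combinatorial input is Proposition~\ref{deter_geo_flow}: for a geometrically decaying preflow $\gamma_\alpha(u)=\alpha^{-\sum_i(u_i-1)}$ on $\bbU$, the set $E_x(\gamma_\alpha)=\{u:x\prod_{\varnothing\prec v\preceq u}\gamma_\alpha(v)/2\ge 1\}$ has size at most $\exp(c+c\sqrt{\log_\alpha x})$; the proof reduces to counting $\{(j_1,\dots,j_h):\sum_k kj_k\le n\}$ and invokes Erd\H os' nonasymptotic form of the Hardy--Ramanujan partition bound. This is what converts ``the bonus along the path is at most $x$'' into ``at most $\exp(c\sqrt{\log x})$ competitors in a small subtree''. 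For the UA model, where degrees are unbounded, there is no deterministic geometric domination of the subtree-size preflow; the paper instead builds (Lemma~\ref{Unif}, Corollary~\ref{corr}) a random bijection $\sigma$ of $\bbU$ and \textsc{iid}~Uniform$[\tfrac12,1]$ variables $(V_u)$ so that the reordered asymptotic proportions are bounded by an explicit preflow with geometric decay in the sibling index. After a second comparison (the variables $\Gamma$ and the map $\chi$ in the proof of Proposition~\ref{main}), $N_x(P)$ is dominated by $\sum_{u\in E_x(\gamma)}Z_u$ where each $Z_u$ is the size of generation $\rr(u)$ in a Geometric$(1/2)$ Galton--Watson process and $\rr(u)\le c\sqrt{\log x}$ on $E_x(\gamma)$; this gives the exponential tail bound $\prob(N_x(P)\ge y\exp(c+c\sqrt{\log x}))\le 2e^{-y}$. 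Combined with a polynomial tail for the competitive ratio $\Phi(T_n)$ (Proposition~\ref{Distribution of Phi(T) in UA}, via Lemma~\ref{Bound x}) and a weight-based analogue of the depth bound (Lemma~\ref{Size subtree}), this yields the theorem.

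Your outline contains none of these ingredients: no partition-type counting, no construction to tame the unbounded branching of $\bbU$, and no concrete estimate for the ``combinatorial proliferation of competitors hanging off the spine'' --- you simply assert the optimisation should come out to $\exp(-\Theta((\log K)^2))$. The heavy-spine heuristic and the $\Theta(\log K)$/$\Theta(1/\log K)$ tradeoff you describe are the intuition behind the \emph{lower} bound construction in \cite{FindingAdam}, but turning that picture into an upper bound on $|B_n|$ requires exactly the combinatorics you have left unspecified. As written, the proposal does not constitute a proof; and since the simple inclusion $\{\o\notin\mathcal{A}_K\}\subseteq\{|B_n|\ge K\}$ already suffices, the detour through the Bayesian identity is unnecessary.
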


The second main result of this work establishes an analogous bound for a degree-bounded variant of the UA model, previously studied in \cite{Confidence}. Given a positive integer $d \ge 3$, a {\em $d$-regular tree} is a tree in which every non-leaf node has exactly $d$ neighbours. A sequence $(T_n)_{n \ge 1}$ of trees follows the {\em $d$-regular uniform attachment} or {\em $\mathrm{UA}_d$} model if it is generated as follows. 
First, $T_1$ consists of a single root node $\oo$ and $d$ leaves. Then, for each $n \ge 1$, $T_{n+1}$ is built from $T_n$ by  adding $d-1$ new neighbours to a single leaf $V_n$ of $T_n$, with $V_n$ chosen uniformly at random from among the leaves of $T_n$, independently of all previous choices. 
Note that for all $n$, $T_n$ is a $d$-regular tree with exactly $n$ non-leaf nodes. If $(T_n)_{n \ge 1}$ follows the $\mathrm{UA}_d$ model then we write $T_n \edi \mathrm{UA}_d(n)$. 

{\color{black}
\begin{theorem}\label{thm:dary}
    There exist $C,c>0$ such that for any integer $d \ge 3$ the following holds. 
    For $\eps > 0$, let $K=K(\eps)=C\exp(c\sqrt{\log\sfrac1\eps})$. Then for all $n \in \N_1$, for $T_n \edi \mathrm{UA}_d(n)$, it holds that $\prob(\ooo\not\in \mathcal{A}_K(T_n)) \le \eps$.
\end{theorem}}
The previous theorem strengthens \cite[Corollary 2]{Confidence}, which showed that there exists $K= K(\eps)=\exp(O(\log(\sfrac{1}{\eps})/\log\log(\sfrac{1}{\eps}))) $ such that $\prob(\oo \not\in \mathcal{A}_K) \le \eps$.  Moreover, the bound in Theorem~\ref{thm:dary} is optimal up to the value of $c$; the proof of this consists in adapting a construction from \cite{FindingAdam}, in which $T_n$ grows by first building a long path stretching away from the root and then growing exclusively in the subtree at the far end of the path, to the $d$-regular setting. (The details of this construction appear in  Section~\ref{appendix_tightness}, below.)

\begin{remark}\label{remove leaves}
As defined above, the $\mathrm{UA}_d$ model slightly differs from the so-called diffusion process on the infinite $d$-regular tree studied by Khim and Loh~\cite{Confidence} and by Shah and Zaman~\cite{Rumors}. The latter is a growing sequence $(D_n)_{n\geq 1}$ of subtrees of the infinite $d$-regular tree, where each step appends a unique new vertex, chosen uniformly at random among the neighbours of $D_n$ in the infinite tree. In contrast, each step of the $\mathrm{UA}_d$ model adds $d-1$ new nodes, keeping a $d$-regular tree structure throughout. To see that the two models are equivalent, note that adding to $D_n$ all its neighbours in the infinite tree yields a $\mathrm{UA}_d$-distributed process; conversely, removing all leaves of $T_n\sim \mathrm{UA}_d(n)$ yields a random tree distributed as $D_n$. This coupling between the $\mathrm{UA}_d$ model and the diffusion process on the infinite $d$-regular tree also entails that for any fixed $k\geq 1$, any optimal root-finding algorithm of size $k$ has the same error probability for the two models. 
\end{remark}

The mathematical study of root finding was instigated in \cite{Rumors}; that work showed that the optimal size-$1$ algorithm for the $\text{UA}_3$ model has error $1/4$, and that for each $d \ge 3$ there exists $\alpha_d\in (0,1/2)$ such that the optimal size-$1$ algorithm for the $\text{UA}_3$ model has error $\alpha_d$. (Note that a size-1 algorithm must simply  output a single node as its guess for the identity of the root.) 
\smallskip

The paper \cite{FindingAdam} was the first to investigate the dependence between the error and the size; that work studied
both uniform attachment trees and {\em preferential attachment} trees, in which the probability $V_n$ connects to a node $w$ is proportional to the current degree of $w$. 
Since then, root reconstruction has been studied for a wide range of growing tree \cite{AnalysisOfCentrality,Persistence,Influence,Scaling,discovery} and graph \cite{Degree,Archaeology,looking} models. However, until quite recently, the best possible performance of a root finding algorithm (in terms of how the output set size depends on the error tolerance $\eps$) was not known for any model. The only other such tight bound we are aware of is for the preferential attachment model, for which \cite{Eve} proved that a lower bound established in \cite{FindingAdam} is in fact sharp.

\subsection{A sketch of the proof} \label{sect: sketch of proof}
We begin by sketching the proof of the following slight weakening of Theorem~\ref{thm:dary}, as it is more straightforward; the adaptations needed to prove Theorem~\ref{thm:dary} in full basically consist in establishing that certain bounds hold uniformly in $d\ge 3$. We then briefly discuss the adaptations needed to handle the uniform attachment model and thereby prove Theorem~\ref{thm:mainua}. 
{\color{black}
\begin{theorem}\label{thm:dary_weak}
    For any $d\geq 3$, there exist $C_d^*,c_d^*>0$ such that the following holds. For $\eps > 0$, let $K= K(\eps)=C_d^*\exp(c_d^*\sqrt{\log\sfrac1\eps})$. 
    Then for all $n \in \N_1$, for $T_n \edi \mathrm{UA}_d(n)$, it holds that $\prob(\ooo\not\in \mathcal{A}_K(T_n)) \le \eps$.
\end{theorem}}
The proof of Theorem~\ref{thm:dary_weak} consists of three key steps. 
\begin{enumerate}
    \myitem{(I)}\label{1_key_step} Let $D-1$ be the greatest distance from the root of $T_n$ of a node $u$ with the property that $|(T_n,\oo)_{u\downarrow}|/|T_n| \ge 1/3$. Then $D$ has exponential tails: there exists $C>0$ such that $\prob(D \ge m) \le C(\sfrac23)^m$. (See Lemma~\ref{binary_size_subtrees}, below.) 
\end{enumerate}
This first point essentially follows by a union bound. (The constant $C$ we obtain depends on $d$ in the $\mathrm{UA}_d$ model, but we believe this dependence can be straightforwardly eliminated.)  
\begin{enumerate}
    \myitem{(II)}\label{2_key_step} For a tree $T=(V,E)$ with root $\oo$, let $\Phi(T)=\vp_T(\oo)/\min_{u \in V}\vp_T(u)$. We call this the {\em competitive ratio} of $T$; it is the ratio of the centrality measure of $\oo$ to that of the most central node of $T$. Then $\Phi(T_n)$ has polynomial tails: $\prob(\Phi(T_n) \ge x) \le Ax^{-a}$ for some universal constants $A,a>0$, not depending on $d$. (See Proposition~\ref{Distribution of Phi(T)}, below.)
\end{enumerate}
To prove this, we characterize the path $\oo=u_0,\ldots,u_G=v$ from the root to the (with high probability) unique node $v$ with $\vp_{T_n}(v)$ minimal, then bound $G$, and ultimately the competitive ratio $\vp_{T_n}(\oo)/\vp_{T_n}(v)$, by a model-dependent analysis of the splitting of mass in subtrees of $T_n$. This analysis uses and builds on the ``nested P\'olya urn'' perspective introduced in \cite{FindingAdam}. The model-dependence arises here because different values of $d$ yield different urn models; but the resulting constants $a,A$ do not depend on $d$.  
\begin{enumerate}
    \myitem{(III)}\label{3_key_step} For any  integer $d \ge 3$, there exists $c=c(d) > 0$ such that for any tree $T=(V,E)$ with root $\oo$ where each node has at most $d-1$ children, if $u\in V$ is such that $|(T,\oo)_{u\downarrow}|\leq \frac{1}{3}|T|$ then $\big|\{ v \in (T,\oo)_{u\downarrow}: \varphi_{T}(\oo) \geq \varphi_{T}(v)\}\big| \leq \exp(c+c\sqrt{\log \Phi(T)})$. This deterministically bounds the number of nodes which are more competitive candidates than the root which lie in a given small subtree of $T$. (See Proposition~\ref{deterministicBound}, below.) 
\end{enumerate}
The dependence of $c$ on $d$ in \ref{3_key_step} seems unavoidable, as briefly discussed at the end of this section, and explained in greater detail later, in Section~\ref{rem:dependence_d}. However, in Section~\ref{sec:dependence_d} we explain how to remove the $d$-dependence from the constant $c_d^*$ in Theorem~\ref{thm:dary_weak} to obtain Theorem~\ref{thm:dary}.

\smallskip
The idea behind the proof of \ref{3_key_step} is that, once $|(T,\oo)_{u\downarrow}|$ is small, the function $\vp_T$ should increase quickly as one moves further into $|(T,\oo)_{u\downarrow}|$. Specifically, it is not hard to show that if $|(T,\oo)_{u\downarrow}|\le |T|/3$ then for any child $v$ of $u$, 
$\vp_T(v) \ge \vp_T(u)\cdot(|T|/|(T,\oo)_{u\downarrow}|-1) > 2|\vp_T(u)|$. 
Combining this observation with deterministic arguments similar in spirit to those in \cite{FindingAdam}, which involve some combinatorial reductions followed by the use of the Hardy-Ramanujan formula on the number of partitions of an integer, the bound in \ref{3_key_step} follows.

\smallskip

We now use \ref{1_key_step}-\ref{3_key_step} to prove Theorem~\ref{thm:dary_weak} in full, then conclude the proof sketch by briefly discussing the adaptations needed to prove Theorem~\ref{thm:mainua}. (These adaptations are rather non-trivial, and somewhat technical, and we only outline them at a high level.)

\begin{proof}[Proof of Theorem~\ref{thm:dary_weak}]
Write $B_n=\{v \in T_n:  \vp_{T_n}(\oo)\ge \vp_{T_n}(v)\}$ for the set of nodes which are at least as central as the root, and let $H_n=\{u \in T_n: |(T_n,\oo)_{u \downarrow}|\ge |T_n|/3\}$. Both $H_n\setminus B_n$ and $B_n \setminus H_n$ may be nonempty. Note that $H_n$ is a connected set of vertices which contains $\oo$, so in particular forms a subtree of $T_n$, and that by the pigeonhole principle $H_n$ contains at most three leaves, as illustrated in Figure \ref{fig:H_n}.

\begin{figure}[h]
    \centering
    \usetikzlibrary{backgrounds}

\begin{tikzpicture}
[
    dot/.style = {circle, fill=RoyalBlue!60!white, inner sep=2.2pt, draw=none},
    other dot/.style = {edge from parent/.style={draw=black, line width=1.3pt}, circle, fill=black, inner sep=2pt, draw=none},
    lobe base/.style = {fill=gray!10, draw=black, line width=0.6pt},
    pics/lobe_very_small/.style={code={ 
        \draw[lobe base] (0,0) .. controls (-0.31, 0.24) and (-0.24, 0.82) .. (0, 0.82) 
                              .. controls (0.24, 0.82) and (0.31, 0.24) .. (0,0) -- cycle;
    }},
    pics/lobe_small/.style={code={
        \draw[lobe base] (0,0) .. controls (-0.38, 0.3) and (-0.3, 1.0) .. (0, 1.0) 
                              .. controls (0.3, 1.0) and (0.38, 0.3) .. (0,0) -- cycle;
    }},
    pics/lobe_medium/.style={code={ 
        \draw[lobe base] (0,0) .. controls (-0.66, 0.52) and (-0.52, 1.73) .. (0, 1.73) 
                              .. controls (0.52, 1.73) and (0.66, 0.52) .. (0,0) -- cycle;
    }},
    pics/lobe_large/.style={code={ 
        \draw[lobe base] (0,0) .. controls (-0.93, 0.73) and (-0.73, 2.45) .. (0, 2.45) 
                              .. controls (0.73, 2.45) and (0.93, 0.73) .. (0,0) -- cycle;
    }},
    grow = up,
    level distance = 8mm,             
    level 1/.style = {level distance=15mm, sibling distance=25mm}, 
    level 2/.style = {sibling distance=22mm}, 
    level 3/.style = {sibling distance=15mm},
    level 4/.style = {sibling distance=10mm},
    level 5/.style = {sibling distance=8mm},
]

\begin{scope}[xshift=-0mm]
    \node[dot] (root1) {}
        child[other dot] { node[other dot] (n1) {} } 
        child[other dot] { 
            node[dot] (n2) {} 
            child[other dot] { 
                node[dot] (n2_child) {} 
                child[other dot] { node[other dot] (n2_l_left) {} } 
                child[other dot] { 
                    node[dot] (n2_l_right) {} 
                    child[other dot] { node[other dot] (new_l) {} } 
                    child[other dot] { 
                        node[dot] (top_blue) {} 
                        child[other dot] { node[other dot] (top_l) {} }
                        child[other dot] { node[other dot] (top_m) {} }
                        child[other dot] { node[other dot] (top_r) {} }
                    }               
                    child[other dot] { node[other dot] (new_r) {} } 
                }      
            }
        }
        child[other dot] { node[other dot] (n3) {} }
        child[other dot] { 
            node[dot] (n4) {} 
            child[other dot] { 
                node[dot] (n4_child1) {} 
                child[other dot] { node[other dot] (c1) {} } 
                child[other dot] { node[other dot] (c2) {} } 
                child[other dot] { node[other dot] (c3) {} } 
            }
            child[other dot] { node[other dot] (n4_child2) {} } 
        };
\end{scope}

\begin{scope}[on background layer]
    \path (n3) pic {lobe_medium};
    \path (n1) pic {lobe_medium};
    \path (c3) pic {lobe_medium};
    
    \path (n4_child2) pic {lobe_very_small}; 
    \path (c1) pic {lobe_small};
    \path (top_r) pic {lobe_very_small};
    \path (top_m) pic {lobe_very_small};
    \path (new_l) pic {lobe_very_small};
    \path (c2) pic {lobe_very_small}; 
    
    \foreach \n in {new_r, top_l, n2_l_left} {
        \path (\n) pic {lobe_small}; 
    }
\end{scope}

\begin{scope}[xshift=40mm, yshift=80mm]
    \node[label=right:{Nodes in $H_n$}] (l1) {};
    \node[below=5mm of l1, label=right:{Nodes in $T_n\setminus H_n$}] (l2) {};
    \node[dot] at ([xshift=-8pt]l1.west) {};
    \node[other dot] at ([xshift=-8pt]l2.west) {}; 
\end{scope}

\end{tikzpicture}
    \caption{Illustration of the set $H_n$ in a tree $T_n$. The grey lobes represent subtrees whose size is strictly less than $\frac{1}{3}n$. }
    \label{fig:H_n}
\end{figure}

For $v \in T_n \setminus H_n$, there is a unique ancestor $u$ of $v$ which is a child of an element of $H_n$ but which does not lie in $H_n$ itself. 
Since each node of $H_n$ has at most $d$ children in $T_n\setminus H_n$, we then have
\begin{align*}
|B_n| \le |H_n| + d|H_n|\max_{u \in T_n\setminus H_n} |B_n \cap (T_n,\oo)_{u \downarrow}|.
\end{align*}
Since $H_n$ has at most $3$ leaves, by the definition of $D$ in \ref{1_key_step} we have $|H_n| \le 3D$. Next, for all $u \in T_n\setminus H_n$, by \ref{3_key_step} we have $|B_n \cap (T_n,\oo)_{u \downarrow}| \le \exp(c+c\sqrt{\log\Phi(T_n)})$, so the preceding displayed bound entails that 
\[
|B_n| \le 3D\big(1+d\exp(c+c\sqrt{\log\Phi(T_n)})\big)
\le 3(d+1)D\exp(c+c\sqrt{\log\Phi(T_n)})
\, .
\]
It follows that 
\begin{align*}
 \prob\Big(|B_n| \ge 3(d+1)\log_{3/2} (\tfrac1\eps) \exp(c+c\sqrt{\tfrac{1}{a}\log\tfrac1\eps}\big)\Big) 
 & \le \prob\big(D \ge \log_{3/2}\tfrac1\eps\big) + \prob\big(\Phi(T_n) \ge \tfrac1{\eps^{1/a}} \big)\\
 & \le C \eps+A\eps\, 
\end{align*}
by the bounds in \ref{1_key_step} and \ref{2_key_step}. Since $\oo \in \mathcal{A}_K(T_n)$ whenever $K \ge |B_n|$, the result follows. 
\end{proof}

The proof of Theorem~\ref{thm:mainua} follows the same basic strategy as that of Theorem~\ref{thm:dary_weak}, but several adaptations are needed. First, to help address the fact that degrees are unbounded in the UA model, in step~\ref{1_key_step}, rather than work with {\em distances} we work with {\em weights}. Weights are defined inductively; the root has weight $\rw(\oo)=0$, and if $v$ is the $k$'th child of $u$ then we set $\rw(v)=\rw(u)+k$. With this adjustment, versions of \ref{1_key_step} and \ref{2_key_step} for the UA model follow using similar (though more technical) arguments to those for the $\mathrm{UA}_d$ model. (See respectively Lemma~\ref{Size subtree} and Proposition~\ref{Distribution of Phi(T) in UA}, below.)

\smallskip
The most challenging adaptation is that there is no fixed constant $c>0$ which makes \ref{3_key_step} true deterministically for all trees. (To see this, consider the case where $T$ is a star rooted at a leaf.) We are thus obliged to resort to probabilistic bounds; we replace the inequality in \ref{3_key_step} by something which has the flavour of the following statement: for $T_n \sim \mathrm{UA}(n)$, the ratio $\big|\{ v \in (T_n,\oo)_{u\downarrow}: \varphi_{T_n}(\oo) \geq \varphi_{T_n}(v)\}\big| /\exp(c+c\sqrt{\log \Phi(T_n)})$ has exponentially decaying upper tail, uniformly in $n$. This is not exactly what we prove, but the precise statement requires a more technical setup than is suitable for a proof overview. (See Proposition~\ref{main}.) Let us at least remark that the proof of the precise statement relies on an on-average geometric decay of the sizes of the subtrees stemming from the children of a given node; this is proved via another model-specific analysis.

\subsection{Overview of the paper}

Section~\ref{sec:setup} contains material which is used in the proofs of both main results: a formalism of the model that we use for the analysis; some basic facts about the centrality measure; an important bound for the behaviour of certain geometrically decaying flows on trees (Proposition~\ref{deter_geo_flow}); 
and some definitions, distributional identities for, and relations between different families of random variables. Section~\ref{sec:dreg} contains the proofs of the three key steps used to show Theorem~\ref{thm:dary_weak} in Section~\ref{sect: sketch of proof}, as well as a brief discussion of the $d$-dependence of the bounds of that theorem (see Section~\ref{rem:dependence_d}). Section~\ref{sec:ua} contains the proof of Theorem~\ref{thm:mainua}, followed by the adaptation of that proof which proves Theorem~\ref{thm:dary}; see Section~\ref{sec:dependence_d}. 

\section{Setup and general results}
\label{sec:setup}

This section contains formalism and results that are used in proving Theorems~\ref{thm:mainua},~\ref{thm:dary} and~\ref{thm:dary_weak}. 
The first subsection, below, introduces the {\em Ulam--Harris} formalism for rooted ordered trees, which we use for all of the analysis. 
The second subsection establishes some deterministic facts connecting centrality measure and the competitive ratio. The third subsection controls the behaviour of certain geometrically decaying functions on rooted trees, that will be crucial for our analysis. The final subsection recalls some important distributions that naturally arise in the analysis and gives some of their basic properties.

\subsection{The Ulam--Harris formalism}
\label{words_section}

Recall that $\N_1=\{1,2,3,\ldots\}$ is the set of positive integers and let $\mathbb{U}$ be the set of finite words written on the alphabet $\N_1$, namely
\[\mathbb{U}=\bigcup_{n\geq 0}\N_1^n,\quad\text{ with the convention that }\N_1^0=\{\varnothing\}.\]
For two words $u=(u_1,\ldots,u_n)$ and $v=(v_1,\ldots,v_m)$, let $u*v=(u_1,\ldots,u_n,v_1,\ldots,v_m)$ stand for the concatenation of $u$ and $v$. If $u$ is not the empty word $\varnothing$, then set $\ola{u}=(u_1,\ldots,u_{n-1})$. We further call $\ola{u}$ the \emph{parent} of $u$ and say that $u$ is a \emph{child} of $\ola{u}$. For any $j\geq 1$, we interpret $u*j:=u*(j)$ as the $j$'th child of $u$. This yields a \emph{genealogical order} denoted by $\preceq$: for $u,v\in\mathbb{U}$, we write $u\preceq v$ if and only if there exists $w\in\mathbb{U}$ such that $v=u*w$ (in other words, $u$ is a prefix of $v$). In that case, say that $u$ is an \emph{ancestor} of $v$ and that $v$ is a \emph{descendant} of $u$. Note that $u$ is always both an ancestor and a descendant of itself. We also use the notation $u \prec v$ to express that $u \preceq v$ and $u \ne v$.

\smallskip
For an integer $d \ge 2$  write $\bbU_d$ for the subset of $\bbU$ consisting of words $u=(u_1,\ldots,u_n)$ with $u_1 \in [d+1]$ and $u_i \in [d]$ for $2 \le i \le n$. The set $\bbU_{d}$ will be useful when studying the $\mathrm{UA}_{d+1}$ model.

\begin{definition}
\label{def:plane_tree}
A \emph{plane tree} is a finite set  $t\subset \mathbb{U}$ that satisfies the following properties:
\begin{enumerate}
    \item[$(a)$] it holds that $\varnothing\in t$;
    \item[$(b)$] for all $u\in t\setminus \{\varnothing\}$, it holds that $\ola{u}\in t$;
    \item[$(c)$] for all $u\in t$, there exists an integer $\rk_u(t)\geq 0$ such that for any $j\geq 1$, we have that $u*j\in t\Longleftrightarrow 1\leq j\leq \rk_u(t)$.
\end{enumerate}
\end{definition}
We view $\varnothing$ as the root of any plane tree. For a plane tree $t$ and for $u\in t$, we call the integer $\rk_u(t)$ the \emph{number of children of $u$ in $t$}. When $\rk_u(t)=0$, 
we say that $u$ is a \emph{leaf} of $t$. Finally, observe that for any $u\in t$, the set $\{v\in\mathbb{U}\, :\, u*v\in t\}$ is also a plane tree. We denote it by $\theta_u t$
and call it the \emph{(plane) subtree of $t$ stemming from $u$}. When $u\notin t$, we set $\theta_u t$ to be the empty set by convention (note that this is not a plane tree because it does not contain the empty word $\varnothing$). Subtrees  $\theta_u t$ will play essentially the same role in the remainder of the paper that the subtrees $(T,\oo)_{u \downarrow}$ played in Section~\ref{sec:intro}.
\smallskip

We say a plane tree $t$ is {\em $d$-ary} if $\rk_u(t)\in\{0,d\}$ for all $u \in t$. 
If $t$ is a $d$-ary plane tree with $m$ leaves then $|t\setminus\{\varnothing\}|=\sum_{u\in t}\rk_u(t)=(|t|-m)d$, so $(d-1)|t|=dm-1$.
\smallskip

A plane tree $t$ can be naturally viewed as a graph in the following way. Each word $u \in t$ represents a node in the graph, and the edge set consists of the pairs $\{\ola{u},u\}$, where $u \in t \setminus \{\varnothing\}$. This clearly yields a connected and acyclic graph, i.e.~a tree. By a slight abuse of notation, we identify the plane tree $t$ with its associated graph, which we also denote by $t$. Note that in the graph-theoretic sense, in a $d$-ary tree, all non-leaf nodes aside from the root have $d+1$ neighbours ($d$ children and a parent). 

\smallskip
Observe that for any $v\in t\setminus\{\varnothing\}$, removing the edge $\{\ola{v},v\}$ cuts the tree into two connected components, one of them being $\{w\in t:v\preceq w\}$. Depending on whether or not $v$ is an ancestor of $u$, we can relate either $(t,u)_{v\downarrow}$ or $(t,u)_{\ola{v}\downarrow}$ to $\theta_v t$. More precisely, we find that
\begin{itemize}
    \item if $v$ is not an ancestor of $u$, then the node set of $(t,u)_{v\downarrow}$ is $\{w\in t:v\preceq w\}$, and so $|(t,u)_{v\downarrow}|=|\theta_v t|$;
    \item if $v$ is an ancestor of $u$, then the node set of $(t,u)_{\ola{v}\downarrow}$ is the complement in $t$ of the set $\{w\in t:v\preceq w\}$, and so $|(t,u)_{\ola{v}\downarrow}|=|t|-|\theta_v t|$.
\end{itemize}
Therefore, for a plane tree $t$ and $u\in t$, the expression (\ref{def phi in graph}) of the centrality measure becomes
\begin{equation}\label{ballounes}
\varphi_t(u) = 
\prod_{v\in t,v \not\preceq u} |\theta_vt|\cdot \prod_{v \in t,\varnothing \prec v \preceq u} (|t|-|\theta_v t|)
\end{equation}
It follows from (\ref{ballounes}) that for $u\ne \varnothing$, $\vp_t(u)$ and $\vp_t(\ola{u})$ are related by the identity
\begin{equation}\label{eq:phiu-ident}
\varphi_t(u) |\theta_ut|= (|t| - |\theta_ut|) \varphi_t(\ola{u})\, ,
\end{equation}
which will be useful below. Repeatedly applying (\ref{eq:phiu-ident}) yields 
that for any vertices $u,v \in t$ with $u \preceq v$, 
\begin{equation}\label{eq:phiu-ancestor}
\frac{\varphi_t(u)}{\varphi_t(v)} = \prod_{u \prec w \preceq v} \frac{|\theta_wt|}{|t| - |\theta_wt|} .
\end{equation}
In the preceding equation, and below, we omit the constraint ``$w\in t$'' from the subscript of the product for succinctness, whenever the tree $t$ is clear from context.

\smallskip
Finally, recall the {\em competitive ratio} of $t$ defined in Section \ref{sect: sketch of proof},
\begin{equation}
\label{PHI_def}
\Phi(t)=\frac{\varphi_t(\varnothing)}{\min_{u\in t}\varphi_t(u)},
\end{equation}
a quantity which naturally arises in our approach to bounding the number of nodes of $t$ that are more central than the root.

\subsection{The centrality measure and the competitive ratio}

The following lemma gives a necessary condition on the size of subtrees stemming from vertices that are more central than the root.
\begin{lemma}\label{Condition on size of subtrees}
    For any plane tree $t$ and $u \in t$, if $|\theta_u t|/|t| < (1 + \Phi(t))^{-1}$ then $\varphi_t(u) > \varphi_t(\varnothing)$.
\end{lemma}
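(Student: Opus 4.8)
The plan is to prove the contrapositive: if $\varphi_t(u) \le \varphi_t(\varnothing)$, then $|\theta_u t|/|t| \ge (1+\Phi(t))^{-1}$. The key identity is \eqref{eq:phiu-ident}, which reads $\varphi_t(u)|\theta_u t| = (|t|-|\theta_u t|)\varphi_t(\ola{u})$, valid for $u \ne \varnothing$. First I would dispose of the trivial case $u = \varnothing$, where $|\theta_\varnothing t|/|t| = 1 \ge (1+\Phi(t))^{-1}$ holds automatically since $\Phi(t) \ge 0$ (indeed $\Phi(t)\ge 1$).

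For $u \ne \varnothing$, rewrite \eqref{eq:phiu-ident} as
\[
\frac{|\theta_u t|}{|t|} = \frac{\varphi_t(\ola{u})}{\varphi_t(u) + \varphi_t(\ola{u})}.
\]
This is clean because $\varphi_t(u), \varphi_t(\ola{u}) > 0$ always, and dividing through by $\varphi_t(u)+\varphi_t(\ola{u})$ is legitimate. Now the aim is to bound the right-hand side from below by $(1+\Phi(t))^{-1}$, i.e. to show $\varphi_t(u)/\varphi_t(\ola{u}) \le \Phi(t)$. Since $\Phi(t) = \varphi_t(\varnothing)/\min_{w}\varphi_t(w)$, it suffices to establish that $\varphi_t(u)/\varphi_t(\ola{u}) \le \varphi_t(\varnothing)/\min_w \varphi_t(w)$. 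Here I would use the hypothesis $\varphi_t(u) \le \varphi_t(\varnothing)$ in the numerator, together with the obvious bound $\varphi_t(\ola{u}) \ge \min_w \varphi_t(w)$ in the denominator, giving
\[
\frac{\varphi_t(u)}{\varphi_t(\ola{u})} \le \frac{\varphi_t(\varnothing)}{\min_{w \in t}\varphi_t(w)} = \Phi(t),
\]
as desired. Feeding this back into the displayed expression for $|\theta_u t|/|t|$ yields $|\theta_u t|/|t| \ge \varphi_t(\ola{u})/(\Phi(t)\varphi_t(\ola{u}) + \varphi_t(\ola{u})) = (1+\Phi(t))^{-1}$, which is the contrapositive of the claim (note the strict inequality in the hypothesis of the lemma becomes a strict inequality in the conclusion after negation — I should be a little careful about which way the strictness goes, but since the hypothesis is a strict inequality $|\theta_u t|/|t| < (1+\Phi(t))^{-1}$ and I am proving its negation implies $\varphi_t(u) > \varphi_t(\varnothing)$, I actually want: $\varphi_t(u) \le \varphi_t(\varnothing) \implies |\theta_u t|/|t| \ge (1+\Phi(t))^{-1}$, which is exactly what the chain above gives).

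I do not anticipate a serious obstacle here — the proof is essentially a one-line manipulation of \eqref{eq:phiu-ident}. The only point requiring minor care is bookkeeping of strict versus non-strict inequalities and confirming that all the quantities $\varphi_t(\cdot)$ are strictly positive so that the divisions are valid (which is clear from \eqref{ballounes}, as each factor $|\theta_v t|$ and $|t| - |\theta_v t|$ is a positive integer). One could alternatively phrase the whole argument directly rather than via the contrapositive, starting from $|\theta_u t|/|t| < (1+\Phi(t))^{-1}$, but the contrapositive keeps the algebra most transparent.
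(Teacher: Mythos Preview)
Your proof is correct and takes essentially the same approach as the paper: both arguments hinge on the identity \eqref{eq:phiu-ident} and the bound $\varphi_t(\varnothing)/\varphi_t(\ola{u}) \le \Phi(t)$, with only cosmetic differences (you argue by contrapositive and rearrange \eqref{eq:phiu-ident} as $|\theta_u t|/|t| = \varphi_t(\ola{u})/(\varphi_t(u)+\varphi_t(\ola{u}))$, while the paper proceeds directly). Your added care with the trivial case $u=\varnothing$ and with strict versus non-strict inequalities is fine.
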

\begin{proof}
    Recall from \eqref{eq:phiu-ident} that $\varphi_t(u) |\theta_ut|= (|t| - |\theta_ut|) \varphi_t(\ola{u})$, so 
    \[
    \frac{\varphi_t(\varnothing)}{\varphi_t(u)}\cdot\frac{|t| - |\theta_ut|}{|\theta_ut|} 
    = \frac{\vp_t(\varnothing)}{\vp_t(\ola{u})}
    \leq \Phi(t)
    \]
    by the definition of $\Phi$. The assumption of the lemma gives $\Phi(t) < (|t| - |\theta_ut|)/|\theta_ut|$, and so $\varphi_t(\varnothing)/\varphi_t(u) < 1$.
\end{proof}
\begin{lemma}\label{Upper bound Phi}
    For a plane tree $t$, define the sequence of nodes $\varnothing = u_0 \preceq u_1 \preceq \cdots \preceq u_k$ where for each $0 \le i < k$, $u_{i+1}$ is the unique child of $u_i$ such that $|\theta_{u_{i+1}}t| \ge \frac{1}{2}|t|$ and $k \ge 0$ is the first time such a child does not exist. Then,
    \begin{equation}\label{Product of Phi}
        \Phi(t) \le \prod_{i = 1}^k \frac{1}{1 - |\theta_{u_i}t|/|t|}\,.
    \end{equation}
\end{lemma}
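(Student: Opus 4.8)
The plan is to pin down $u_k$ as a node at which $\varphi_t$ is minimised, and then to evaluate $\varphi_t(\varnothing)/\varphi_t(u_k)$ directly using \eqref{eq:phiu-ancestor}. Two elementary consequences of \eqref{eq:phiu-ident} drive everything: since $\varphi_t(u)|\theta_u t| = (|t|-|\theta_u t|)\varphi_t(\ola u)$ for every $u\neq\varnothing$, we get $\varphi_t(u)\le\varphi_t(\ola u)$ whenever $|\theta_u t|\ge |t|/2$, and $\varphi_t(u)>\varphi_t(\ola u)$ whenever $|\theta_u t|<|t|/2$. The sequence $u_0\preceq u_1\preceq\cdots\preceq u_k$ is well defined because the subtrees stemming from distinct children of a node are pairwise disjoint and none of them contains that node, so at most one can have at least $|t|/2$ elements. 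Moreover, the defining property $|\theta_{u_{i+1}}t|\ge |t|/2$ gives, via the first consequence above, $\varphi_t(u_0)\ge\varphi_t(u_1)\ge\cdots\ge\varphi_t(u_k)$.

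Next I would show $\varphi_t(u_k)=\min_{u\in t}\varphi_t(u)$. Fix $u\in t$ and let $i$ be the largest index with $u_i\preceq u$ (such $i$ exists as $u_0=\varnothing$). If $u=u_i$ then $\varphi_t(u)=\varphi_t(u_i)\ge\varphi_t(u_k)$ and we are done, so suppose $u_i\prec u$ and let $w$ be the child of $u_i$ with $w\preceq u$. In either case $|\theta_w t|<|t|/2$: if $i=k$ this is exactly the property defining $k$, and if $i<k$ then $w\neq u_{i+1}$ by maximality of $i$, while $u_{i+1}$ is the only child of $u_i$ whose subtree has size at least $|t|/2$. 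Along the ancestral chain from $w$ down to $u$ the sizes $|\theta_\cdot t|$ are non-increasing (nested subtrees), hence stay below $|t|/2$; applying the second consequence of \eqref{eq:phiu-ident} to the edge $u_i\to w$ and then to each subsequent edge yields $\varphi_t(u)\ge\varphi_t(w)>\varphi_t(u_i)\ge\varphi_t(u_k)$. Thus $\varphi_t(u)\ge\varphi_t(u_k)$ for every $u\in t$.

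Finally, since $\varnothing=u_0\prec u_1\prec\cdots\prec u_k$ is precisely the ancestral path from the root to $u_k$, identity \eqref{eq:phiu-ancestor} gives
\[
\Phi(t)=\frac{\varphi_t(\varnothing)}{\varphi_t(u_k)}=\prod_{i=1}^k\frac{|\theta_{u_i}t|}{|t|-|\theta_{u_i}t|}=\prod_{i=1}^k\frac{|\theta_{u_i}t|/|t|}{1-|\theta_{u_i}t|/|t|}\le\prod_{i=1}^k\frac{1}{1-|\theta_{u_i}t|/|t|},
\]
where the last inequality uses $|\theta_{u_i}t|/|t|\le 1$ for each $i$ (and the case $k=0$ is handled by the empty product). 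The only genuinely substantive step is the middle one, namely that $u_k$ minimises $\varphi_t$; its crux is the monotonicity observation that once one steps off the ``heavy path'' $u_0,\dots,u_k$ into any lighter subtree, $\varphi_t$ can only increase. The remaining bookkeeping — well-definedness of the sequence and the final telescoping product — is routine.
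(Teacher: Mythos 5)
Your proof is correct and follows essentially the same route as the paper: establish via \eqref{eq:phiu-ident} that $\varphi_t$ is non-increasing along the heavy path and strictly increasing once one steps off it, conclude that $u_k$ minimises $\varphi_t$, and then evaluate $\Phi(t)$ with \eqref{eq:phiu-ancestor} and bound the numerators by $|t|$. The only addition is your explicit justification of the uniqueness of the heavy child, which the paper takes for granted.
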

\begin{proof}
    We first show that $u_k$ minimizes $\varphi_t$.  Using \eqref{eq:phiu-ident}, we have that 
    $\varphi_t(\ola{u})/\varphi_t(u)< 1$ if and only if $|\theta_ut| < |t|/2$, so
    \begin{equation}
    \label{relationphi}
        \varphi_t(\ola{u}) \ge \varphi_t(u) \quad \iff \quad |\theta_ut| \ge \frac{|t|}{2}\,.
    \end{equation}
    This relation directly gives us that $\varphi_t(\varnothing) \ge \varphi_t(u_1) \ge \cdots \ge \varphi_t(u_k)$.  Next, let $w$ be any child of $u_k$ and $v$ be any descendant of $w$, and let $u_k \preceq w \preceq \cdots \preceq v$ be the unique path from $u_k$ to $v$.  Then using (\ref{relationphi}) and $|\theta_wt| < |t|/2$, we get $\varphi_t(u_k) < \varphi_t(w) < \cdots < \varphi_t(v)$.  Finally, for $0\leq i<k$, if $w$ is a child of $u_i$ other than $u_{i+1}$, then $|\theta_wt| < |t|/2$, which implies in the same way as above that any descendant $v$ of $w$ has $\varphi_t(u_{i}) < \varphi_t(v)$.  Combining these results, we readily obtain that $u_k$ minimizes $\varphi_t$. Using (\ref{eq:phiu-ancestor}), we get
    \begin{equation*}
        \Phi(t)=\frac{\varphi_t(\varnothing)}{\varphi_t(u_k)} = \prod_{i = 1}^k \frac{|\theta_{u_i}t|}{|t| - |\theta_{u_i}t|}\,.
    \end{equation*}
    We obtain the desired result after bounding $|\theta_{u_i}t|$ in the numerator above by $|t|$.
\end{proof}

\subsection{Preflows and flows on \texorpdfstring{$\bbU$}{U}}
\label{sec:flow}

We define a \emph{preflow} as a function $p:\bbU\to[0,1]$ such that $p(\varnothing)=1$ and $ \sum_{i\geq 1} p(u*i) \leq p(u)$ for all $u \in \mathbb{U}$. A \emph{flow} is a preflow for which it holds that $\sum_{i\geq 1} p(u*i) = p(u)$ for all $u \in \mathbb{U}$. We further say that a flow (or a preflow) $p$ is \emph{$d$-ary} when $p(u*j)=0$ for all $j\geq d+1$ and $u\in\mathbb{U}$.

\smallskip
Given a function $f:\bbU \to [0,\infty)$, for $x > 0$ write 
\begin{equation}
\label{def_E_and_N}
E_x(f) := \Big\{u \in \bbU: x\prod_{\varnothing \prec v \preceq u} \frac{f(v)}{2} \ge 1\Big\}\quad\text{ and}\quad N_x(f)=|E_x(f)|.
\end{equation}
Figure \ref{fig:Ex(f)} provides an example.

\begin{figure}[ht]
    \centering
    \resizebox{1.2\textwidth}{!}{%
        \begin{tikzpicture}[
  dot/.style = {circle, fill=black, inner sep=2pt, draw=none},
  circ2/.style = {circle, fill=violet!35!white, inner sep=2.2pt, draw=none},
  halfcirc/.style={
    circle, minimum size=3pt,
    inner sep = 2.8pt,
    path picture={
        \begin{scope}
            \clip (path picture bounding box.south west) rectangle
                  ($(path picture bounding box.north west)!0.5!(path picture bounding box.north east)$);
            \fill[violet!30!white] (path picture bounding box.south west) rectangle (path picture bounding box.north east);
        \end{scope}
        \begin{scope}
            \clip ($(path picture bounding box.south west)!0.5!(path picture bounding box.south east)$)
                  rectangle (path picture bounding box.north east);
            \fill[violet!60!white] (path picture bounding box.south west) rectangle (path picture bounding box.north east);
        \end{scope}
    },
  },  
  grow = up,
  level 1/.style = {sibling distance=40mm, level distance=15mm},
  level 2/.style = {sibling distance=12mm,  level distance=15mm},
  level 3/.style = {sibling distance=3mm,   level distance=15mm}
]
\begin{scope}[xshift=-15mm] 
\node[halfcirc, label={[xshift=-2mm]left:{$1$}}] (root1) {}
    child[edge from parent/.style={draw=black, line width=1.3pt}] {node[halfcirc, label={[xshift=-2mm]left:{$1/8$}}] (a) {}
      child {node[dot, label={left:{$\sfrac{1}{64}$}}] (a1) {} 
        child {node[dot] (a11) {}} 
        child {node[dot] (a12) {}}
        child {node[dot] (a13) {}}}
      child {node[dot, label={left:{$\sfrac{1}{32}$}}] (a2) {}
        child {node[dot] (a21) {}} 
        child {node[dot] (a22) {}}
        child {node[dot] (a23) {}}}
      child {node[dot, label={left:{$\sfrac{1}{16}$}}] (a3) {}
        child {node[dot] (a31) {}} 
        child {node[dot] (a32) {}}
        child {node[dot] (a33) {}}}}
    child[edge from parent/.style={draw=black, line width=1.3pt}] {node[halfcirc, label={[xshift=-2mm]left:{$1/4$}}] (b) {}
      child {node[dot, label={left:{$\sfrac{1}{32}$}}] (b1) {} 
        child {node[dot] (b11) {}} 
        child {node[dot] (b12) {}}
        child {node[dot] (b13) {}}}
      child {node[circ2, label={left:{$\sfrac{1}{16}$}}] (b2) {}
        child {node[dot] (b21) {}} 
        child {node[dot] (b22) {}}
        child {node[circ2] (b23) {}}}
      child {node[halfcirc, label={left:{$\sfrac{1}{8}$}}] (b3) {}
        child {node[circ2] (b31) {}} 
        child {node[circ2] (b32) {}}
        child {node[circ2] (b33) {}}}}
    child[edge from parent/.style={draw=black, line width=1.3pt}] {node[halfcirc, label={[xshift=-2mm]left:{$1/2$}}] (c) {}
      child {node[halfcirc, label={left:{$\sfrac{1}{16}$}}] (c1) {}
        child {node[dot] (c11) {}} 
        child {node[circ2] (c12) {}}
        child {node[circ2] (c13) {}}}
      child {node[halfcirc, label={left:{$\sfrac{1}{8}$}}] (c2) {}
        child {node[circ2] (c21) {}} 
        child {node[circ2] (c22) {}}
        child {node[circ2] (c23) {}}}
      child {node[halfcirc, label={left:{$\sfrac{1}{4}$}}] (c3) {}
        child {node[circ2] (c31) {}} 
        child {node[circ2] (c32) {}}
        child {node[circ2] (c33) {}}}};
\end{scope}
\node[right=1mm] at (a) {$\dots$}; \node[right=1mm] at (a1) {$\dots$}; 
\node[right=1mm] at (b1) {$\dots$}; \node[right=1mm] at (c1) {$\dots$}; 
\node[right=1mm] at (a11) {$\dots$}; \node[right=1mm] at (b11) {$\dots$}; 
\node[right=1mm] at (c11) {$\dots$};

\begin{scope}[shift={(1.5cm, 0.4cm)}]
    \node[halfcirc] (l1) {};
    \node[right=3pt of l1, text width=8cm] {Nodes in $E_x(f)$ for $x=2^{8}$ and $x=2^{15}$};

    \node[circ2] (l2) [below=15pt of l1] {};
    \node[right=3pt of l2, text width=5cm] {Nodes in $E_x(f)$ for $x=2^{15}$ only};
\end{scope}

\end{tikzpicture}
    }
    \caption{Illustration of $E_x(f)$ for the flow defined by $f(\varnothing) = 1$ and $f(u*j) = 2^{1-j} f(u)$, with $x = 2^8$ and $x = 2^{15}$. Vertices shown in both light and dark purple belong to both $E_{2^8}(f)$ and $E_{2^{15}}(f)$, while vertices shown only in light purple belong exclusively to $E_{2^{15}}(f)$.}
    \label{fig:Ex(f)}
\end{figure}
\smallskip
The next proposition bounds $N_x(\gamma)$ for certain geometrically decaying functions $\gamma$; it will later be used to prove upper bounds on the number of nodes which have $\varphi_t$-value smaller than or equal to the root but which lie in subtrees containing less than \sfrac13 of all nodes of $t$, for plane trees $t\subset \bbU$ and $t \subset \bbU_d$.

\begin{proposition}
\label{deter_geo_flow}
Let $\alpha\in (1,2]$ and let $\gamma=\gamma_\alpha:\bbU\to [0,1]$ be inductively defined by $\gamma(\varnothing)=1$ and $\gamma(u*j)=\alpha^{1-j}\gamma(u)$ for all $ u\in\bbU$ and $j\geq 1$. Then there is a universal constant $c>0$, that does not depend on $\alpha$, such that for any $x\geq 1$, we have $N_x(\gamma)\leq \exp(c+c\sqrt{\log_{\alpha} x})$.
\end{proposition}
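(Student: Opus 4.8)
The plan is to estimate $N_x(\gamma)$ by analyzing the structure of the set $E_x(\gamma)$ directly. For $u=(u_1,\dots,u_n)\in\bbU$, telescoping the defining recursion $\gamma(u*j)=\alpha^{1-j}\gamma(u)$ gives $\gamma(u)=\alpha^{-\sum_{i=1}^n(u_i-1)}=\alpha^{n-\rw(u)}$, where $\rw(u)=\sum u_i$ as in \eqref{weight_def}. Hence $\prod_{\varnothing\prec v\preceq u}\gamma(v)=\prod_{i=1}^n\alpha^{i-\rw((u_1,\dots,u_i))}$; writing out the exponent, $\sum_{i=1}^n\big(i-\sum_{j\le i}u_j\big)=\binom{n+1}{2}-\sum_{i=1}^n(n+1-i)u_i$. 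Combined with the factor $2^{-n}$ from \eqref{def_E_and_N}, the condition $u\in E_x(\gamma)$ becomes, after taking $\log_\alpha$,
\[
\sum_{i=1}^n (n+1-i)\,u_i \;\le\; \log_\alpha x \;+\; \binom{n+1}{2} \;-\; n\log_\alpha 2 \;=:\; M_n,
\]
with $n=\hgt(u)$. So I would set $L:=\log_\alpha x$ and bound, for each fixed height $n$, the number of words $u\in\N_1^n$ with $\sum_{i=1}^n(n+1-i)u_i\le M_n$, then sum over $n$.

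For the per-height count: since each $u_i\ge 1$, substituting $u_i=1+a_i$ with $a_i\ge 0$ turns the constraint into $\sum_{i=1}^n (n+1-i)a_i \le M_n-\binom{n+1}{2}=L-n\log_\alpha 2 \le L$. The number of nonnegative integer solutions of $\sum_{i=1}^n(n+1-i)a_i\le L$ is at most the number of pairs (a nonnegative integer $m\le L$, a way of writing $m=\sum(n+1-i)a_i$), i.e. at most $\sum_{m=0}^{\lfloor L\rfloor}$(number of partitions of $m$ into parts of sizes drawn from $\{1,2,\dots,n\}$ with the part of size $n+1-i$ used $a_i$ times). This is bounded by $(1+\lfloor L\rfloor)\cdot p(\lfloor L\rfloor)$ where $p$ is the ordinary partition function — because once we forget the labels $i$, such a representation is just a partition of $m\le L$ (into at most $L$ parts since each part is $\ge 1$), and different index sequences giving the same multiset of parts are overcounted, not undercounted, so this is an upper bound after noting each partition of $m$ into parts $\le n$ lifts to at most one index-sequence when $n\ge$ the number of parts, hence at most $2^{?}$... — I will need to be a little careful here, but the cleanest route is: the map $(a_1,\dots,a_n)\mapsto$ the multiset $\{(n+1-i)\text{ repeated }a_i\text{ times}\}$ is injective (the index $i$ is recovered from the part value), so the count is at most the number of partitions of integers $\le L$ with all parts $\le n$, which is at most $\sum_{m\le L}p(m)\le (1+L)p(\lfloor L\rfloor)$.

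Next, the range of relevant heights $n$ is bounded: if $u\in E_x(\gamma)$ has height $n$, then from $\sum_{i=1}^n(n+1-i)a_i\le L-n\log_\alpha 2$ with all terms $\ge 0$ we need $n\log_\alpha 2\le L$, but more usefully the weakest constraint coefficient is $n+1-i$ at $i=n$, giving nothing; instead use that $\sum_{i=1}^n(n+1-i)u_i\ge \sum_{i=1}^n(n+1-i)=\binom{n+1}{2}$, so $\binom{n+1}{2}\le M_n$, i.e. $\binom{n+1}{2}\le L+\binom{n+1}{2}-n\log_\alpha 2$, which only says $n\log_\alpha 2\le L$. That bound $n\le L/\log_\alpha 2$ is too weak; I should instead bound the height via the partition structure: a word contributing at height $n$ forces, taking all $a_i$ but one equal to their minimum, roughly $\sum_{i}(n+1-i)\cdot(\text{something})$, and in fact the correct observation is that if $a_i=0$ for all $i$ then the word is $(1,1,\dots,1)$ which lies in $E_x(\gamma)$ for every $n\le$ (something like) $\sqrt{2L}$, since then the constraint $\binom{n+1}{2}\le L+\binom{n+1}{2}-n\log_\alpha 2$ is vacuous... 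This reveals the real bound on $n$ comes from $M_n - \binom{n+1}{2}=L-n\log_\alpha2\ge 0$, giving $n\le L/\log_\alpha 2$ — wait, but for $\alpha$ close to $1$, $\log_\alpha 2$ is huge, so this is fine; the issue is $\alpha=2$ where $\log_\alpha 2=1$ and $n$ can be as large as $L$. So at worst $n\le L$. Then
\[
N_x(\gamma)\;\le\;\sum_{n=0}^{\lfloor L\rfloor}(1+L)\,p(\lfloor L\rfloor)\;\le\;(1+L)^2\,p(\lfloor L\rfloor).
\]
Finally apply the Hardy–Ramanujan bound $p(m)\le \exp(C\sqrt{m})$ for a universal $C$ (and all $m\ge0$), together with $(1+L)^2\le \exp(c_0+c_0\sqrt L)$, to conclude $N_x(\gamma)\le\exp(c+c\sqrt{L})=\exp(c+c\sqrt{\log_\alpha x})$ for a universal $c$, as required; all constants are manifestly independent of $\alpha$.

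The main obstacle I anticipate is making the per-height combinatorial count airtight — specifically, cleanly upper-bounding the number of weighted compositions $\{a\in\N^n:\sum(n+1-i)a_i\le L\}$ by (partition function)$\times$(polynomial in $L$) uniformly in $n$, and correctly pinning down the maximal height $n$ that can contribute (the delicate case being $\alpha=2$). Everything else — the telescoping computation of $\gamma(u)$, the reduction to $\sum(n+1-i)u_i\le M_n$, and the final invocation of Hardy–Ramanujan — is routine. If the direct weighted-composition bound proves awkward, a fallback is to drop all coefficients $n+1-i$ with $i<n$ to $1$ only after first using the large coefficients to bound $n$, or to split the sum over $i$ into the ``heavy half'' $i\le n/2$ (coefficients $\ge n/2$) to control $n$ and the ``light half'' to absorb into a partition count; but I expect the clean version above to work.
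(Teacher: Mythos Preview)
Your argument is correct and follows essentially the same route as the paper: compute the exponent to reduce membership in $E_x(\gamma)$ to the inequality $\sum_i (n+1-i)(u_i-1)\le \log_\alpha x - n\log_\alpha 2$, use $\alpha\le 2$ (so $\log_\alpha 2\ge 1$) to bound the height by $L=\log_\alpha x$, bound the per-height count via the injection to integer partitions, and finish with Hardy--Ramanujan to get $N_x(\gamma)\le (1+L)^2 p(\lfloor L\rfloor)\le \exp(c+c\sqrt{L})$. Your hesitation mid-proof about the height bound and the injectivity of $(a_i)\mapsto\{\text{multiset of parts}\}$ is unwarranted---both points are exactly as you eventually state them---so the write-up just needs tightening, not repair.
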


\begin{proof}
Let $u=(u_1,\dots, u_h)\in\mathbb{U}$ and note that $\gamma(u)=\alpha^{-\sum_{i=1}^h (u_i-1)}$. As the ancestors $v$ of $u$ distinct from $\varnothing$ are the nodes $(u_1,\ldots,u_j)$ with $j \in [h]$, the identity $\sum_{j=1}^h \sum_{i=1}^j (u_i-1)=\sum_{i=1}^h (h+1-i)(u_i-1)$ gives us that
\begin{equation}
\label{product_deter_geo_flow}
\prod_{\varnothing \prec v\preceq u}\frac{\gamma(v)}{2}= \frac{1}{2^h}\prod_{i=1}^h \alpha^{-(h+1-i)(u_i-1)}\, .
\end{equation}
Thus, writing $u_i-1=j_i$ and taking the logarithm of the product, we obtain that
\[N_x(\gamma)=\left|\left\{(j_1,\dots, j_h)\in \N_0^h : h\in\N_0, \log_{\alpha}(2)h +\sum_{i=1}^h(h+1-i)j_i \leq\log_{\alpha} x\right\}\right|\, ,\] 
where $\N_0=\{0,1,2,\ldots\}$. Letting $n=\lfloor \log_\alpha x\rfloor$ and reversing the indexing of the $j_i$, we find the bound
\[N_x(\gamma)\leq \left|\left\{(j_1,\dots, j_h)\in \N_0^h : h\leq n, \sum_{k=1}^h kj_k \leq n\right\}\right|\, .\]
This holds because $\log_{\alpha}(2)\geq 1$, so we can require that $h\leq n$. Then, observing that for $h\leq n$, the vector $(j_1,\dots,j_h)$ has the same value of the sum $\sum_{k=1}^h kj_k$ as the vector $(j_1,\dots,j_h, 0, \dots, 0)$ with $n$ coordinates, we can partition the set according to the value of this sum and obtain that
\begin{equation} 
\label{last_step_deter_geo_flow}
N_x(\gamma)\leq \sum_{s=0}^{n} (n+1)\left|\left\{(j_1,\dots, j_n)\in \N_0^n : \sum_{k=1}^n kj_k = s\right\}\right|.
\end{equation}
To conclude the proof, we apply Erd\H{o}s' non-asymptotic version of the Hardy-Ramanujan formula on the number of partitions of an integer \cite{Erdos}, which asserts that for any $s\geq 1$,
\begin{align} \label{erdos}
    \left|\left\{(j_1,\dots, j_n)\in \N_0^n : \sum_{k=1}^n kj_k = s\right\}\right|\leq \exp\left(\pi \sqrt{2s/3}\right).
\end{align}
Inserting (\ref{erdos}) within (\ref{last_step_deter_geo_flow}) yields that $N_x(\gamma)\leq (n+1)^2\exp(\pi\sqrt{2n/3})$, which completes the proof.
\end{proof}

\subsection{Some distributional definitions and identities}\label{sub:beta}

In this section we recall the definitions of Beta and Dirichlet random variables, and some relations between them. We will make frequent use of the standard Gamma function $\Gamma(z)=\int_0^\infty t^{z-1}e^{-t}dt$. Also, we write $X \sim \mu$ to mean that $X$ has distribution $\mu$, and write $X\preceq_{\mathrm{st}} Y$ if $X$ is stochastically dominated by $Y$.
\smallskip

For $\alpha,\beta>0$, say that a random variable $B$ is \emph{$\mathrm{Beta}(\alpha,\beta)$-distributed} if it has density 
$x^{\alpha-1}(1-x)^{\beta-1}/B(\alpha,\beta)$ with respect to Lebesgue measure on $[0,1]$; here $B(\alpha,\beta)=\Gamma(\alpha)\Gamma(\beta)/\Gamma(\alpha+\beta)$ is a normalizing constant. If $B\sim \mathrm{Beta}(\alpha,\beta)$ then 
\[\E[B]=\frac{\alpha}{\alpha+\beta}\quad\text{ and }\quad\mathrm{Var}(B)=\frac{\alpha\beta}{(\alpha+\beta)^2(\alpha+\beta+1)}.\]
If $\beta=(k-1)\alpha$, then this yields that $\E[B^2]=\sfrac{(\alpha+1)}{(k(k\alpha+1))}$, which we will use below. 

For random variables $B\sim \mathrm{Beta}(\alpha,\beta)$ and $B'\sim\mathrm{Beta}(\alpha',\beta')$, then 
\begin{equation}
\label{eq:betadomination}
B' \preceq_{\mathrm{st}} B \text{ if and only if } \alpha \ge \alpha' \text{ and } \beta \le \beta'\, ;
\end{equation}
see e.g.~\cite{beta_order_old}, or~\cite[Theorem 1]{beta_order_modern}.
\smallskip

Given $\alpha_1,\ldots,\alpha_k>0$, 
a random vector $X=(X_1,\ldots,X_k)$ is \emph{$\mathrm{Dir}(\alpha_1,\ldots,\alpha_k)$-distributed} if it has density 
\[
f(x_1,\ldots,x_k)=\frac{1}{B(\alpha_1,\ldots,\alpha_k)}\prod_{i=1}^k x_i^{\alpha_i-1}
\]
with respect to $(k-1)$-dimensional Lebesgue measure on the simplex 
\[
\Delta_k  = \{(x_1,\ldots,x_k) \in [0,1]^k: x_1+\ldots+x_k=1\}\, ;
\]
here $B(\alpha_1,\ldots,\alpha_k)=(\prod_{i=1}^k\Gamma(\alpha_i))/\Gamma(\sum_{i=1}^k \alpha_i)$.
If $(X_1,\ldots,X_k)\sim \mathrm{Dir}(\alpha_1,\ldots,\alpha_k)$ 
then it holds that $X_i \sim \mathrm{Beta}(\alpha_i,\sum_{j\ne i} \alpha_j)$ for each $i\in [k]$. We write $\mathrm{Dir}_k(\alpha)$ as shorthand for the symmetric Dirichlet distribution $\mathrm{Dir}(\alpha_1,\ldots,\alpha_k)$ with $\alpha_i=\alpha$ for each $1 \le i \le k$. In this case, the density $f$ may be rewritten as 
\[
f(x_1,\ldots,x_k) = \frac{\Gamma(k\alpha)}{\Gamma(\alpha)^k} \prod_{j=1}^k x_j^{\alpha-1}\, .
\]
If $(X_1,\ldots,X_k)\sim \mathrm{Dir}_k(\alpha)$ then $X_i \sim \mathrm{Beta}(\alpha,(k-1)\alpha)$ for $i\in [k]$.

\smallskip
Finally, we say a random variable $G$ is Geometric$(p)$-distributed if $\prob(G=i)=p(1-p)^{i-1}$ for integers $i \ge 1$, and that a random variable $U$ is Uniform$[a,b]$-distributed if its law is the uniform probability distribution on $[a,b]$.

\section{Regular trees}\label{sec:dreg}

This section is devoted to gathering facts about the $\mathrm{UA}_d$ model which are required for the proof of Theorems~\ref{thm:dary} and~\ref{thm:dary_weak}. 

\subsection{Presentation and asymptotics of the model}\label{presentationAndAsymptotics}

We begin by reformulating the $d$-regular uniform attachment models within the Ulam--Harris formalism as sequences of random plane trees. It will be notationally cleaner to shift the argument by one, so we will hereafter work with $(d+1)$-regular uniform attachment trees for $d \ge 2$.
\smallskip

Fix $d \ge 2$. Let $T_1=\{\varnothing,1,\ldots,d,d+1\}$ and $T_{n+1}=T_n  \cup \{V_n*1,\ldots,V_n*d\}$ for any $n\geq 1$, where, conditionally given $(T_1,\ldots,T_n)$, $V_n$ is a uniformly random leaf of $T_n$. It is clear that the graph-isomorphism class of $T_n$ has the same law as that of the  $\mathrm{UA}_{d+1}(n)$-distributed tree from in the introduction. For the remainder of the section, {\color{black} by writing $\hat{T}_n \edi \mathrm{UA}_{d+1}(n)$ we mean that $\hat{T}_n$ is a random \emph{plane tree} with the same distribution as $T_n$} (and in particular we can write $T_n \edi \mathrm{UA}_{d+1}(n)$).

\smallskip

Note that $T_n$ has exactly $dn+2$ nodes including $(d-1)n+2$ leaves. Moreover, $T_n$ is a subset of $\bbU_d$ and for any $u\in T_n\setminus\{\varnothing\}$, $u$ has either $0$ or $d$ children in $T_n$. In other words, if $u\in T_n\setminus\{\varnothing\}$ then $\theta_u T_n$ is a $d$-ary plane tree.
Since the size of $T_n$ is deterministic, the formula (\ref{ballounes}) for $\vp_{T_n}$ suggests that studying centrality for this model will require a clear understanding of the asymptotic behaviour of the proportion of nodes of $T_n$ lying in the subtree stemming from any fixed node $u\in\bbU_d$. {\color{black} We develop such an understanding via a comparison with certain nested P\'olya urns. The same basic approach was followed in~\cite{FindingAdam} to analyze uniform attachment trees without degree restrictions, though that analysis leads to looser bounds.}
\smallskip

We claim that $\lim_{n\to\infty}|\theta_u T_n|/|T_n|$ exists almost surely for any $u\in\bbU\setminus\{\varnothing\}$. 
To see this, let $\tau=\inf\{k\in \N_1:u \in T_k\}$. If $\tau=\infty$ then $|\theta_u T_n|=0$ for all $n \in \N_1$ and the result is clear. Otherwise, for $n\geq 0$, 
colour the leaves of $T_{\tau+n}$, either in black if they are descendants of $u$, or in red if not. When a leaf $V_n$ is chosen to have $d$ new children, it ceases to be a leaf but its new children take its previous colour. Hence, conditionally given $T_\tau$, the number $M_{\tau+n}$ of leaves of $\theta_u T_{\tau+n}$ evolves as the number of black balls in a P\'olya urn starting with $1$ black ball and $(d-1)\tau+1$ red balls and where at each step, the drawn ball is returned to the urn along with $d-1$ new balls of same colour. Standard results (see e.g.~\cite[Section 6]{CMP15}) then imply that $\tfrac{1}{(d-1)n}M_{\tau+n}$ converges almost surely towards a positive random variable. Since $\theta_u T_{\tau+n}$ is a $d$-ary plane tree we have $(d-1)|\theta_u T_{\tau+n}|=dM_{\tau+n}-1$, which yields that the following limits exist almost surely for all $u\in\bbU$ and $v\in\bbU\setminus\{\varnothing\}$:

\begin{equation}
\label{def_P-U_binary}
P_u=\lim_{n\to\infty}\frac{|\theta_uT_n|}{|T_n|}\quad \text{ and }\quad D_{v}=\lim_{n\to\infty}\I{v\in T_n}\frac{|\theta_{v} T_n|}{|\theta_{\overleftarrow{v}} T_n|}\, .
\end{equation}
Note that $D_{v}$ is well-defined because if $v\in T_n$ for some $n\geq 1$, then also $\ola{v} \in T_n$, so $P_{v}$ and $P_{\ola{v}}$ are almost surely positive and $D_{v}\aseq P_{v}/P_{\overleftarrow{v}}$. This entails the following useful recursive relation
\begin{equation}
\label{formula_P2_binary}
\forall u\in\bbU\setminus\{\varnothing\},\quad    P_{u} = P_{\overleftarrow{u}}\cdot D_{u}\quad\text{ almost surely.}
\end{equation}
For example, $P_1 = D_1$ because $P_{\varnothing}=1$.  We also find that $P_{(1,1)} = D_1D_{(1,1)}$, or that $P_{(2,1,2)} = D_2D_{(2,1)}D_{(2,1,2)}$. Inductively, we obtain:
\begin{equation}
\label{formula_P1_binary}
\text{if}\quad u = (u_1,u_2,\ldots,u_h)\in\bbU,\quad\text{ then }\quad P_u = \prod_{i=1}^h D_{(u_1,\dots,u_i)}\, .
\end{equation}

 Unlike~\cite{FindingAdam}, which only uses the one-dimensional marginals, we require information on the joint distribution of the family of asymptotic proportions $(P_u)_{u\in\bbU}$ by describing that of the ratios $(D_u)_{u\in\bbU\setminus\{\varnothing\}}$. Since $T_n$ is always a subset of $\bbU_d$, we have $P_u=D_u=0$ when $u\in\bbU\setminus \bbU_d$ so we only need to give the distribution of $(D_u)_{u\in\bbU_d\setminus\{\varnothing\}}$. Thus, write $\mathbf{D}_{\varnothing}=(D_1,\ldots,D_{d},D_{d+1})$, which has $d+1$ components, and for $u \in \bbU_d\setminus\{\varnothing\}$, let $\mathbf{D}_{u}=(D_{u*1},\ldots,D_{u*d})$, which has $d$ components. Also recall from Section~\ref{sub:beta} that for $\alpha>0$ and an integer $k\geq 2$, $\mathrm{Dir}_k(\alpha)$ stands for the symmetric Dirichlet distribution of order $k$ with parameter $\alpha$.

\begin{proposition} 
\label{Independence U binary}
The random vectors $(\mathbf{D}_{u},u \in \bbU_d)$ are independent. Moreover, $\mathbf{D}_{\varnothing}\sim \mathrm{Dir}_{d+1}(\tfrac{1}{d-1})$, and for all $u \in \bbU_d\setminus\{\varnothing\}$, $\mathbf{D}_{u}\sim \mathrm{Dir}_{d}(\tfrac{1}{d-1})$.
\end{proposition}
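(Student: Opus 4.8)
The plan is to identify the family $(\mathbf{D}_u)_{u\in\bbU_d}$ with the limiting composition vectors of a system of nested P\'olya urns and to invoke the classical fact that the limiting frequency vector of a P\'olya urn is Dirichlet-distributed. Recall from the construction of $T_n$ that each leaf of $\theta_u T_n$ carries a ``type'' recording which child of $u$ it descends from; when a leaf $V_n\in\theta_u T_n$ is selected, it is replaced by $d$ children of its own type, which is exactly a P\'olya-urn replacement step. First I would fix $u\in\bbU_d\setminus\{\varnothing\}$ and condition on $\tau_u=\inf\{k: u\in T_k\}$ being finite (otherwise all the relevant quantities are $0$ and the statement is vacuous there). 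At the first time the node $u$ is split, it acquires its $d$ children $u*1,\dots,u*d$, each initially a single leaf; so conditionally on $T_{\tau_u}$, the vector counting leaves of $\theta_{u*1}T_n,\dots,\theta_{u*d}T_n$ evolves as a $d$-colour P\'olya urn started from $(1,\dots,1)$ with replacement matrix $(d-1)I$. Standard urn theory (e.g.\ \cite[Section 6]{CMP15}, already cited in the excerpt) gives that the normalized leaf-count vector converges almost surely to a $\mathrm{Dir}_d(\tfrac1{d-1})$ vector — the parameter is $1/(d-1)$ because each colour starts with one ball and each draw adds $d-1$ balls. Since $(d-1)|\theta_v T_n| = d M_v(n) - 1$ relates subtree sizes to leaf counts linearly (as noted in the excerpt), the limiting vector of the $|\theta_{u*i}T_n|/|\theta_u T_n|$ coincides with the limiting leaf-frequency vector, i.e.\ $\mathbf{D}_u\sim\mathrm{Dir}_d(\tfrac1{d-1})$. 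The root case is identical except that $T_1$ already has $d+1$ leaves $1,\dots,d+1$ each of type $i$, so $\mathbf{D}_{\varnothing}$ is the limit of a $(d+1)$-colour urn from $(1,\dots,1)$ with the same replacement, giving $\mathrm{Dir}_{d+1}(\tfrac1{d-1})$.

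For independence, the key observation is a \emph{conditional} independence that can be boosted to full independence because the Dirichlet laws do not depend on the conditioning. Concretely, I would argue that, conditionally on the entire ``skeleton process'' recording only \emph{which} leaf-type is split at each step (equivalently, conditionally on the sequence of subtree-size processes $(|\theta_u T_n|)_{n}$ for all $u\in\bbU_d$), the refinements happening strictly inside distinct subtrees $\theta_u T_n$ are governed by independent randomness: given that at a given step the chosen leaf lies in $\theta_u T_n$, which of the finer sub-subtrees of $\theta_u T_n$ it lands in is an independent uniform-over-leaves choice, unaffected by the analogous choices in a disjoint subtree $\theta_{u'}T_n$. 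Thus, conditionally on the family of size-processes, the vectors $(\mathbf{D}_u)_{u}$ are independent, and $\mathbf{D}_u$ has the $\mathrm{Dir}$ law described above \emph{regardless} of the conditioning (the urn inside $\theta_u$ always starts from all-ones once $u$ appears, and the distribution of its limit does not see the times at which $\theta_u$ grows). A clean way to phrase this: build all trees $T_n$ from a single array of i.i.d.\ $\mathrm{Dir}$ and uniform random variables indexed by $\bbU_d$, so that $\mathbf{D}_u$ is by construction a deterministic function of the $u$-indexed coordinates only, then check this realization has the correct law; alternatively, condition on $\sigma(|\theta_v T_n|: v\in\bbU_d, n\ge1)$ and use that a collection of random vectors which is conditionally independent with conditional laws that are almost surely constant is unconditionally independent with those laws.

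The main obstacle, and the step I would spend the most care on, is making the independence argument rigorous rather than hand-wavy: one must be precise about the $\sigma$-algebra one conditions on and verify that (i) given this $\sigma$-algebra the growth inside each $\theta_u T_n$ is a P\'olya urn with deterministic (conditionally) step times, so its limit is still $\mathrm{Dir}$ and in particular the \emph{conditional} law of $\mathbf{D}_u$ is a fixed Dirichlet not depending on the conditioning, and (ii) these conditional dynamics are independent across $u\in\bbU_d$. Point (i) requires a small lemma: the a.s.\ limit of the normalized count vector of a $k$-colour P\'olya urn, started from all-ones with replacement $(d-1)I$, observed only along a (possibly random, but externally determined) increasing sequence of update times, is still $\mathrm{Dir}_k(\tfrac1{d-1})$ — this is immediate since skipping steps at which \emph{this} urn is not updated changes nothing about the urn's own trajectory. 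Point (ii) is a bookkeeping statement about the recursive construction $T_{n+1}=T_n\cup\{V_n*1,\dots,V_n*d\}$: conditionally on which $u$ contains $V_n$, the identity of $V_n$ within $\theta_u T_n$ is uniform and independent across the tree, so the interleaved urns driving different $\mathbf{D}_u$ use disjoint, independent sources of randomness. Once these two points are nailed down, the proposition follows by assembling the per-$u$ marginals with the conditional independence and the observation that unconditioning is harmless because the conditional laws were constant.
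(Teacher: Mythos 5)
Your overall strategy is the same as the paper's: decompose the uniform leaf choice hierarchically into nested P\'olya urns, get each marginal $\mathbf{D}_u\sim\mathrm{Dir}_k(\tfrac{1}{d-1})$ (with $k=d+1$ at the root and $k=d$ elsewhere) from Athreya's theorem, and obtain independence from the fact that distinct urns are driven by disjoint sources of randomness. The marginal computation is correct, including the passage from leaf counts to subtree sizes via $(d-1)|\theta_v T_n|=dM_v(n)-1$. Moreover, the ``clean way to phrase this'' that you mention in passing --- realize the whole process from independent, $\bbU_d$-indexed sources so that $\mathbf{D}_u$ is by construction measurable with respect to the $u$-indexed coordinates only --- is exactly what the paper does: it couples $(T_n)$ with a top-level $k$-colour urn and $k$ independent copies of a root-degree-$d$ variant of the model, verifies the coupling has the right law, and then inducts down the tree.

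However, the \emph{other} formalization you offer for the independence step does not work as stated. The $\sigma$-algebra $\sigma(|\theta_v T_n|:v\in\bbU_d,\,n\ge1)$ is the full $\sigma$-algebra of the tree process: knowing every subtree size at every time determines $T_n=\{v:|\theta_v T_n|\ge 1\}$ for all $n$, hence determines every $\mathbf{D}_u$. Conditionally on it, each $\mathbf{D}_u$ is a Dirac mass at its realized value, not a $\mathrm{Dir}_d(\tfrac{1}{d-1})$ vector, so the claim ``conditionally independent with conditional laws that are almost surely constant'' fails for this choice of conditioning. (The same objection applies to conditioning on ``which leaf-type is split at each step, for all $u$,'' since that is the same information.) A correct conditioning argument must condition only on the \emph{coarse} level --- e.g., at the root, on the sequence recording which child-subtree receives each new node --- and then recurse; at that point you are reconstructing the paper's explicit coupling and its induction on levels. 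So I would drop the conditioning phrasing entirely and commit to the explicit-construction route, which you have correctly identified and which carries the proof.
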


\begin{proof}
We consider a slight variant of the $(d+1)$-regular uniform attachment where, in this new model, the root has the same number of children as all the other non-leaf vertices (and so its degree is smaller by one than those vertices). Namely, let $T_0'=\{\varnothing\}$, and for $n \ge 0$ let $T_{n+1}'=T_n'\cup\{V_n'*1,\ldots,V_n'*d\}$, where $V_n'$ is a uniformly random leaf of $T_n'$ conditionally given $(T_0',\ldots,T_n')$. Observe that $T_n'$ is a $d$-ary plane tree. Also note that $T_1'=\{\varnothing,1,\ldots,d\}$, and that $T_n'$ has exactly $1+(d-1)n$ leaves and $1+dn$ vertices. The arguments used to obtain (\ref{def_P-U_binary}) also entail that for all $u\in\bbU_d\setminus\{\varnothing\}$, the limit
\begin{equation}
\label{def_P-U_binary_variant}
D_u':=\lim_{n\to\infty}\I{u\in T_n'}\frac{|\theta_u T_n'|}{|\theta_{\overleftarrow{u}}T_n'|}
\end{equation}
exists almost surely. Note that this limit is $0$ for any node $u$ which is a descendant of $(d+1)$ since $(d+1) \not \in T_n'$ for any $n\geq 0$.  
We then set $\mathbf{D}_{u}'=(D_{u*1}',\ldots,D_{u*d}')$ for all $u \in \bbU_d$. \smallskip

Now, fix $k\in\{d,d+1\}$. We run a P\'olya urn starting with balls of colours $1,\ldots,k$, one of each colour, where after each draw, we return the drawn ball along with $d-1$ new balls of the colour drawn. For $i \in [k]$, denote by $X_n^{(i)}$ the number of balls of the $i$'th colour before the $n$'th draw, so that $X_1^{(i)}=1$ and $\sum_{i=1}^k X_n^{(i)}=k+(d-1)(n-1)$. 
Then, it is a standard result of Athreya~\cite{athreya_lost} (see also \cite[Section 6]{CMP15} for a modern approach) that
\begin{equation}
\label{cv_law_dir_Polya}
\frac{1}{(d-1)n}(X_n^{(1)},\ldots,X_n^{(k)})\xrightarrow[n\to\infty]{a.s.} (X^{(1)},\ldots,X^{(k)})\sim\mathrm{Dir}_k\big(\tfrac{1}{d-1}\big)\, .
\end{equation}

Denote by $Y_n^{(i)}$ the number of times a ball of colour $i$ is drawn before the $n$'th draw; so $X_n^{(i)}=1+(d-1)Y_n^{(i)}$ and $Y_1^{(i)}=0$ for $i \in [k]$, and $\sum_{i=1}^k Y_n^{(i)}=n-1$. Next, let $(T_n^{(1)})_{n\geq 1},\ldots,(T_n^{(k)})_{n\geq 1}$ be $k$ copies of $(T_n')_{n\geq 1}$. We assume that $(T_n^{(1)})_{n\geq 1}, \ldots, (T_n^{(k)})_{n\geq 1}$, and $(X_n^{(1)},\ldots,X_n^{(k)})_{n\geq 1}$ are independent. Then let $\hat{T}_0=\{\varnothing\}$, and for $n \ge 1$ define 
\[
\hat{T}_n = \{\varnothing\}\cup \bigcup_{i=1}^k \big\{i*w\, :\, w\in T^{(i)}_{Y_n^{(i)}}\big\},
\]
so for $n \ge 1$ and $i \in [k]$ we have $\theta_i \hat{T}_n=T^{(i)}_{Y_n^{(i)}}$; it follows that $\theta_i \hat{T}_n$ has $X_n^{(i)}$ leaves. 
It is then straightforward to check by induction that if $k=d+1$ then $(\hat{T}_n)_{n\geq 1}$ has the same law as $(T_n)_{n\geq 1}$, and that if $k=d$ then $(\hat{T}_n)_{n\geq 1}$ has the same law as $(T_n')_{n\geq 1}$. 
Writing $\theta_{i*u}\hat{T}_n=\theta_{u}\theta_i \hat{T}_n=\theta_u T_{Y_n^{(i)}}^{(i)}$ for any $u\in\bbU_d\setminus\{\varnothing\}$, then applying (\ref{cv_law_dir_Polya}), we deduce that 
\[\lim_{n\to\infty}\I{i*u\in\hat{T}_n}\frac{|\theta_{i*u}\hat{T}_n|}{|\theta_{i*\overleftarrow{u}}\hat{T}_n|}=\lim_{n\to\infty}\I{u\in T_n^{(i)}}\frac{|\theta_u T_n^{(i)}|}{|\theta_{\overleftarrow{u}} T_n^{(i)}|}\quad\text{ and }\quad \lim_{n\to\infty}\frac{|\theta_{i}\hat{T}_n|}{|\hat{T}_n|}=\lim_{n\to\infty}\frac{X_n^{(i)}}{(d-1)n}\,.\]
By independence of $(X_n^{(1)},\ldots,X_n^{(k)}), (T_n^{(1)}),\ldots, (T_n^{(k)})$, it follows from (\ref{def_P-U_binary}) and (\ref{def_P-U_binary_variant}) that:
\begin{itemize}
    \item $\mathbf{D}_{\varnothing},(\mathbf{D}_{1*u})_{u\in\bbU_d},\ldots, (\mathbf{D}_{d*u})_{u\in\bbU_d},(\mathbf{D}_{(d+1)*u})_{u\in\bbU_d}$ are independent;
    \item $\mathbf{D}_{\varnothing}',(\mathbf{D}_{1*u}')_{u\in\bbU_d},\ldots, (\mathbf{D}_{d*u}')_{u\in\bbU_d}$ are independent.
\end{itemize}
Since $(T_n^{(i)})\stackrel{d}{=}(T_n')$, it also follows that $(\mathbf{D}_{i*u})_{u\in\bbU}$ and $(\mathbf{D}_{i'*u}')_{u\in\bbU_d}$ have the same law as $(\mathbf{D}_u')_{u\in\bbU}$ for all $i \in [d+1]$ and $i'\in [d]$. Moreover, (\ref{cv_law_dir_Polya}) yields that $\mathbf{D}_\varnothing \sim \mathrm{Dir}_{d+1}(\tfrac{1}{d-1})$ and that $\mathbf{D}_\varnothing'\sim\mathrm{Dir}_{d}(\tfrac{1}{d-1})$. The claimed independence and the fact that $\mathbf{D}_u\sim\mathrm{Dir}_{d}(\tfrac{1}{d-1})$ for all $u \ne \varnothing$ then follow by induction.
\end{proof}

Before stating the next results, we define the \emph{height} of $u=(u_1,\ldots,u_n) \in \bbU$ to be $\hgt(u) = n$, so that $\hgt(\varnothing)=0$ and $\hgt(u)=1+\hgt(\ola{u})$ when $u\neq\varnothing$. Remark from (\ref{formula_P1_binary}) that $P_u$ is the product of single components of exactly $\hgt(u)$ independent Dirichlet random vectors.

\begin{corollary}\label{cor:beta_moment}
    For all $v \in \bbU_d$ with $\hgt(v) \ge 2$, $\E[D_v]=\sfrac{1}{d}$ and $\E[D_v^2] = \sfrac{1}{(2d-1)}$.
\end{corollary}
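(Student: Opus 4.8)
The plan is to deduce this directly from Proposition~\ref{Independence U binary} together with the moment identities for Beta random variables recorded in Section~\ref{sub:beta}. The first observation is that the hypothesis $\hgt(v)\ge 2$ guarantees that the parent $\ola v$ of $v$ satisfies $\ola v\in\bbU_d\setminus\{\varnothing\}$, so that the second part of Proposition~\ref{Independence U binary} applies and gives $\mathbf{D}_{\ola v}=(D_{\ola v*1},\ldots,D_{\ola v*d})\sim\mathrm{Dir}_d(\tfrac1{d-1})$. Since $D_v$ is one of the $d$ coordinates of this random vector, the marginal property of the symmetric Dirichlet distribution recalled at the end of Section~\ref{sub:beta} yields $D_v\sim\mathrm{Beta}(\tfrac1{d-1},(d-1)\cdot\tfrac1{d-1})=\mathrm{Beta}(\tfrac1{d-1},1)$.

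It then remains to substitute the parameters $\alpha=\tfrac1{d-1}$ and $\beta=1$ into the formulas from Section~\ref{sub:beta}. Plugging into $\E[B]=\alpha/(\alpha+\beta)$ gives $\E[D_v]=\tfrac1d$. For the second moment, one notes that $\beta=(d-1)\alpha$, so the identity $\E[B^2]=(\alpha+1)/(k(k\alpha+1))$ applies with $k=d$, and a direct computation gives $\E[D_v^2]=\tfrac1{2d-1}$. Both are one-line calculations.

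There is no genuine obstacle here; the only point requiring care is keeping track of which case of Proposition~\ref{Independence U binary} is in force — namely, that $\hgt(v)\ge 2$ rules out $v$ being a child of the root, so that $\mathbf{D}_{\ola v}$ has the $\mathrm{Dir}_d(\tfrac1{d-1})$ law rather than the $\mathrm{Dir}_{d+1}(\tfrac1{d-1})$ law (the latter would instead produce a $\mathrm{Beta}(\tfrac1{d-1},\tfrac d{d-1})$ marginal, with different moments).
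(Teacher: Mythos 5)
Your proof is correct and follows essentially the same route as the paper: identify $D_v$ as a coordinate of a $\mathrm{Dir}_d(\tfrac1{d-1})$ vector via Proposition~\ref{Independence U binary}, then apply the Beta moment formulas from Section~\ref{sub:beta} with $\alpha=\tfrac1{d-1}$ and $k=d$. Your explicit remark on why $\hgt(v)\ge 2$ places you in the $\mathrm{Dir}_d$ rather than the $\mathrm{Dir}_{d+1}$ case is a point the paper leaves implicit, and it is accurate.
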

\begin{proof}
Recall from Section~\ref{sub:beta} that 
if $(X_1,\ldots,X_k)\sim \mathrm{Dir}_k(\alpha)$, then for $j \in [k]$ we have $X_j\sim \mathrm{Beta}(\alpha,(k-1)\alpha)$; so  $\mathbf{E} [X_j]=1/k$ and $\E[X_j^2]=\sfrac{(\alpha+1)}{(k(k\alpha+1))}$. Taking $\alpha=\sfrac{1}{(d-1)}$ and $k=d$, the result then follows from Proposition \ref{Independence U binary}. 
\end{proof}

To conclude this section, we use our description of the asymptotic proportions to show that they uniformly geometrically decay with the height. Combined with Lemma~\ref{Condition on size of subtrees}, this result will imply that, with high probability, the nodes of $T_n$ that are more central than the root must have bounded height; this corresponds to step \ref{1_key_step} from the proof of Theorem~\ref{thm:dary_weak} in Section~\ref{sect: sketch of proof}. 
The fact that these better candidates belong with high probability to a deterministic finite subset of $\bbU_d$ will also simplify the asymptotic analysis of the $(d+1)$-regular uniform attachment model, since it allows using only finitely many of the almost sure convergences $|\theta_u T_n|/n\longrightarrow P_u$. 
\begin{lemma}
\label{binary_size_subtrees}
Fix $d \ge 2$, and for $n\geq 1$ let $T_n \sim \mathrm{UA}_{d+1}(n)$. Then for all integers $m\geq 1$ and for all $\varepsilon\in (0,1)$, it holds that
\[
\limsup_{n\to\infty}\prob\big(\exists u\in\mathbb{U}_d\, :\, \hgt(u)\geq m, |\theta_u T_n|\geq \varepsilon n\big)\leq \tfrac{3}{2}(d+1)\varepsilon^{-2}\big(\tfrac{2}{3}\big)^{m}. 
\]
\end{lemma}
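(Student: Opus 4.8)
The plan is to bound the probability via a union bound over all candidate nodes $u$, after passing to the limit so that we can work with the asymptotic proportions $P_u = \lim_n |\theta_u T_n|/|T_n|$ introduced in~\eqref{def_P-U_binary}. First I would observe that for a fixed $u \in \bbU_d$, on the event $\{u \in T_n\}$ the ratio $|\theta_u T_n|/|T_n|$ converges almost surely to $P_u$; so by Fatou (or by the portmanteau inequality, since $\{|\theta_u T_n| \ge \eps n\}$ is roughly $\{|\theta_u T_n|/|T_n| \ge \eps\}$ up to the negligible additive constant in $|T_n| = dn+2$), for any fixed finite set of nodes we have $\limsup_n \prob(\exists u \text{ in the set}: |\theta_u T_n| \ge \eps n) \le \prob(\exists u \text{ in the set}: P_u \ge \eps/2)$, say, or some comparable bound. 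The actual bound I will aim for is
\[
\limsup_{n\to\infty}\prob\big(\exists u\in\mathbb{U}_d: \hgt(u)\geq m,\ |\theta_u T_n|\geq \varepsilon n\big) \le \sum_{u \in \bbU_d: \hgt(u) = m} \prob(P_u \ge \eps)\, ,
\]
using that if some $u$ with $\hgt(u) \ge m$ has $|\theta_u T_n| \ge \eps n$ then its ancestor $u'$ at height exactly $m$ also satisfies $|\theta_{u'} T_n| \ge \eps n$, so it suffices to union-bound over the (still infinite, but summable) set of nodes at height exactly $m$.

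Next I would control $\prob(P_u \ge \eps)$ by Markov's inequality applied to $P_u^2$. By~\eqref{formula_P1_binary}, $P_u = \prod_{i=1}^m D_{(u_1,\dots,u_i)}$, and by Proposition~\ref{Independence U binary} the factors $D_{(u_1)}, D_{(u_1,u_2)}, \dots$ are independent (they belong to distinct vectors $\mathbf{D}_{\varnothing}, \mathbf{D}_{(u_1)}, \dots$), so $\E[P_u^2] = \E[D_{(u_1)}^2]\prod_{i=2}^m \E[D_{(u_1,\dots,u_i)}^2]$. By Corollary~\ref{cor:beta_moment}, $\E[D_v^2] = 1/(2d-1)$ for $\hgt(v) \ge 2$, and for the first factor $D_{(u_1)} \sim \mathrm{Beta}(\tfrac1{d-1}, 1)$ has second moment $\sfrac{(\alpha+1)}{((d+1)((d+1)\alpha+1))}$ with $\alpha = \tfrac1{d-1}$, which simplifies to $\tfrac{1}{2(d+1)}\cdot\tfrac{d}{d} = \sfrac{d}{(2(d+1)\cdot d)}$; in any case it is at most $\tfrac{1}{2d-1}\cdot c$ for an explicit $c$, but more importantly one checks it equals $\tfrac{1}{2(d+1)}$ after simplification. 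Thus $\E[P_u^2] \le \tfrac{1}{2(d+1)}(2d-1)^{-(m-1)}$, so $\prob(P_u \ge \eps) \le \eps^{-2}\tfrac{1}{2(d+1)}(2d-1)^{-(m-1)}$, uniformly over all $u$ at height $m$.

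Finally I would count: the number of $u \in \bbU_d$ with $\hgt(u) = m$ is $(d+1)d^{m-1}$ (the first letter ranges over $[d+1]$, each subsequent letter over $[d]$). Multiplying, the union bound gives
\[
\sum_{u:\hgt(u)=m}\prob(P_u \ge \eps) \le (d+1)d^{m-1}\cdot \eps^{-2}\tfrac{1}{2(d+1)}(2d-1)^{-(m-1)} = \tfrac12 \eps^{-2}\Big(\tfrac{d}{2d-1}\Big)^{m-1}\, ,
\]
and since $d/(2d-1) \le 2/3$ for all $d \ge 2$ we get $\le \tfrac12\eps^{-2}(\tfrac23)^{m-1} = \tfrac34\eps^{-2}(\tfrac23)^m$, which is even slightly stronger than the claimed $\tfrac32(d+1)\eps^{-2}(\tfrac23)^m$; the looser stated constant leaves room for whatever care is needed in the limiting step. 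The step I expect to require the most care is the passage to the limit — justifying that $\limsup_n \prob(\exists u : \hgt(u)\ge m, |\theta_u T_n| \ge \eps n)$ is controlled by the corresponding event for the $P_u$'s, given that a priori infinitely many nodes are involved. This is handled by first restricting to nodes of height exactly $m$, then further truncating to nodes with $u_1 \le N$ and bounded subsequent letters (note $|\theta_u T_n| \ge \eps n$ forces $|\theta_{(u_1)}T_n| \ge \eps n$, and there can be at most $1/\eps$ indices $j$ with $|\theta_{(j)}T_n| \ge \eps n$ since these subtrees are disjoint, so only finitely many first letters are ever relevant), applying the almost sure convergence on that finite set, and then letting the truncation parameters grow; the factor $(d+1)$ and the slack in the constant absorb the error terms.
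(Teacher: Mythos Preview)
Your approach is the same as the paper's: reduce to ancestors at height exactly $m$, union-bound, pass to the limit $P_u$, and apply Markov's inequality to $P_u^2$ via the product formula~\eqref{formula_P1_binary} and the independence from Proposition~\ref{Independence U binary}. Two clarifications. First, the set $\{u\in\bbU_d:\hgt(u)=m\}$ is \emph{finite} (you correctly count it as $(d+1)d^{m-1}$), so your ``still infinite, but summable'' parenthetical is a slip, and the truncation argument in your final paragraph is unnecessary --- the paper simply notes finiteness and passes to the limit directly. Second, $D_{(u_1)}$ is a coordinate of $\mathbf D_\varnothing\sim\mathrm{Dir}_{d+1}(\tfrac1{d-1})$, hence $\mathrm{Beta}(\tfrac1{d-1},\tfrac d{d-1})$ rather than $\mathrm{Beta}(\tfrac1{d-1},1)$; your computed value $\E[D_{(u_1)}^2]=\tfrac1{2(d+1)}$ is nonetheless correct, and the paper instead just uses the crude bound $D_{(u_1)}\le 1$, which is exactly why its constant $\tfrac32(d+1)$ is looser than your $\tfrac34$.
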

\begin{proof} Let $L(m,\varepsilon)$ be the $\limsup$ in the statement of the lemma. For $u \in \mathbb{U}_d$, if $\hgt(u) \geq m$ then $u$ has a unique ancestor $v\in\mathbb{U}_d$ at  height $\hgt(v)=m$, and necessarily $|\theta_vT_n| \ge |\theta_u T_n|$. Therefore,
\begin{equation*}
    L(m,\varepsilon) \le \limsup_{n\to\infty}\prob\left(\exists v\in\mathbb{U}_d\, :\, \hgt(v) = m, |\theta_vT_n| \ge \varepsilon n\right)\,.
\end{equation*}
Since there are only finitely many nodes $v$ with $\hgt(v) = m$ in $\mathbb{U}_d$, we obtain that
\begin{equation*}
    L(m,\varepsilon) \le \sum_{\substack{v\in\mathbb{U}_d\\ \hgt(v) = m}} \limsup_{n\to\infty}\prob\left(\frac{1}{n}|\theta_vT_n|\geq \varepsilon \right)\,.
\end{equation*}
By \eqref{formula_P1_binary}, if $v=(v_1,\ldots,v_m) \in \bbU_d$, then almost surely
\[
\frac{|\theta_v T_n|}{n} \longrightarrow P_v= \prod_{i=1}^{m} D_{(v_1,\ldots,v_i)}\, .
\]
By Proposition~\ref{Independence U binary}, the random variables in the product on the right are independent and all the variables but $D_{(v_1)}$ (which is bounded by $1$) are $\mathrm{Beta}(\sfrac{1}{(d-1)},1)$-distributed.
Since there are exactly $(d+1)d^{m-1}$ nodes $v\in\mathbb{U}_d$ with $\hgt(v) = m$, 
we deduce from the above inequality that 
\begin{equation*}
    L(m,\varepsilon) \le (d+1)d^{m-1} \prob(\tilde{D}_1\tilde{D}_2\cdots \tilde{D}_{m-1} \ge \varepsilon),
\end{equation*}
where the $\tilde{D}_i$ are \textsc{iid} $\mathrm{Beta}(\sfrac{1}{(d-1)},1)$-distributed random variables. By applying Markov's inequality to the squares, and then Corollary~\ref{cor:beta_moment}, we obtain that
\[
    L(m,\varepsilon)\leq (d+1)d^{m-1}\varepsilon^{-2}\E[D_{(1,1)}^2]^{m-1} = (d+1)\varepsilon^{-2}\big(\tfrac{d}{2d-1}\big)^{m-1}\,.
\]
The desired inequality follows since $d\geq 2$.
\end{proof}

\subsection{Number of competitors in small subtrees}
Let $t$ be an arbitrary $d$-ary plane tree. The goal of this section is to uniformly bound the number of vertices of $t$ that are selected by the algorithm before the actual root, but that belong to $w*\theta_w t$ for some $w \in t$ for which $|\theta_w t|/|t|$ is small. In other words, we want to complete step \ref{3_key_step} from the proof of Theorem~\ref{thm:dary_weak} in Section~\ref{sect: sketch of proof}. Our strategy is to translate the problem into the framework presented in Section~\ref{sec:flow} by encoding all the proportions $|\theta_w t|/|t|$ into a single preflow $p:\bbU\to[0,1]$. It will then turn out that the number of vertices of $w*\theta_wt$ that are more central than the root in $t$ can be related to the number $N_x(p)=|\{u\in\bbU\, :\, x\prod_{\varnothing\prec v\preceq u}\tfrac{p(v)}{2}\geq 1\}|$, defined in  (\ref{def_E_and_N}), for some $x>0$. This motivates the following lemma.

\begin{lemma}\label{flow binary}
There is a constant $c_d>0$ such that for any $d$-ary preflow $p$ and any $x \geq 1$, it holds that $N_x(p)\leq \exp(c_d+c_d\sqrt{\log x})$.
\end{lemma}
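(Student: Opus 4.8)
The plan is to dominate an arbitrary $d$-ary preflow $p$ by the explicit geometrically decaying flow $\gamma_\alpha$ of Proposition~\ref{deter_geo_flow}, for a suitable $\alpha=\alpha_d\in(1,2]$, and then simply invoke that proposition. The point is that although the child-values $p(u*1),\dots,p(u*d)$ of a preflow are only constrained through their sum, if we relabel the children of $u$ so that these values become non-increasing, the $j$-th largest is at most $p(u)/j$; and on the finite range $j\in\{1,\dots,d\}$ the quantity $1/j$ decays geometrically, which is exactly the structure Proposition~\ref{deter_geo_flow} can handle.

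First I would fix, for each $v\in\bbU_d$, a permutation $\pi_v$ of $[d]$ for which $r\mapsto p\bigl(v*\pi_v^{-1}(r)\bigr)$ is non-increasing, and define a prefix-compatible map $\Psi\colon\bbU_d\to\bbU_d$ by $\Psi(\varnothing)=\varnothing$ and $\Psi(u*j)=\Psi(u)*\pi_u(j)$; since each $\pi_v$ is a bijection, $\Psi$ is injective and sends the ancestral line of any $u$ onto the ancestral line of $\Psi(u)$ prefix by prefix. The preflow inequality $\sum_i p(v*i)\le p(v)$ gives $\pi_v(j)\,p(v*j)\le p(v)$, i.e.\ $p(v*j)\le p(v)/\pi_v(j)$, and iterating along an ancestral line yields $p(v)\le\prod_{i=1}^{k}w_i^{-1}$ whenever $\Psi(v)=(w_1,\dots,w_k)$. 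Multiplying these bounds over all $\varnothing\prec v\preceq u$ for a node $u$ of height $h$ with $\Psi(u)=(w_1,\dots,w_h)$, and collecting exponents (the factor $w_i^{-1}$ appears for each prefix of length $\ge i$, hence $h+1-i$ times), gives
\[
\prod_{\varnothing\prec v\preceq u}\frac{p(v)}{2}\ \le\ \frac{1}{2^h}\prod_{i=1}^{h} w_i^{-(h+1-i)}.
\]

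Next I would take $\alpha_d:=d^{1/(d-1)}$, which lies in $(1,2]$ for every $d\ge2$ (since $d\le 2^{d-1}$), and observe that $1/j\le\alpha_d^{-(j-1)}$ for all $j\in\{1,\dots,d\}$: indeed $j\mapsto\alpha_d^{\,j-1}=d^{(j-1)/(d-1)}$ is convex on $[1,d]$ and agrees with $j\mapsto j$ at the endpoints $j=1$ and $j=d$, hence lies below it throughout. Substituting this into the previous display and comparing with \eqref{product_deter_geo_flow}, applied to $\gamma=\gamma_{\alpha_d}$ at the node $\Psi(u)$, gives
\[
\prod_{\varnothing\prec v\preceq u}\frac{p(v)}{2}\ \le\ \prod_{\varnothing\prec v\preceq \Psi(u)}\frac{\gamma_{\alpha_d}(v)}{2}.
\]
Hence for every $x\ge1$, $u\in E_x(p)$ implies $\Psi(u)\in E_x(\gamma_{\alpha_d})$, so $\Psi$ restricts to an injection $E_x(p)\hookrightarrow E_x(\gamma_{\alpha_d})$ and therefore $N_x(p)\le N_x(\gamma_{\alpha_d})$. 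Proposition~\ref{deter_geo_flow} then gives $N_x(\gamma_{\alpha_d})\le\exp\bigl(c+c\sqrt{\log_{\alpha_d}x}\bigr)$ with $c$ universal, and since $\log_{\alpha_d}x=\tfrac{d-1}{\log d}\log x$, the claim follows with $c_d:=c\,\max\{1,\sqrt{(d-1)/\log d}\,\}$.

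I expect the conceptual crux — and the step worth flagging — to be the sorting reduction: recognizing that an arbitrary preflow can be reshaped, via an injective relabeling of $\bbU_d$ that preserves the relevant products, into something pointwise dominated by a genuine geometric flow, so that everything collapses onto the already-established Proposition~\ref{deter_geo_flow}. The remaining obstacles are purely bookkeeping: getting the exponent $h+1-i$ right in the telescoped product, and pinning down the correct base $\alpha_d=d^{1/(d-1)}$, for which the convexity comparison on $[1,d]$ is the clean justification.
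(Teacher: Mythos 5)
Your proposal is correct and follows essentially the same route as the paper's proof: both sort the children's values, use the preflow condition to get the bound $p(v*j)\le p(v)/j$ after sorting, dominate $p$ by $\gamma_{\alpha}$ with the same base $\alpha=d^{1/(d-1)}$, and conclude via Proposition~\ref{deter_geo_flow}. The only differences are cosmetic: you realize the "without loss of generality, sorted" step as an explicit injective relabeling $\Psi$, and you justify $1/j\le\alpha^{1-j}$ by convexity rather than by monotonicity of $s\mapsto\log(s)/(s-1)$.
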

\begin{proof}
Set $\alpha=d^{1/(d-1)}$, and define a function $\gamma=\gamma_{\alpha}:\bbU \to [0,1]$ by $\gamma(u_1,\ldots,u_h)=\alpha^{-\sum_{i=1}^h (u_i-1)}$ as in Proposition~\ref{deter_geo_flow}. Without loss of generality, we can assume that $p(u*1)\geq p(u*2)\geq\ldots\geq p(u*d)$ for any $u \in \mathbb{U}$. We claim that $p(u)\leq \gamma(u)$ for any $u \in \mathbb{U}$. To see this, first note that if $k\geq d+1$ then $p(u*k)=0\leq \gamma(u*k)$. If $k\in[d]$ then we have 
\[
p(u*k)\leq \frac{1}{k}\sum_{j=1}^k p(u*j)\leq \frac{p(u)}{k}\, ,
\]
since $p$ is a preflow. The choice of $\alpha$ yields that $\frac{1}{k} \le \alpha^{1-k}$ for $k \in [d]$; to see this note that by taking logs it suffices to show that $\sfrac{\log s}{(s-1)}$ is decreasing in $s$ for $s>1$, which is straightforward. 
Thus, $p(u*k) \le \alpha^{1-k}p(u)$, and the claim then follows by induction.  \smallskip

Since $p \le \gamma$, we deduce that for any $d$-ary preflow $p$, we have $N_x(p)\leq N_x(\gamma)$, 
from which we conclude the proof thanks to Proposition~\ref{deter_geo_flow}.
\end{proof}

The next proposition is the key bound on the number of nodes that are more central than the root, lying in a given small subtree. Recall from (\ref{PHI_def}) that $\Phi(t)=\tfrac{\varphi_t(\varnothing)}{\min_{u\in t}\varphi_t(u)}$.

\begin{proposition}\label{deterministicBound}
Let $c_d>0$ be the constant given in Lemma~\ref{flow binary}.
Then for any plane tree $t$ and any node $u\in t$, if $\theta_u t$ is $d$-ary and $|\theta_ut|\leq \frac{1}{3}|t|$ then $\big|\{ v \in \theta_ut: \varphi_{t}(\varnothing) \geq \varphi_{t}(u*v)\}\big| \leq \exp(c_d+c_d\sqrt{\log \Phi(t)})$.
\end{proposition}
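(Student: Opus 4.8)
The plan is to translate the subtree $\theta_u t$ into a $d$-ary preflow and then invoke Lemma~\ref{flow binary}. Concretely, I would work inside the subtree $\theta_u t$, whose size I abbreviate by $N := |\theta_u t| \le |t|/3$, and define a function $p \colon \bbU \to [0,1]$ by $p(w) = |\theta_w(\theta_u t)|/N = |\theta_{u*w}t|/N$ for $w \in \theta_u t$ and $p(w)=0$ otherwise. Since $\theta_u t$ is $d$-ary, the sizes of the subtrees at the (at most $d$) children of any fixed node $u*w$ sum to at most $|\theta_{u*w}t|$, so $p$ is indeed a $d$-ary preflow (it is in fact a flow away from leaves, which does not matter). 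The point of normalizing by $N$ rather than by $|t|$ is that $p(\varnothing)=1$, which is required of a preflow.

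The next step is to relate the competitor count to $N_x(p)$ for a suitable $x$. Fix $v \in \theta_u t$ with $\varphi_t(\varnothing) \ge \varphi_t(u*v)$. Using the ancestor identity \eqref{eq:phiu-ancestor} applied along the path from $u$ to $u*v$ in $t$, together with \eqref{eq:phiu-ancestor} along the path from $\varnothing$ to $u$, I get
\[
\frac{\varphi_t(\varnothing)}{\varphi_t(u*v)} = \frac{\varphi_t(\varnothing)}{\varphi_t(u)} \cdot \prod_{u \prec w \preceq u*v} \frac{|\theta_w t|}{|t|-|\theta_w t|}\, .
\]
Every node $w$ on the path strictly below $u$ down to $u*v$ lies in the subtree $\theta_u t$, so $|\theta_w t| \le N \le |t|/3$, hence $|t| - |\theta_w t| \ge \tfrac23 |t| \ge 2|\theta_w t|$, giving $|\theta_w t|/(|t|-|\theta_w t|) \le \tfrac12\, |\theta_w t|/|t| \le \tfrac12\, |\theta_w t|/N \cdot (N/|t|)$. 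The cleanest route is: $|\theta_w t|/(|t| - |\theta_w t|) \le \tfrac{1}{2} \cdot |\theta_w t|/N$ since $|t| - |\theta_w t| \ge 2N \ge 2|\theta_{u*v'}t|$ for each such $w$ — wait, I need $|t|-|\theta_w t| \ge 2N$, which holds because $|t| \ge 3N$. Writing $w = u*w'$ with $w' \in \theta_u t$, this says $|\theta_w t|/(|t|-|\theta_w t|) \le \tfrac12 p(w')$. Therefore
\[
1 \le \frac{\varphi_t(\varnothing)}{\varphi_t(u*v)} \le \Phi(t) \prod_{\varnothing \prec w' \preceq v} \frac{p(w')}{2}\, ,
\]
using $\varphi_t(\varnothing)/\varphi_t(u) \le \varphi_t(\varnothing)/\min_{z}\varphi_t(z) = \Phi(t)$. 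Hence $v \in E_{\Phi(t)}(p)$ in the notation \eqref{def_E_and_N}, so the set in question has size at most $N_{\Phi(t)}(p)$.

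Finally, apply Lemma~\ref{flow binary} with $x = \Phi(t)$ — which is at least $1$ by definition of $\Phi$ — to conclude $N_{\Phi(t)}(p) \le \exp(c_d + c_d\sqrt{\log \Phi(t)})$, with $c_d$ the constant from that lemma, which is exactly the claimed bound. The only mild subtleties to check carefully are: (i) that $p$ genuinely satisfies $\sum_{i\ge 1} p(u'*i) \le p(u')$ — immediate from $d$-arity of $\theta_u t$ and the fact that the subtrees at distinct children are disjoint — and (ii) the inequality $|\theta_w t|/(|t| - |\theta_w t|) \le \tfrac12 p(\text{corresponding word})$, which is where the hypothesis $|\theta_u t| \le |t|/3$ is used, via $|t| - |\theta_w t| \ge |t| - |\theta_u t| \ge \tfrac23|t| \ge 2|\theta_u t| \ge 2|\theta_w t|$. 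I do not anticipate a serious obstacle here; the main work is bookkeeping the two normalizations (by $|t|$ versus by $N$) and making sure the factor of $\tfrac12$ lands correctly so that the product matches the definition of $E_x(\cdot)$. The substantive content has already been offloaded to Proposition~\ref{deter_geo_flow} and Lemma~\ref{flow binary}.
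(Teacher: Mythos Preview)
Your proposal is correct and follows essentially the same route as the paper: define the $d$-ary preflow $p(w)=|\theta_{u*w}t|/|\theta_u t|$, use $|\theta_u t|\le |t|/3$ to bound each factor $|\theta_{u*w'}t|/(|t|-|\theta_{u*w'}t|)$ by $p(w')/2$, combine with $\varphi_t(\varnothing)/\varphi_t(u)\le\Phi(t)$ to obtain $v\in E_{\Phi(t)}(p)$ whenever $\varphi_t(\varnothing)\ge\varphi_t(u*v)$, and then invoke Lemma~\ref{flow binary}. The only difference is cosmetic (the paper factors out $|\theta_u t|$ from both numerator and denominator to write the ratio as $p(w)/(|t|/|\theta_u t|-p(w))\le p(w)/2$, whereas you verify $|t|-|\theta_{u*w'}t|\ge 2N$ directly); once you clean up the exploratory ``wait'' aside, the arguments are identical.
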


\begin{proof}
Define a $d$-ary preflow by setting $p(v)=|\theta_{u*v} t|/|\theta_ut|$ for all $v\in\bbU$. For any $v\in\theta_ut$, using the expression (\ref{eq:phiu-ancestor}), we can then write 
\begin{align*}
    \frac{\varphi_{t}(u)}{\varphi_{t}({u*v})} = \prod_{\varnothing \prec w \preceq v}\frac{|\theta_{u*w}t|}{|t|-|\theta_{u*w} t|} =\prod_{\varnothing\prec w \preceq v}\frac{p(w)}{\frac{|t|}{|\theta_ut|}-p(w)}\,.
\end{align*}
Under the assumption that $|\theta_ut| \leq \frac{1}{3}|t|$, this provides the bound
\begin{align*}
    \frac{\varphi_{t}(u)}{\varphi_{t}(u*v)} \leq \prod_{\varnothing\prec w \preceq v}\frac{p(w)}{2}\, ,
\end{align*}
which, in turn, implies that
\[N_x(p)=\left|\left\{ v \in {\bbU} \, :\, \prod_{\varnothing\prec w\preceq v} \frac{p(w)}{2}\geq \frac{1}{x}\right\}\right| \geq \left|\left\{ v \in \theta_ut: \varphi_{t}(u)x \geq \varphi_{t}(u*v)\right\}\right|\]
for all $x>0$. Since $\Phi(t)\geq \varphi_t(\varnothing)/\varphi_t(u)$, taking $x=\Phi(t)$ in the above inequality, Lemma~\ref{flow binary} yields that
\begin{align*}
    \big|\{ v\in\theta_u t: \varphi_t(\varnothing)\geq \varphi_t(u*v)\}\big| 
    & \leq \big|\{v\in\theta_u t: \varphi_t(u)\Phi(t)\geq\varphi_t(u*v)\}\big|\\
    & \leq \exp(c_d+c_d\sqrt{\log\Phi(t)})
\end{align*}
as required.
\end{proof}

\subsection{Competitive ratio}

To derive a good bound of the number of competitors in a $(d+1)$-regular uniform attachment tree by using Proposition~\ref{deterministicBound}, we first need to control its competitive ratio. Hence, the goal of this section is to show the result below, which corresponds to step \ref{2_key_step} in the proof of Theorem~\ref{thm:dary_weak} in Section~\ref{sect: sketch of proof}.

\begin{proposition}\label{Distribution of Phi(T)}
There exist constants $c, C > 0$ such that the following holds. Fix $d \ge 2$, and for $n\geq 1$ let $T_n \sim \mathrm{UA}_{d+1}(n)$. Then $\limsup_{n\to\infty}\prob(\Phi(T_n) \geq x) \leq C x^{-c}$ for all $x \geq 1$. 
\end{proposition}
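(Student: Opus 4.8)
The plan is to use Lemma~\ref{Upper bound Phi}, which bounds $\Phi(T_n)$ in terms of the sizes of subtrees along the ``heavy path'' $\varnothing = u_0 \prec u_1 \prec \cdots \prec u_k$, where each $u_{i+1}$ is the child of $u_i$ whose subtree carries at least half the mass, and $k$ is where this breaks. Passing to the almost-sure limit along this path (justified by Lemma~\ref{binary_size_subtrees}, which ensures the competitive path has bounded height with high probability, so only finitely many of the convergences $|\theta_u T_n|/n \to P_u$ are needed), we obtain that $\Phi(T_n)$ is asymptotically dominated by $\prod_{i=1}^{G} (1 - Q_i)^{-1}$, where $G$ is the length of the limiting heavy path and $Q_i$ is (roughly) the proportion of the mass at $u_{i-1}$ that leaves the heavy path at step $i$. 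First I would make this reduction precise: characterize the limiting heavy path in terms of the $\mathbf{D}_u \sim \mathrm{Dir}_d(\sfrac{1}{(d-1)})$ variables from Proposition~\ref{Independence U binary}, noting that at each node along the path there is at most one child with asymptotic proportion $\ge \sfrac12$ of its parent, and that the heavy child is chosen greedily.

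Next I would identify the law of the increments. At a node $u_{i-1}$ on the heavy path (for $i \ge 2$), the vector $\mathbf{D}_{u_{i-1}}$ is $\mathrm{Dir}_d(\sfrac{1}{(d-1)})$; conditioning on the event that some coordinate exceeds $\sfrac12$, and looking at the complementary mass, gives a random variable $Q_i \in [0,\sfrac12]$ whose density near $0$ behaves like that of a $\mathrm{Beta}(\sfrac{1}{(d-1)},1)$ complement — i.e.\ $\prob(Q_i \le q)$ grows like $q^{1/(d-1)}$ near $q = 0$ (the root contributes one extra coordinate but this only affects the first step, and the first-step factor is bounded by a constant times a power of a single Beta variable, contributing a harmless multiplicative factor). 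The length $G$ of the heavy path is geometrically bounded: at each step the probability that a heavy child exists at all is bounded away from $1$ (for $d$ fixed, the probability a $\mathrm{Dir}_d(\sfrac1{d-1})$ has a coordinate $\ge \sfrac12$ is some $\rho_d < 1$), so $\prob(G \ge m) \le \rho_d^m$, with $\rho_d$ — after a union/stochastic-domination argument — chosen to give the claimed constant, and critically the exponent in the final polynomial tail comes from the $\mathrm{Beta}$ exponent, not from $d$.

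Then I would bound the tail of $\log \Phi(T_n) \lesssim \sum_{i=1}^{G} \log \frac{1}{1 - Q_i}$, a random sum of $G$ terms, each with $\prob\big(\log\frac{1}{1-Q_i} \ge t\big) = \prob(Q_i \ge 1 - e^{-t}) \le \prob(Q_i \ge \sfrac12)$-type bound for large $t$ — but more usefully each $\log\frac1{1-Q_i}$ has an exponential moment, since $Q_i$ is bounded by $\sfrac12$ so $\log\frac1{1-Q_i} \le \log 2$, hence the sum is at most $G \log 2$ and $\prob(\Phi(T_n) \ge x)$ asymptotically is at most $\prob(G \ge \log_2 x) \le \rho_d^{\log_2 x} = x^{-c}$ with $c = \log_2(1/\rho_d)$. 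Here is the subtlety and \emph{the main obstacle}: $Q_i$ is not bounded by $\sfrac12$ — at the \emph{last} step $k$, the heavy child fails to exist precisely because \emph{no} coordinate reaches $\sfrac12$, but along the path (indices $1,\dots,k-1$) the heavy child does have proportion $\ge \sfrac12$, so the leftover $Q_i = 1 - (\text{heavy coordinate}) \le \sfrac12$ genuinely holds for $i \le k-1$, while the final factor $1/(1-|\theta_{u_k}t|/|t|)$ in \eqref{Product of Phi} (with $u_k$ here playing the role of $u_{k-1}$'s non-existent heavy child) can be close to $1/(1-\sfrac12)=2$ but no worse than a bounded constant away from the regime where it matters — so in fact \emph{every} factor in \eqref{Product of Phi} is at most $\frac{1}{1 - (\text{something} < 1)}$ and one must check uniform integrability. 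The cleanest route, which I would pursue, is: bound $\Phi(T_n) \le \prod_{i=1}^G B_i^{-1}$ where each $B_i = 1 - |\theta_{u_i}T_n|/|T_n| \to 1 - P_{u_i}/P_{u_{i-1}}$ has, conditionally on lying on the path, a density bounded above near $0$ by $C_d q^{1/(d-1) - 1}$ (the $\mathrm{Beta}(\sfrac1{d-1},1)$ tail), so $\E[B_i^{-c}]$ is finite for $c < \sfrac1{d-1}$, combine with the geometric bound on $G$ via a conditioning/martingale argument to get $\E[\Phi(T_n)^c] < \infty$ asymptotically for a small universal $c$, and conclude by Markov. The delicate point throughout is making the independence structure from Proposition~\ref{Independence U binary} interact correctly with the \emph{data-dependent} choice of which child is heavy — this requires conditioning on the path and summing over its finitely many possible realizations of each bounded height, exactly as in the proof of Lemma~\ref{binary_size_subtrees}.
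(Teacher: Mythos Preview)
The proposal contains a fundamental direction error. In Lemma~\ref{Upper bound Phi}, the bound is $\Phi(t) \le \prod_{i=1}^k (1 - |\theta_{u_i}t|/|t|)^{-1}$, and along the heavy path each $|\theta_{u_i}t|/|t| \ge \tfrac12$, so each factor is \emph{at least} $2$, not at most $2$. In your notation, if $Q_i$ denotes the leftover $1 - |\theta_{u_i}t|/|t| \le \tfrac12$, then the relevant factor is $Q_i^{-1}$, not $(1-Q_i)^{-1}$; the inequality $Q_i \le \tfrac12$ gives a \emph{lower} bound on the factor, not the upper bound you need. Consequently $\Phi(t)$ is not bounded by $2^G$: the dangerous regime is precisely when an early $|\theta_{u_i}t|/|t|$ is close to $1$ (a very unbalanced first split), which makes a single factor arbitrarily large even though $G$ may equal $1$. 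A geometric tail on the path length alone is therefore insufficient.

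Your ``cleanest route'' is closer in spirit but also misidentifies the limit: $1 - |\theta_{u_i}T_n|/|T_n| \to 1 - P_{u_i}$, not $1 - P_{u_i}/P_{u_{i-1}}$. Since $P_{u_i} = D_{u_1}\cdots D_{u_i}$, the factors $(1-P_{u_i})^{-1}$ are highly correlated and one cannot simply multiply individual moments; moreover, your proposed exponent $c < \tfrac{1}{d-1}$ would yield $d$-dependent constants, whereas the proposition asserts universal ones. The paper resolves both issues by (i) replacing the heavy path of Lemma~\ref{Upper bound Phi} with the \emph{greedy} path that always follows the child of maximal $D$-value, which by Proposition~\ref{Independence U binary} makes the step variables $V_i = \max_j D_{u_{i-1}*j}$ genuinely independent and identically distributed (for $i\ge 2$); and (ii) invoking the general Lemma~\ref{Bound x}, which directly handles the correlated product $\prod_{i<G}(1-V_1\cdots V_i)^{-1}$ under the single hypothesis $\sup_i \E[(1-V_i)^{-\beta}] < \infty$, verified with $\beta = \tfrac12$ uniformly in $d$ in Proposition~\ref{moment_max_-1/2}.
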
 

The idea of the proof is to combine the deterministic upper bound given by Lemma~\ref{Upper bound Phi} for the competitive ratio with the asymptotic description of the proportions discussed in Section~\ref{presentationAndAsymptotics}. It turns out that for the bound we obtain in this way, the upper tail behaviour can be controlled using the tail bound from the following general lemma.

\begin{lemma}\label{Bound x}
For all $\beta \in (0,1)$ and $b \ge 1$ there exist $C,c>0$ such that the following holds. Let $(V_i)_{i \ge 1}$ be independent random variables supported by $[0,1]$ such that $\E[(1-V_i)^{-\beta}]\leq b$ for all $i \ge 1$. Let $G=\inf \{i\geq 1: V_1 \cdots V_i\leq 1/2\}$ and set
    \begin{align}
        X = \prod_{i = 1}^{G-1}\frac{1}{1 - V_1\cdot\ldots\cdot V_i}.
    \end{align}
Then $\prob(X \ge x) \le Cx^{-c}$ for all $x \ge 1$.
\end{lemma}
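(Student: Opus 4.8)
The plan is to reduce the problem to a sum of i.i.d.-like increments and then apply a standard large-deviations / Cramér-type argument. Set $S_i = -\log(V_1\cdots V_i) = \sum_{j=1}^i (-\log V_j)$, so $(S_i)_{i\ge 1}$ is a random walk with nonnegative increments $Y_j := -\log V_j \ge 0$. By definition $G$ is the first time $S_i \ge \log 2$, i.e.\ the first passage time of this walk above level $\log 2$. The quantity to control is $\log X = \sum_{i=1}^{G-1} -\log(1 - V_1\cdots V_i) = \sum_{i=1}^{G-1} g(S_i)$, where $g(s) = -\log(1 - e^{-s})$; note that $g$ is positive, decreasing, and blows up like $-\log s$ as $s \downarrow 0$, while $g(s) \le 2e^{-s}$ for $s \ge \log 2$, so the terms with $S_i$ close to $0$ are the dangerous ones. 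The first step is therefore to record the elementary bound $\E[g(Y_1)^{\text{something}}]<\infty$ coming from the hypothesis $\E[(1-V_1)^{-\beta}] \le b$: indeed $1 - V_1 \ge 1 - e^{-Y_1} \ge c_0 (Y_1 \wedge 1)$ for a universal $c_0>0$, so $\E[Y_1^{-\beta}\wedge 1] \le c_0^{-\beta} b$, which says $Y_1$ cannot be too concentrated near $0$; equivalently $\E[e^{\beta g(Y_1)}] = \E[(1-V_1)^{-\beta}] \le b$.

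Next I would separate two regimes for $X$ to be large. Either (a) $G$ itself is large — but $G$ has a geometric-type tail: since the $Y_j$ are i.i.d.\ nonnegative and not a.s.\ $0$ (this follows from $\E[(1-V_1)^{-\beta}]\le b$, which forces $\prob(V_1 \le 1/2)>0$, hence $\prob(Y_1 \ge \log 2)>0$), there is $\rho = \prob(Y_1 \ge \log 2) > 0$ depending only on $\beta,b$ (via $\rho \ge$ something like $b^{-1}2^{-\beta}$ by Markov), and $\prob(G > m) \le (1-\rho)^{m}$, so this contributes a tail that is exponentially small in $m$. Or (b) $G$ is at most $m$ but the sum $\sum_{i=1}^{G-1} g(S_i)$ is nonetheless large; since there are at most $m$ terms, for this we need some individual $g(S_i)$ to be of order $(\log x)/m$, i.e.\ some $S_i$ (with $i<G$, so $S_i<\log 2$) to be smaller than roughly $e^{-(\log x)/m} = x^{-1/m}$. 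Using $\E[e^{\beta g(Y_1)}]\le b$ together with the fact that the overshoot-free partial sums $S_i$ for $i<G$ are themselves sums of $i$ i.i.d.\ copies of $Y_1$, I would bound $\prob(\exists\, i \le m: S_i \le t)$; the worst case is $i=1$, giving $\prob(S_1 \le t) = \prob(Y_1 \le t) \le \prob(e^{\beta g(Y_1)} \ge e^{\beta g(t)}) \le b\, e^{-\beta g(t)} = b(1-e^{-t})^{\beta} \le b\, t^{\beta}$ for small $t$, and for general $i \le m$ a union bound over $i$ (each $S_i \ge Y_1$, so $\prob(S_i \le t) \le b t^\beta$) gives $\prob(\min_{i\le m} S_i \le t) \le m\, b\, t^{\beta}$. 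Plugging $t = x^{-1/m}$ and using $g(t) \le -\log t + \log 2$ for $t \le 1/2$, one checks $\log X \le (m-1)\cdot g(\min_{i<G} S_i) + (\text{bounded terms from } S_i \ge \log 2)$ on the event $\{G \le m\}$, so on $\{G\le m, \min_{i<G} S_i > x^{-1/m}\}$ we get $\log X \lesssim m \log x^{1/m} + m = \log x + O(m)$, which is not yet enough — so instead I would choose the threshold more carefully, say require $\min_{i<G}S_i > x^{-\lambda/m}$ for a small constant $\lambda$, making $\log X \le \lambda \log x + O(m)$ on the good event.

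Combining: pick $m = \lceil A \log x\rceil$ for a suitable small constant $A$. Then $\prob(G > m) \le (1-\rho)^{m} \le x^{-c_1}$ with $c_1 = A\log\frac{1}{1-\rho} > 0$; and $\prob(\min_{i<G} S_i \le x^{-\lambda/m}) \le m b\, x^{-\lambda \beta/m} = m b\, x^{-\lambda\beta/(A\log x)}$, which is only of constant order in $x$ — so this naive split fails, and the fix is to not tie $m$ to $\log x$ but instead to sum the small-min probability over dyadic scales. Concretely, write $\{X \ge x\} \subseteq \{G > m\} \cup \bigcup_{k} \{2^{k} \le \#\{i<G: S_i \le s_k\}\text{-type event}\}$; more cleanly, I would bound $\log X \le \sum_{i=1}^{G-1} g(S_i)$ and use a \emph{layered} estimate: for each threshold $2^{-\ell}$, the number of indices $i<G$ with $S_i \in (2^{-\ell-1}, 2^{-\ell}]$ contributes at most that count times $g(2^{-\ell-1}) \approx (\ell+1)\log 2$ to $\log X$, and the expected number of such indices decays in $\ell$ because the walk must return near $0$. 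This renewal/occupation-measure estimate is the technical heart. The main obstacle I anticipate is exactly this: controlling the occupation measure near $0$ of the killed walk $(S_i)_{i<G}$ — i.e.\ showing $\E[\#\{i < G : S_i \le t\}]$ or its exponential moments decay appropriately as $t \to 0$ — since the increments $Y_j$ are only assumed to satisfy a one-sided moment bound near $0$ and nothing else (not i.i.d.\ across the $V_i$ beyond independence, not identically distributed, no density). I would handle the non-identical distributions by noting the hypothesis $\E[(1-V_i)^{-\beta}]\le b$ is uniform in $i$, so all the above Markov bounds hold uniformly, and a union/Borel–Cantelli-style summation over scales $\ell$ and indices $i$, weighted so that $\sum_\ell (\ell+1)\, \prob(\text{some } S_i \le 2^{-\ell}) $ controls $\log X$'s upper tail, yields $\prob(\log X \ge \log x) \le C x^{-c}$. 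The constants $c, C$ depend only on $\beta$ and $b$, as required.
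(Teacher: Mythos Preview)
Your decomposition into $\{G\text{ large}\}$ and $\{G\text{ small but }X\text{ large}\}$ matches the paper's, but both halves have problems.

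For the tail of $G$, your claimed lower bound $\rho=\prob(V_1\le 1/2)\gtrsim b^{-1}2^{-\beta}$ is not what Markov's inequality gives: applying it to $(1-V_i)^{-\beta}$ yields $\prob(V_i>1/2)\le b\cdot 2^{-\beta}$, hence $\rho\ge 1-b\cdot 2^{-\beta}$, which can be negative when $b>2^\beta$. The paper fixes this by replacing the threshold $1/2$ with $2^{-1/m}$ for an integer $m=m(\beta,b)$ large enough that $\prob(V_i\ge 2^{-1/m})\le b(1-2^{-1/m})^\beta\le 1/2$ for all $i$. Then, since $m$ indices $j$ with $V_j\le 2^{-1/m}$ force $V_1\cdots V_j\le 1/2$, one gets $G$ stochastically dominated by a sum of $m$ independent Geometric$(1/2)$ variables, and $\prob(G\ge mg+1)\le m\cdot 2^{-g}$.

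For the second half, you are missing a one-line observation that makes the layered occupation-measure argument you call the ``technical heart'' entirely unnecessary. Since $0\le V_j\le 1$, we have $V_1\cdots V_i\le V_i$ and hence
\[
X=\prod_{i=1}^{G-1}\frac{1}{1-V_1\cdots V_i}\ \le\ \prod_{i=1}^{G-1}\frac{1}{1-V_i}\ \le\ \prod_{i=1}^{mg}\frac{1}{1-V_i}\quad\text{on }\{G\le mg\},
\]
a product of \emph{independent} factors with $\E[(1-V_i)^{-\beta}]\le b$. Markov with exponent $\beta$ gives $\prob\big(\prod_{i=1}^{mg}(1-V_i)^{-1}\ge x\big)\le b^{mg}x^{-\beta}$. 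Combining, $\prob(X\ge x)\le m\cdot 2^{-g}+b^{mg}x^{-\beta}$; choosing $g=\lfloor c\log_2 x\rfloor$ with $c>0$ small enough that $\beta-cm\log_2 b\ge c$ finishes the proof. Your route via $g(S_i)$ and dyadic scales is not obviously wrong, but it is unfinished and attacks a much harder problem than the hypothesis actually poses.
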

\begin{proof}
First, Markov's inequality yields that for all $i,m\geq 1$,
\[\prob\big(V_i\geq 2^{-1/m}\big)=\prob\big((1-V_i)^{-\beta}\geq (1-2^{-1/m})^{-\beta}\big)\leq b(1-2^{-1/m})^{\beta}.\]
Thus, there exists an integer $m=m(\beta,b)\geq 1$ such that $\prob(V_i\geq 2^{-1/m})\leq 1/2$ for all $i\geq 1$. Next, for any integer $g \ge 1$ we have 
\begin{align}\label{Bound on X}
    \prob(X\geq x)\leq \prob(G\geq mg+1)+\prob \left(\prod_{i=1}^{mg}\frac{1}{1-V_i}\geq x\right)
\end{align}
because $0\leq V_j\leq 1$ for any $j\geq 1$. Since $(V_i)_{i \ge 1}$ are independent, 
\[
\E \left[\left(\prod_{i=1}^{mg}\frac{1}{1-V_i}\right)^{\beta}\right] = \prod_{i=1}^{mg} \E\left[(1-V_i)^{-\beta}\right] \leq b^{mg},
\]
so Markov's inequality provides the bound
\begin{align}\label{Markov}
    \prob \left(\prod_{i=1}^{mg}\frac{1}{1-V_i}\geq x\right)\leq \frac{b^{mg}}{x^{\beta}}.
\end{align}
By the definition of $G$, if there exist $m$ distinct indices $i_1<\ldots<i_m$ such that $V_{i_1},\ldots,V_{i_m}$ are all less than $2^{-1/m}$, then $G\leq i_m$. Therefore, we can stochastically bound $G$ above by the sum of $m$ independent  geometric random variables $Y_1,\ldots,Y_m$ with common parameter $\inf_{i\geq 1} \prob(V_i\leq 2^{-1/m})$, which we recall to be at least $1/2$ by our choice of $m$. It follows that $\prob(G\geq mg+1)\leq m\prob(Y_1\geq g+1)\leq m /2^g$ for any $g\geq 0$. Then, we can rewrite (\ref{Bound on X}) by using (\ref{Markov}) and the previous inequality to obtain
\begin{align*}
    \prob(X\geq x)\leq \frac{m}{2^g}+ \frac{b^{mg}}{x^{\beta}}. 
\end{align*}
Finally, choose $c\in(0,1)$ small enough to have that $\beta- cm \log_2 b\geq c$, and let $g=\lfloor c\log_2 x\rfloor$. We then get $\prob(X\geq x)\leq (2m+1)x^{-c}$ as desired.
\end{proof}

Since we wish to apply Lemma~\ref{Bound x} to prove Proposition~\ref{Distribution of Phi(T)}, we state and prove a technical estimate which will ensure that the assumptions of Lemma~\ref{Bound x} are satisfied in the relevant situation. In what follows, the random variables $(D_u)_{u \in \bbU}$ are as in \eqref{def_P-U_binary}.

\begin{proposition}
\label{moment_max_-1/2}
Let $V_\varnothing=\max_{i\in[d+1]} D_i$ and $V_v=\max_{i \in [d]} D_{v*i}$ for all $v\in\bbU_d\setminus\{\varnothing\}$. Then, it holds that $\E[(1-V_u)^{-1/2}]\leq 7\sqrt{2}$ for all $u\in\bbU_d$.
\end{proposition}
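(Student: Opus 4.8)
The plan is to bound $\E[(1-V_u)^{-1/2}]$ uniformly by reducing to an explicit moment computation for the maximum of the components of a symmetric Dirichlet vector. By Proposition~\ref{Independence U binary}, the vector $(D_{u*1},\ldots,D_{u*d})$ is $\mathrm{Dir}_d(\tfrac{1}{d-1})$-distributed for $u\neq\varnothing$, and $(D_1,\ldots,D_{d+1})$ is $\mathrm{Dir}_{d+1}(\tfrac{1}{d-1})$-distributed; in either case each marginal $D_{u*i}$ is $\mathrm{Beta}(\tfrac1{d-1},1)$ (when $u\neq\varnothing$) or $\mathrm{Beta}(\tfrac1{d-1},\tfrac{d}{d-1})$ (when $u=\varnothing$). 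The point is that $1-V_u = \min_i (1-D_{u*i})$ is at least the second-largest component of the vector, or more simply: on the event that $D_{u*i}$ is the maximum, $1-V_u$ is the sum of the other $d-1$ (or $d$) components, each of which is nonnegative, so $1-V_u \ge \max_{j\neq i}D_{u*j}$ is not quite the cleanest route. Instead I would use the crude but robust bound $1-V_u \ge 1 - (D_{u*1}+\cdots+D_{u*d})\cdot\tfrac{d-1}{d}$... no; the honest approach is a union bound combined with a single Beta moment estimate.

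Concretely, first I would write, for any threshold and via a union bound over which index achieves the maximum,
\[
\E[(1-V_u)^{-1/2}] = \E\Big[\min_{i}(1-D_{u*i})^{-1/2}\Big] \le \sum_{i} \E\big[(1-D_{u*i})^{-1/2}\,\I{D_{u*i}=V_u}\big].
\]
On the event $\{D_{u*i}=V_u\}$ we have $D_{u*i}\ge 1/d$ (since the $d$ components sum to at most $1$, or exactly $1$; for the root there are $d+1$ components summing to $1$, so $D_i \ge 1/(d+1)$ there), hence $1-D_{u*i}\le 1-1/d$ is false — rather, I want a lower bound on $1-D_{u*i}$, so this event is unhelpful for that. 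The correct pairing is: when $D_{u*i}$ is \emph{not} close to $1$, $(1-D_{u*i})^{-1/2}$ is bounded; the only danger is $D_{u*i}$ near $1$, which forces $i$ to be the argmax. So I would split $\E[(1-V_u)^{-1/2}] \le \E[(1-V_u)^{-1/2}\I{V_u\le 1/2}] + \sum_i \E[(1-D_{u*i})^{-1/2}\I{D_{u*i}>1/2}]$. The first term is at most $2^{1/2}$. For each summand in the second term, since $\{D_{u*i}>1/2\}$ forces $D_{u*i}=V_u$ and there are at most $d+1$ indices but at most one can exceed $1/2$, I bound it by $\E[(1-D_{u*i})^{-1/2}]$ for a single marginal, and this is where the explicit Beta integral comes in.

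So the main computation is: if $B\sim\mathrm{Beta}(\alpha,\beta)$ with $\alpha=\tfrac1{d-1}\le 1$ and $\beta\in\{1,\tfrac{d}{d-1}\}\ge 1$, then $\E[(1-B)^{-1/2}] = \int_0^1 x^{\alpha-1}(1-x)^{\beta-1}(1-x)^{-1/2}\,dx/B(\alpha,\beta) = B(\alpha,\beta-\tfrac12)/B(\alpha,\beta)$, which is finite precisely because $\beta-\tfrac12 > 0$, and equals $\Gamma(\beta)\Gamma(\alpha+\beta-\tfrac12)/(\Gamma(\beta-\tfrac12)\Gamma(\alpha+\beta))$. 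I would then show this is bounded by an absolute constant uniformly in $d\ge 2$, e.g.\ by the stochastic-domination fact \eqref{eq:betadomination}: since $\mathrm{Beta}(\alpha,\beta)\preceq_{\mathrm{st}}\mathrm{Beta}(1,1)$ when $\alpha\le 1\le\beta$, and $x\mapsto(1-x)^{-1/2}$ is increasing, we get $\E[(1-B)^{-1/2}]\le\E[(1-U)^{-1/2}]$ for $U\sim\mathrm{Beta}(1,\beta)$ — wait, I need the second parameter to go \emph{down} to make $1-B$ stochastically smaller, i.e.\ I should compare to $\mathrm{Beta}(1,1)=\mathrm{Unif}[0,1]$, for which $\E[(1-U)^{-1/2}]=\int_0^1(1-x)^{-1/2}dx=2$. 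Hence each marginal term is at most $2$, giving $\E[(1-V_u)^{-1/2}]\le \sqrt2 + (d+1)\cdot 2$, which is too large and $d$-dependent — so the union bound over \emph{all} $d+1$ indices is too lossy and I must exploit that at most one index has $D_{u*i}>1/2$.

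The fix, and the step I expect to be the genuine obstacle, is handling the tail contribution without a factor of $d$: I would instead bound, for $t\in(1/2,1)$,
\[
\prob(V_u > t) = \prob(\exists i:\ D_{u*i}>t) \le (d+1)\,\prob(D_{u*1}>t) = (d+1)\prob(\mathrm{Beta}(\alpha,\beta) > t),
\]
and estimate $\prob(\mathrm{Beta}(\alpha,\beta)>t) = \int_t^1 x^{\alpha-1}(1-x)^{\beta-1}dx/B(\alpha,\beta) \le (1-t)^{\beta-1}\cdot\tfrac{?}{B(\alpha,\beta)}$... more carefully, $\le \int_t^1 (1-x)^{\beta-1}dx / B(\alpha,\beta) = (1-t)^{\beta}/(\beta B(\alpha,\beta))$, and $1/B(\alpha,\beta)=\Gamma(\alpha+\beta)/(\Gamma(\alpha)\Gamma(\beta))$; since $\alpha=1/(d-1)$, $\Gamma(\alpha)\ge \Gamma(\alpha+\beta)/\beta^{?}$... this needs the bound $1/B(\alpha,\beta)\le C(\text{something})/(d-1)$ cancelling the $(d+1)$. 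Using $\Gamma(\alpha)\sim 1/\alpha = d-1$ for small $\alpha$ (indeed $\Gamma(\alpha)\ge \tfrac1\alpha - 1 \ge \tfrac{d-1}{2}$ for $d\ge 3$, checked directly for $d=2$), we get $1/B(\alpha,\beta)=\Gamma(\alpha+\beta)/(\Gamma(\alpha)\Gamma(\beta))\le \tfrac{2}{d-1}\cdot \Gamma(\alpha+\beta)/\Gamma(\beta)\le \tfrac{2}{d-1}\cdot(\alpha+\beta)\le \tfrac{2}{d-1}(1+\tfrac{d}{d-1}) \le \tfrac{6}{d-1}$ or so, so $\prob(V_u>t)\le (d+1)\cdot\tfrac{6}{d-1}\cdot(1-t)^{\beta}/\beta \le C(1-t)$ uniformly (using $\beta\ge 1$, $(d+1)/(d-1)\le 3$). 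Then $\E[(1-V_u)^{-1/2}] = \int_0^\infty \prob((1-V_u)^{-1/2}>s)ds = \tfrac12\int_0^1 (1-t)^{-3/2}\prob(V_u>t)dt + \text{(bounded part from }t\le 0\text{)}$, and since $\prob(V_u>t)\le C(1-t)$ (and is $\le 1$), the integral $\int (1-t)^{-1/2}dt$ converges and yields an absolute constant; tracking the constants carefully should give the stated bound $7\sqrt2$. I'd present this via the layer-cake / integration-by-parts identity $\E[(1-V_u)^{-1/2}] = 1 + \tfrac12\int_0^1 (1-t)^{-3/2}\prob(V_u> t)\,dt$, plug in the tail bound, and compute. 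The obstacle is purely in getting clean enough constants in the Gamma-function estimates for $1/B(\alpha,\beta)$; everything else is routine.
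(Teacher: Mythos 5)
Your proposal is correct and follows essentially the same route as the paper: split at $V_u\le 1/2$ to get the $\sqrt{2}$, reduce the tail contribution by exchangeability to $(d+1)$ copies of a single Beta marginal, and cancel that factor against the $\asymp(d-1)^{-1}$ normalizing constant of the $\mathrm{Beta}(\tfrac{1}{d-1},\cdot)$ density --- the paper just reaches this cancellation more cleanly, via the stochastic domination \eqref{eq:betadomination} to $\mathrm{Beta}(\tfrac{1}{d-1},1)$, whose normalizer is exactly $d-1$, instead of your Gamma-function estimates. One small repair to your tail estimate: you cannot drop $x^{\alpha-1}$ from $\int_t^1 x^{\alpha-1}(1-x)^{\beta-1}\,\dd x$ to get an upper bound, since $x^{\alpha-1}\ge 1$ for $\alpha\le 1$ and $x\in(0,1)$; instead use $x^{\alpha-1}\le 2^{1-\alpha}\le 2$ on $[1/2,1]$ (exactly the paper's step), after which your layer-cake computation goes through and in fact yields a constant below $7\sqrt{2}$.
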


\begin{proof}
Set $k=d+1$ if $u=\varnothing$ or $k=d$ otherwise, so that $V_u$ is the maximum of the components of a $\mathrm{Dir}_k(\tfrac{1}{d-1})$-distributed random vector by Proposition~\ref{Independence U binary}. The components of a symmetric Dirichlet random vector are non-negative and sum to $1$, so at most one of them is greater than $1/2$. Since they are identically distributed, we have 
\begin{align*} 
\E\big[(1-V_u)^{-1/2}\big]&\leq \sqrt{2}+k\E\big[\I{D_{u*1}>1/2}(1-D_{u*1})^{-1/2}\big]\\
&\leq \sqrt{2}+(d+1)\E\big[\I{D_{u*1}>1/2}(1-D_{u*1})^{-1/2}\big],
\end{align*}
where we recall from Section~\ref{sub:beta} that $D_{u*1}$ is $\mathrm{Beta}(\tfrac{1}{d-1},\tfrac{k-1}{d-1})$-distributed. \smallskip

As we have $\tfrac{k-1}{d-1}\ge 1$, the stochastic domination relation \eqref{eq:betadomination} implies that $D_{u*1} \preceq_{\mathrm{st}} Y\sim \mathrm{Beta}(\tfrac{1}{d-1},1)$. Since the function $x\in(0,1)\mapsto \I{x>1/2}(1-x)^{-1/2}$ is non-decreasing and non-negative, and the normalizing constant in the $\mathrm{Beta}(\tfrac{1}{d-1},1)$ distribution's density is $\Gamma(\tfrac1{d-1})\Gamma(1)/\Gamma(1+\tfrac{1}{d-1})=d-1$, 
it follows that
\begin{align*}
\E\big[(1-V_u)^{-1/2}\big]&\leq \sqrt{2}+(d+1)\E\big[\I{Y>1/2}(1-Y)^{-1/2}\big]\\
&=\sqrt{2}+\frac{d+1}{d-1}\int_{1/2}^1 (1-x)^{-1/2} x^{1/(d-1)-1}\, \dd x\, .
\end{align*}
Next, we use that $0\geq 1/(d-1)-1\geq -1$ to write
\[\E\big[(1-V_u)^{-1/2}\big]\leq \sqrt{2}+\frac{2(d+1)}{d-1}\int_{1/2}^1(1-x)^{-1/2}\, \dd x= \sqrt{2}\cdot \frac{3d+1}{d-1}\, .\]
The desired result follows since $d\geq 2$.
\end{proof}

\begin{proof}[Proof of Proposition~\ref{Distribution of Phi(T)}]
Using the random variables $(P_u)_{u \in \bbU}$ introduced in Section~\ref{presentationAndAsymptotics},  define a random path $(u_i)_{i\geq 0}$ in $\bbU_d$ as follows. 
For each node $u \in \bbU_d$, let $j(u)= \arg\max(D_{u*i}, i \ge 1)$, so that $P_{u*j(u)}=\max_{i \ge 1}P_{u*i}=P_uD_{u*j(u)}$. By Proposition~\ref{Independence U binary}, the random variables $(D_{u*i})_{i \ge 1}$ almost surely have a unique maximum so $j(u)$ is well-defined almost surely, and almost surely $j(\varnothing) \in [d+1]$ and $j(u) \in [d]$ for $u \ne \varnothing$. Observe that, in the notation of Proposition~\ref{moment_max_-1/2}, we have $V_u=D_{u*j(u)}$ for all $u \in \bbU_d$. 
\smallskip

Let $u_0=\varnothing$ and, inductively, for $i \ge 1$, given $u_i$ set $u_{i+1}=u_i*j_{u_i}$. We then have  $P_{u_{i}}=P_{u_{i-1}}D_{u_{i}}=\prod_{j=1}^i D_{u_j}$ for $i \ge 1$. By Proposition~\ref{Independence U binary}, the random pairs $(D_{u*j(u)};j(u))_{u\in\bbU_d}$ are independent
and have the same distribution except for $(D_{j_\varnothing};j(\varnothing))$, which implies that the random variables $(D_{u_i})_{i \ge 1}$ are independent and the random variables $(D_{u_i})_{i \ge 2}$ are identically distributed. Moreover, each $D_{u_i}$ has a continuous distribution and support $[0,1]$, so writing $G=\inf\{k \ge 1 :P_{u_k} \le 1/2\}$, it follows that $G$ is almost surely finite.
It also follows that $\prob(\exists k \in \N_1: \prod_{i \in [k]} D_{u_i}=1/2)=0$, so almost surely $P_{u_{G}} < 1/2$.
Since $|\theta_{u_i} T_n|/|T_n| \to P_{u_i}$ almost surely for all $i \ge 1$, it follows that there exists a random variable $n_0 \in \N_1$ such that almost surely, for all $n \ge n_0$, $|\theta_{u_i} T_n| > \tfrac12 |T_n|$ for all $i \in [G-1]$ and $|\theta_{u_{G-1}*j} T_n| < \tfrac12 |T_n|$ for all $j \in [d]$. It follows that for $n \ge n_0$,  $\varnothing = u_0 \preceq \dots \preceq u_{G-1}$
is the sequence of nodes described in Lemma~\ref{Upper bound Phi}. The conclusion of that lemma is that almost surely, for $n \ge n_0$, 
\[
\Phi(T_n) \le \prod_{i = 1}^{G-1} \frac{1}{1 - |\theta_{u_i}T_n|/|T_n|}\,.
\]
Since $|\theta_{u_i}T_n|/|T_n| \convas P_{u_i}$, it follows that $\limsup_{n \to \infty} \Phi(T_n) \le \prod_{i=1}^{G-1} (1-P_{u_i})^{-1}$ almost surely, so  
\[
\limsup_{n \to \infty} \prob(\Phi(T_n) \ge x) \le \prob\left(\prod_{i=1}^{G-1} \frac{1}{1-P_{u_i}} \ge x \right)
= 
\prob\left(\prod_{i=1}^{G-1} \frac{1}{1-D_{u_1}\cdot \ldots \cdot D_{u_i}} \ge x \right)
.  
\]
The random variables $(D_{u_i})_{i \ge 1}=(V_{u_{i-1}})_{i \ge 1}$ are independent as observed above, and by Proposition~\ref{moment_max_-1/2} we have $\E[(1-D_{u_i})^{-1/2}] \le 7\sqrt{2}$ for all $i \ge 1$. The result then follows from the above displayed equation and the bound of Lemma~\ref{Bound x}.
\end{proof}

\subsection{The dependence on \texorpdfstring{$d$}{d} in Theorem~\ref{thm:dary_weak}}
\label{rem:dependence_d}

By examining the proof given in Section~\ref{sect: sketch of proof}, we see that the constant $c_d^*$ appearing inside the exponential in Theorem~\ref{thm:dary_weak} depends on $d$ exclusively via the constant $c_d$ from Lemma~\ref{flow binary} and Proposition~\ref{deterministicBound}. However, one cannot remove the dependence in $d$ in Lemma~\ref{flow binary}; it is straightforward to show that if $p(\varnothing)=1$ and $p(u*i)=\I{i\leq d}\frac{p(u)}{d}$ for all $u\in\bbU$ and $i\geq 1$, then $p$ is a $d$-ary flow satisfying $\log N_x(p)\sim \sqrt{2(\log d)\log x}$ as $x\to\infty$. Thus, deterministic arguments are not enough to uniformly control the optimal constants $c_d^*$, or to bound from below the minimum size ensuring a given precision of the root-finding algorithm for random trees with unbounded degrees. 
Therefore, in Section~\ref{sec:ua}, 
when proving Theorems~\ref{thm:mainua} and~\ref{thm:dary}, we are forced to rely more heavily on the distributional properties of the models.

\section{Uniform attachment trees}\label{sec:ua}
The bulk of this section is devoted to the proof of Theorem~\ref{thm:mainua}. Since the proof of Theorem~\ref{thm:dary} is very similar to that of Theorem~\ref{thm:mainua}, our argument for it, which appears in Section~\ref{sec:dependence_d}, is much more succinct.

\subsection{Presentation and asymptotics of the model}
\label{presentationAndAsymptoticsUA}

We start by reformulating the uniform attachment model for plane trees using the Ulam--Harris formalism. Recall from Definition~\ref{def:plane_tree} that if $t$ is a plane tree and $u\in t$, then $\rk_u(t)$ stands for the number of children of $u$ in $t$. We let $T_1=\{\varnothing\}$ and for any $n\geq 1$, $T_{n+1}=T_n \cup \{V_n*(\rk_{V_n}(T_n)+1)\}$ where $V_n$ is uniformly random on $T_n$ conditionally given $(T_1,\ldots,T_n)$. By construction, the graph-isomorphism class of $T_n$ has the same law as that of $\mathrm{UA}(n)$-distributed tree presented in the introduction. Throughout this section, we slightly abuse notation by writing $\hat{T}_n\edi \mathrm{UA}(n)$ to mean that $\hat{T}_n$ is a random \emph{plane tree} with the same law as that of the plane tree $T_n$ constructed in this paragraph. This in particular allows us to write $T_n\edi\mathrm{UA}(n)$.
\smallskip

It is well-known, and straightforwardly proved by first and second moment computations, that in a uniform attachment tree the degree of any fixed node tends to $\infty$ as $n \to \infty$. For $u \in \bbU$ writing $\tau=\tau_u = \inf\{k \ge 1: u \in T_k\}$, so that $T_\tau=T_{\tau-1}\cup \{u\}$, it follows that $\tau_u$ is almost surely finite for all $u \in \bbU$; we will use this below.
\smallskip

Note that $|T_n|=n$ for any $n\geq 1$. As we did in Section~\ref{presentationAndAsymptotics} for the $d$-regular uniform attachment model, we describe the asymptotic behaviour of the proportions of the subtrees of $T_n$ via a comparison with certain nested P\'olya urns.

First, we show that for any $u \in \mathbb{U}$ and any integer $i\geq 1$, the two limits
\begin{equation}
\label{def_P_U_general-degree}
P_u = \lim_{n \to \infty} \frac{|\theta_uT_n|}{n}\quad\text{ and }\quad U_{u*i} = \lim_{n \to \infty}\I{u*i\in T_n}\frac{|\theta_{u*i}T_n|}{\sum_{j\geq i} |\theta_{u*j}T_n|}
\end{equation}
exist almost surely.
To do this, fix $u \in \bbU\setminus \{\varnothing\}$ and write $\tau=\tau_u$. Colour $u$ in black and all the vertices of $T_{\tau-1}$ in red, then colour each node added after time $\tau$ in the same colour as its parent. Hence, conditionally given $T_{\tau}$, $(|\theta_uT_{\tau+n}|)_{n \ge 0}$ evolves as the number of black balls in a standard Pólya urn that initially contains $1$ black ball and $\tau-1$ red balls, which classically implies that $\tfrac{1}{\tau+n}|\theta_u T_{\tau+n}|$ converges towards a nonzero limit almost surely. This shows that $P_u$ is well-defined and that a.s.\ $P_u > 0$.
It follows that the limit $U_{u*i}$ also exists almost surely, and that $U_{u*i}\aseq P_{u*i}(P_u-\sum_{j=1}^{i-1} P_{u*j})^{-1}$. From there, an induction on $i\geq 1$ yields the following recursive relation:
\begin{equation}
\label{formula_P2}
\forall u\in\bbU,\forall i\geq 1,\quad P_{u*i} = P_u\cdot U_{u*i}\prod_{j=1}^{i-1}(1-U_{u*j})\,.
\end{equation}
For example, we get $P_1 = U_1$, $P_2=(1-U_1)U_2$, $P_3 = (1-U_1)(1-U_2)U_3$, and $P_{(1,1,2)} = U_1U_{(1,1)}(1-U_{(1,1,1)})U_{(1,1,2)}$. By induction on $u$, (\ref{formula_P2}) allows us to express all the asymptotic proportions $(P_u)_{u\in\bbU}$ in terms of the uniforms $(U_v)_{v\in\bbU\setminus\{\varnothing\}}$ as follows:
\begin{equation}
\label{formula_P1}
\text{if}\quad u = (u_1,u_2,\ldots,u_k),\quad\text{ then }\quad P_u = \prod_{i=1}^k\left(U_{(u_1,\ldots,u_i)}\prod_{j=1}^{u_i - 1}(1-U_{(u_1,\ldots,u_{i-1},j)})\right).
\end{equation}

Similarly as in Section~\ref{presentationAndAsymptotics}, we can describe the joint law of the factors $(U_u)_{u\in\bbU\setminus\{\varnothing\}}$.

\begin{proposition}\label{Independence U in UA}
The random variables $(U_u)_{u \in \bbU\setminus\{\varnothing\}}$ are Uniform$[0,1]$-distributed and independent .
\end{proposition}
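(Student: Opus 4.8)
The plan is to leverage the self-similar structure of the model through two nested inductions. The key structural input is a decomposition entirely analogous to the one used in the proof of Proposition~\ref{Independence U binary}. For $n\ge 1$ let $L_n\in\N_1$ be the index of the child $i$ of the root such that the node added at step $n$ lies in $\theta_i T_{n+1}$; so $L_1=1$, and at step $n$ either $V_n=\varnothing$, in which case a new subtree with the least unused index is created, or $V_n$ lies in some existing $\theta_iT_n$ and, conditionally, $V_n$ is uniform among the $|\theta_iT_n|$ nodes of that subtree. Since $T_n$ has $n$ nodes of which one is the root, the subtrees are born in increasing order of index and $N_i(n):=|\theta_iT_n|=|\{m<n:L_m=i\}|$. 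A routine induction (of the same flavour as the verification in the proof of Proposition~\ref{Independence U binary}) shows that $(T_n)_{n\ge 1}$ can be realized by taking the sequence $(L_n)_{n\ge 1}$ together with an independent family $(\tilde T^{(i)}_n)_{n\ge 1}$, $i\ge 1$, of i.i.d.\ uniform attachment processes, and setting $\theta_i T_n=\tilde T^{(i)}_{N_i(n)}$ for every $i$. As $N_i(n)\to\infty$ almost surely and the limits in \eqref{def_P_U_general-degree} exist almost surely, on this space we get $(U_{i*v})_{v\in\bbU\setminus\{\varnothing\}}=(\tilde U^{(i)}_v)_{v\in\bbU\setminus\{\varnothing\}}$ for each $i$, where $\tilde U^{(i)}$ denotes the family of $U$-variables of $\tilde T^{(i)}$; moreover $(U_i)_{i\ge 1}$ is a deterministic function of $(L_n)_{n\ge 1}$ alone, hence independent of the family $(\tilde T^{(i)})_{i\ge 1}$.

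Granting this, I would prove the proposition by induction on $h\ge 1$, showing that $(U_u:u\in\bbU\setminus\{\varnothing\},\ \hgt(u)\le h)$ are i.i.d.\ Uniform$[0,1]$. In the inductive step one splits these variables according to whether $\hgt(u)=1$ or $\hgt(u)\ge 2$. The height-$1$ variables are $(U_i)_{i\ge 1}$, a function of $(L_n)_{n\ge 1}$; the variables of height between $2$ and $h$ are $(\tilde U^{(i)}_v:\hgt(v)\le h-1)_{i\ge 1}$, which by the induction hypothesis applied to each process $\tilde T^{(i)}$ are i.i.d.\ Uniform$[0,1]$, independent across $i$, and independent of $(L_n)_{n\ge 1}$. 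Combining these with the base case $h=1$, which asserts precisely that $(U_i)_{i\ge 1}$ are i.i.d.\ Uniform$[0,1]$, gives the claim for height $\le h$, and then letting $h\to\infty$ gives the proposition.

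It thus remains to prove the base case, and for this I would run a further induction on $k$ to show $(U_1,\dots,U_k)$ are i.i.d.\ Uniform$[0,1]$. Let $\sigma_k$ be the step at which $\theta_k$ is born, and for $n\ge\sigma_k+1$ set $R_n=n-\sum_{j=1}^k|\theta_j T_n|$, the number of nodes of $T_n$ equal to the root or lying in some $\theta_j$ with $j>k$. Because subtrees are born in increasing order, for $n\ge\sigma_k+1$ each step increments exactly one coordinate of $(|\theta_1T_n|,\dots,|\theta_kT_n|,R_n)$, so this vector evolves as a classical P\'olya urn, with composition at time $\sigma_k+1$ equal to $(|\theta_1T_{\sigma_k+1}|,\dots,|\theta_{k-1}T_{\sigma_k+1}|,1,1)$. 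Conditioning on the history $\mathcal G_k$ up to time $\sigma_k+1$ and invoking the P\'olya urn convergence theorem used in Section~\ref{presentationAndAsymptotics}, the a.s.\ limit $(P_1,\dots,P_k,Q)$ of this vector normalized by $n$ is, conditionally on $\mathcal G_k$, $\mathrm{Dir}(|\theta_1T_{\sigma_k+1}|,\dots,|\theta_{k-1}T_{\sigma_k+1}|,1,1)$-distributed. Since $U_k=P_k/(P_k+Q)$, while $U_j=P_j/(1-P_1-\dots-P_{j-1})$ is a function of $(P_1,\dots,P_{k-1})$ for $j<k$, the standard splitting property of the Dirichlet law gives that, conditionally on $\mathcal G_k$, the variable $U_k$ is $\mathrm{Beta}(1,1)=\mathrm{Uniform}[0,1]$-distributed and independent of $(U_1,\dots,U_{k-1})$. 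Hence $\E[f(U_1,\dots,U_{k-1})g(U_k)\mid\mathcal G_k]=\E[f(U_1,\dots,U_{k-1})\mid\mathcal G_k]\int_0^1 g$ for bounded measurable $f,g$; taking expectations and using the induction hypothesis on $(U_1,\dots,U_{k-1})$ shows $(U_1,\dots,U_k)$ are i.i.d.\ Uniform$[0,1]$, completing both inductions.

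I expect the main obstacle to be the careful verification of the decomposition in the first paragraph, and in particular of the conditional independence of the subtree processes given $(L_n)_{n\ge 1}$; this mirrors the analogous verification in the proof of Proposition~\ref{Independence U binary}, but must be stated cleanly for unbounded degrees. The only other delicate point is the base case, where conditioning on the history up to the birth of $\theta_k$ is what converts the random initial urn composition into a bona fide conditional Dirichlet limit from which the $\mathrm{Beta}(1,1)$ marginal independent of the earlier coordinates can be extracted.
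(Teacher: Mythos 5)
Your proof is correct, and it takes a genuinely different route from the paper's. The paper's decomposition is binary: it couples $T_n$ with a single two-colour P\'olya urn $(X_n)$ and two independent copies $T'$, $T''$ of the UA process, so that $\theta_1 T_n=T'_{X_n}$ while the remaining root-subtrees, shifted down by one index, reproduce $T''_{n-X_n}$. The independence of $U_1$, $(U_{1*u})_u$ and $(U_{(1+j)*v})_{j,v}$ then falls out of the independence of $(X_n),T',T''$; that $U_1\sim\mathrm{Uniform}[0,1]$ is the classical two-colour P\'olya limit; and a single induction closes the argument. You instead split $T_n$ at the root into all of its child-subtrees simultaneously, driven by the Chinese-restaurant-type sequence $(L_n)$, and then need a separate sub-argument for the joint law of $(U_i)_{i\ge1}$, which you obtain via a conditional $(k{+}1)$-colour urn started at the birth time of $\theta_k$ together with the Dirichlet aggregation/splitting property (so that $P_k/(P_k+Q)$ is $\mathrm{Beta}(1,1)$, independent of $(P_1,\dots,P_{k-1},P_k+Q)$, conditionally on the history). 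Both approaches are sound. The paper's binary "shift-by-one" coupling keeps the urn input minimal (two colours, the classical Uniform limit) and needs no conditioning on random stopping times, at the cost of a slightly less transparent reindexing trick. Your version is more symmetric and confronts the root's unbounded degree head-on, at the cost of a heavier base case and the bookkeeping required to verify independence of countably many sub-processes --- which, as you correctly flag, is the main thing that would need to be written out carefully.
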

\begin{proof}
Let $(X_n)_{n\geq 2}$ denote the number of black balls at time $n$ in a P\'olya urn that contains $1$ black ball and $1$ red ball at time $2$, so that $X_2=1$. Let also $(T_n')_{n\geq 1}$ and $(T_n'')_{n\geq 1}$ be two sequences of plane trees that both follow the uniform attachment model, i.e., they have the same law as $(T_n)_{n\geq 1}$. We assume that $(X_n)_{n\geq 2}$, $(T_n')_{n\geq 1}$, and $(T_n'')_{n\geq 1}$ are independent.
Then, for all $u \in \bbU$ and $i \ge 1$, let
\[
U_{u*i} = \lim_{n \to \infty}\I{u*i\in T_n'}\frac{|\theta_{u*i}T_n'|}{\sum_{j\geq i} |\theta_{u*j}T_n'|}
\quad\mbox{ and }\quad 
U_{u*i}'' = \lim_{n \to \infty}\I{u*i\in T_n''}\frac{|\theta_{u*i}T_n''|}{\sum_{j\geq i} |\theta_{u*j}T_n''|}
\]
be defined from $(T_n')$ and $(T_n'')$ as $U_u$ is defined from $(T_n)$ in (\ref{def_P_U_general-degree}). 

We now inductively construct another growing sequence $(\hat{T}_n)_{n\geq 1}$ of plane trees. Set $\hat{T}_1=\{\varnothing\}$ and $\hat{T}_2=\{\varnothing,1\}$. Then for any $n\geq 2$, define $\hat{T}_{n+1}$ as follows:
\begin{itemize}
\item[$\bullet$] if $X_{n+1}=X_n+1$ then $\hat{T}_{n+1}=\hat{T}_n\cup\{1*w'\}$, where $w'\in\bbU$ is the only node of $T_{X_n+1}'$ that is not in $T_{X_n}'$;
\item[$\bullet$] if $X_{n+1}=X_n$ then $\hat{T}_{n+1}=\hat{T}_n\cup\{(j+1)*w''\}$, where $j\geq 1$ and $w''\in\bbU$ are such that $j*w''$ is the only node of $T_{n-X_n+1}''$ that is not in $T_{n-X_n}''$.
\end{itemize}

For any $n\geq 2$, observe that by construction $|\theta_{1}\hat{T}_n|=X_n$ and $\theta_{1} \hat{T}_n=T_{X_n}'$, and that $\theta_{1+j}\hat{T}_n=\theta_{j} T_{n-X_n}''$ for all $j\geq 1$. This allows us to check by induction that $(\hat{T}_n)_{n\geq 1}$ follows the uniform attachment model, and thus we can assume without loss of generality that $\hat{T}_n=T_n$ for all $n\geq 1$. Then, for any $v\in\bbU$, we can write $\theta_{1*v}T_n=\theta_{v}\theta_1 T_n=\theta_v T_{X_n}'$ and similarly $\theta_{(1+j)*v}T_n=\theta_{j*v}T_{n-X_n}''$ for all $j\geq 1$. Since we know from standard properties of P\'olya urns that $X_n\to\infty$ and $n-X_n\to\infty$ almost surely, these identities and the expressions (\ref{def_P_U_general-degree}) entail that for all $u\in\bbU\setminus\{\varnothing\}$, $v\in\bbU$, and $j\geq 1$, we have
\[U_{1*u}=U_u',\quad U_{(1+j)*v}=U_{j*v}'',\quad U_1=\lim_{n\to\infty} \tfrac{1}{n}X_n\, .\]
By independence of $(X_n)$, $(T_n')$, and $(T_n'')$, it follows that the families $U_1$, $(U_{1*u})_{u\in\bbU\setminus\{\varnothing\}}$, and $(U_{(1+j)*v})_{j\geq 1,v\in\bbU}$ are independent. Since $(T_n)\stackrel{d}{=}(T_n')\stackrel{d}{=}(T_n'')$, it also follows that $(U_{1*u})_{u\in\bbU\setminus\{\varnothing\}}$ has the same law as $(U_u)_{u\in\bbU\setminus\{\varnothing\}}$, and $(U_{(1+j)*v})_{j\geq 1, v\in\bbU}$ has the same law as $(U_{j*v})_{j\geq 1,v\in\bbU}$. Moreover, it is a well-known fact about P\'olya urns that $U_1=\lim_{n \to \infty} X_n/n$ is Uniform$[0,1]$-distributed. An inductive argument concludes the proof.
\end{proof}

Define the \emph{weight} of $u=(u_1,\ldots,u_i)\in\mathbb{U}$ as the sum of its letters: namely, 
\begin{equation}
\label{weight_def}
\rw(u)=\sum_{i=1}^n u_i\, .
\end{equation}
This quantity exactly corresponds to the weight function defined in Section \ref{sect: sketch of proof}. Remark that $P_u$ is the product of exactly $\rw(u)$ independent Uniform$[0,1]$-distributed random variables.

\smallskip 
Building on Lemma \ref{Condition on size of subtrees}, we  next show that with high probability, there are only finitely many nodes that are more central than the root in a infinite plane tree generated by the uniform attachment model. This is because the weights $\rw(u)$ of such nodes must stay bounded. Moreover, thanks to this fact, we will avoid some technical issues by considering only finitely many convergences at a time.

\begin{lemma}\label{Size subtree} 
For any $n\geq 1$, let $T_n \sim \mathrm{UA}(n)$. 
Then for all integers $m\geq 1$ and for all $\varepsilon\in (0,1)$, it holds that
\[\limsup_{n\to\infty}\prob\big(\exists u\in\mathbb{U}\, :\, \rw(u)\geq m, |\theta_u T_n|\geq \varepsilon n\big)\leq \varepsilon^{-2}(2/3)^{m-1}. \]
\end{lemma}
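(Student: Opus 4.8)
The plan is to mirror the proof of Lemma~\ref{binary_size_subtrees}, replacing height by weight and the Beta/Dirichlet moment computations by computations with the Uniform$[0,1]$ factors $(U_u)$ from Proposition~\ref{Independence U in UA}. Write $L(m,\eps)$ for the $\limsup$ in the statement. First I would observe that every $u\in\bbU$ with $\rw(u)\ge m$ has a (unique) ancestor $v\preceq u$ with $\rw(v)\in\{m,m+1\}$: walking up from $u$ towards the root, the weight decreases by at least $1$ at each step (it decreases by exactly the last letter $u_i\ge 1$), so it must land in $\{m,m+1\}$ at some point, but in fact since the last letter of $v$ can be large the cleanest statement is that there is an ancestor $v$ with $\rw(v)\ge m$ and $\rw(\ola v)<m$; either way $|\theta_v T_n|\ge|\theta_u T_n|$. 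Since for each fixed weight value there are only finitely many nodes of that weight, and all nodes of weight $\ge m$ have such an ancestor among the (finitely many) nodes $v$ with $m\le \rw(v)$ and $\rw(\ola v)<m$, a union bound gives
\[
L(m,\eps)\le \sum_{\substack{v\in\bbU\\ \rw(v)\ge m,\ \rw(\ola v)<m}}\limsup_{n\to\infty}\prob\big(|\theta_v T_n|\ge \eps n\big).
\]

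Next I would use the almost sure convergence $|\theta_v T_n|/n\to P_v$ from \eqref{def_P_U_general-degree}, so that $\limsup_n\prob(|\theta_v T_n|\ge\eps n)\le\prob(P_v\ge\eps)$, together with the product formula \eqref{formula_P1} expressing $P_v$ as a product of factors of the form $U_{(v_1,\dots,v_i)}$ and $(1-U_{(v_1,\dots,v_{i-1},j)})$. The key point is that each such $P_v$ is a product of at least $\rw(v)-1\ge m-1$ independent factors, each of which is either a Uniform$[0,1]$ variable or $1$ minus one, hence each stochastically dominated by (indeed, distributed as) a Uniform$[0,1]$; crucially the index sets of the factors appearing for distinct $v$ of a given weight profile are disjoint in the appropriate sense, so within a single $P_v$ the factors are genuinely independent. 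Concretely, for any $v$ with $\rw(v)=w$, $P_v\preceq_{\mathrm{st}}\prod_{i=1}^{w-1}\tilde U_i$ with $\tilde U_i$ i.i.d.\ Uniform$[0,1]$ (one discards at most one factor, the very last $U_{(v_1,\dots,v_k)}$, bounded by $1$, and relabels the remaining $\ge w-1$ factors). Applying Markov's inequality to the square and using $\E[\tilde U_i^2]=1/3$ gives $\prob(P_v\ge\eps)\le \eps^{-2}(1/3)^{w-1}$.

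Then I would count: the number of $v\in\bbU$ with $\rw(v)=w$ is the number of compositions of $w$, namely $2^{w-1}$, and among these the number with the additional property $\rw(\ola v)<m$ is at most the number with $\rw(v)=w\ge m>\rw(\ola v)=w-v_k$, which for each $w\ge m$ is bounded by $2^{m-2}$ (compositions of $m-1$ into the first $k-1$ parts, times a free choice of last part — one has to be a little careful, but a crude bound like $2^{m-1}$ summed geometrically in $w$ is all that is needed). Combining,
\[
L(m,\eps)\le \eps^{-2}\sum_{w\ge m} (\#\{v:\rw(v)=w,\ \rw(\ola v)<m\})\,(1/3)^{w-1}
\le \eps^{-2}\,2^{m-1}\sum_{w\ge m}(1/3)^{w-1}\cdot\tfrac{1}{?},
\]
and tuning the constants — the target bound $\eps^{-2}(2/3)^{m-1}$ suggests that the right accounting is that the relevant $v$'s of weight $w$ contribute $2^{w-1}(1/3)^{w-1}=(2/3)^{w-1}$ and these sum, over $w\ge m$, to $3(2/3)^{m-1}$, which after absorbing the leftover factor via the ancestor-at-weight-exactly-$m$ reduction yields exactly $(2/3)^{m-1}$. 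The main obstacle, and the only place requiring genuine care, is precisely this bookkeeping: correctly identifying which finite family of ancestors to union-bound over so that each node of weight $\ge m$ is covered exactly once (or a controlled number of times), and checking that the factors inside each $P_v$ really are independent Uniform-dominated variables despite the shared-prefix structure in \eqref{formula_P1} — i.e.\ that $P_v$ for a composition $v$ of weight $w$ involves $w-1$ distinct, hence (by Proposition~\ref{Independence U in UA}) independent, factors $U_{\cdot}$ and $(1-U_{\cdot})$. Once that is set up, the estimate is a one-line Markov-plus-geometric-series computation as in Lemma~\ref{binary_size_subtrees}.
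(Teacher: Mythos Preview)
Your proposal has a genuine gap. You claim the family of ``first ancestors of weight $\ge m$'', namely
\[
S=\{v\in\bbU:\rw(v)\ge m,\ \rw(\ola v)<m\},
\]
is finite. It is not: for every integer $j\ge m$ the single-letter node $(j)$ lies in $S$, since $\rw((j))=j\ge m$ while $\rw(\varnothing)=0$. (Your first attempt, that $u$ has an ancestor of weight in $\{m,m+1\}$, already fails for the same reason: $u=(5)$ with $m=3$ has only $\varnothing$ and $(5)$ as ancestors.) Because $S$ is infinite, the union bound $\prob(\exists u:\rw(u)\ge m,|\theta_uT_n|\ge\eps n)\le\sum_{v\in S}\prob(|\theta_vT_n|\ge\eps n)$ is an infinite sum for each $n$, and you cannot push the $\limsup_{n\to\infty}$ inside it without further justification (reverse Fatou would need a summable, $n$-independent envelope for the probabilities, which you have not produced). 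Your numerics ---$2^{m-1}$ nodes in $S$ of each weight $w\ge m$, each giving $(1/3)^{w}$ after Markov, summed geometrically --- are actually fine once the interchange is justified, but as written the argument is incomplete precisely at the point you flag as ``the main obstacle''.

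The paper sidesteps this entirely with a different reduction. Rather than taking the \emph{ancestor} of $u$ that first reaches weight $\ge m$ (whose weight can overshoot arbitrarily), it takes the \emph{older sibling} at weight exactly $m$: if $u=(u_1,\dots,u_h)$ and $i$ is minimal with $u_1+\dots+u_i\ge m$, set $v=(u_1,\dots,u_{i-1})$ and $k=m-\rw(v)\in[1,u_i]$. Then $v*k$ has weight exactly $m$, and since $u_i\ge k$ one has $|\theta_uT_n|\le\sum_{j\ge k}|\theta_{v*j}T_n|$. The pairs $(v,k)$ with $\rw(v*k)=m$ form a genuinely finite set of size $2^{m-1}$, so the $\limsup$ commutes with the (finite) union bound. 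Moreover $\tfrac1n\sum_{j\ge k}|\theta_{v*j}T_n|\to P_v\prod_{j=1}^{k-1}(1-U_{v*j})$, which by \eqref{formula_P1} is a product of exactly $\rw(v)+k-1=m-1$ independent Uniform$[0,1]$ variables; Markov on squares then gives $\eps^{-2}(1/3)^{m-1}$ per term and hence $\eps^{-2}(2/3)^{m-1}$ overall. The older-sibling trick is the missing idea.
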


\begin{proof}
We define $L(m,\varepsilon)$ to be the $\limsup$ in the statement of the lemma. Note that for $u \in \mathbb{U}$, if $\rw(u) \geq m$, then there exist a node $v$ and an integer $k \ge 1$ such that $\rw(v*k) = m$ and $v*k$ is a (weakly) older sibling of some (weak) ancestor of $u$ (that is, there are $w\in \mathbb{U}$ and an integer $i\geq k$ such that $v*i*w=u$). This gives that $\sum_{i=k}^\infty|\theta_{v*i}T_n| \ge |\theta_u T_n|$, hence
\begin{equation*}
    L(m,\varepsilon) \le \limsup_{n\to\infty}\prob\left(\exists v\in\mathbb{U},k\geq 1\, :\, \rw(v*k)=m, \sum_{i=k}^\infty|\theta_{v*i}T_n|\geq \varepsilon n\right)\,.
\end{equation*}
Since there are only finitely many nodes $u$ with $\rw(u)=m$, we obtain that
\begin{equation*}
    L(m,\varepsilon) \le \sum_{\substack{v\in\bbU, k\geq 1 \\ \rw(v*k)=m}} \limsup_{n\to\infty}\prob\left(\sum_{i=k}^\infty\frac{|\theta_{v*i}T_n|}{n}\geq \varepsilon \right)\,.
\end{equation*}
We can rewrite the sum inside the probability as $\frac{1}{n}(|\theta_vT_n|-1 -\sum_{i=1}^{k-1} |\theta_{v*i}T_n|)$.  Letting $n \to \infty$, this random variable converges almost surely to $P_v-\sum_{i=1}^{k-1} P_{v*i}$, which is equal to $P_v\prod_{i=1}^{k-1}(1-U_{v*i})$ using (\ref{formula_P2}) and a telescoping argument. Moreover, by (\ref{formula_P1}), note that $P_v\prod_{i=1}^{k-1}(1-U_{v*i})$ is a product of $ \rw(v)+k-1=\rw(v*k) - 1 = m-1$ independent Uniform$[0,1]$-distributed random variables. By the Portmanteau theorem we can thus rewrite the above inequality as
\begin{equation*}
    L(m,\varepsilon) \le \big|\{u\in\bbU\, :\, \rw(u)=m\}\big|\cdot \prob(U_1U_2\cdots U_{ m-1} \ge \varepsilon),
\end{equation*}
where the $U_i$ are independent Uniform$[0,1]$-distributed random variables. It is straightforward to verify by induction that there are exactly $2^{m-1}$ nodes $u$ with $\rw(u) = m$, so we can further rewrite the inequality as
\begin{equation*}
    L(m,\varepsilon) \le 2^{m-1 }\prob(U_1U_2\cdots U_{m-1} \ge \varepsilon) = 2^{m-1}\prob((U_1U_2\cdots U_{m-1})^2 \ge \varepsilon^2).
\end{equation*}
 Using Markov's inequality and independence, we obtain that
 \[L(m,\varepsilon)\leq \frac{2^{m-1}}{\varepsilon^2}\E[U_1^2]^{m-1} = \frac{2^{m-1}}{\varepsilon^2 3^{m-1}}\, .\qedhere\]
\end{proof}

\subsection{Number of competitors in small subtrees}

The random variables $(P_u)_{u \in \bbU}$ define a (random) flow on $\bbU$; however, it is somewhat complex to work with,
so we bound it from above by another, better-behaved family of random variables $(Q_u)_{u \in \bbU}$.
The upper bound will be useful as will enforce an exponential decay in the values of ``younger siblings'', which will be simpler than but similar to the expected behaviour of $(P_u)_{u \in\bbU}$. We then use the bound to prove the main result of the section, a probabilistic analogue of Lemma~\ref{flow binary} which establishes an exponential upper tail bound  for the random variable $N_x(P)=|E_x(P)|$, where we recall that $E_x(P)=\{u \in \bbU: x \prod_{\varnothing \prec v \preceq u} \tfrac{P_v}{2} \ge 1\}$.
\smallskip

We begin with a technical lemma; we will then immediately use the lemma to construct the family of random variables $(Q_u)_{u \in \bbU}$.
\begin{lemma}
\label{Unif}
Let $(U_i)_{i \ge 1}$ be independent Uniform$[0,1]$-distributed random variables. Then there exists a sequence $(V_i)_{i \ge 1}$ of independent Uniform$[1/2,1]$-distributed random variables and a random bijection $\sigma:\N_1 \to \N_1$ such that with $P_i=U_i\cdot \prod_{j=1}^{i-1}(1-U_j)$, then almost surely $P_{\sigma(i)} \le \tfrac12\prod_{j=1}^{i-2}V_j$ for all $i \ge 2$.
\end{lemma}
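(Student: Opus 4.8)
The plan is to take $V_i:=\max(U_i,1-U_i)$ for each $i\ge 1$; a one-line computation gives that these are i.i.d.\ Uniform$[1/2,1]$ (indeed $\prob(\max(U_i,1-U_i)\le t)=2t-1$ for $t\in[1/2,1]$, and $V_i$ is a function of $U_i$ alone). The bijection $\sigma$ will be read off from the stick-breaking structure of the $P_i$. Write $M_k:=\prod_{j=1}^k(1-U_j)$ for the residual mass, so that $M_0=1$, $P_i=U_iM_{i-1}$, and $(M_k)_{k\ge 0}$ is non-increasing. Two elementary facts drive everything: \emph{(i)} since $1-U_j\le V_j$ for every $j$, we have $M_k\le\prod_{j=1}^kV_j$ for all $k\ge 0$; and \emph{(ii)} if $U_i\le 1/2$ then $P_i=U_iM_{i-1}\le\tfrac12 M_{i-1}$. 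Fact (ii) already yields $P_i\le\tfrac12\prod_{j<i}V_j$ for every index $i$ with $U_i\le 1/2$, so the only real issue is the ``large'' indices, where $U_i>1/2$ and $P_i$ may be almost all of $M_{i-1}$.

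Next I would single out $\mathcal L:=\{i\ge 1:U_i>1/2\}$. As the $U_i$ are independent with $\prob(U_i>1/2)=1/2$, the second Borel--Cantelli lemma makes $\mathcal L$ almost surely infinite; on that event write $\mathcal L=\{\ell_1<\ell_2<\cdots\}$ and set $\ell_0:=0$, so that the blocks $B_r:=\{\ell_{r-1}+1,\dots,\ell_r\}$ ($r\ge 1$) partition $\N_1$, each containing a single element of $\mathcal L$, namely its top element $\ell_r$. Define $\sigma$ on each $B_r$ to be the cyclic shift sending the bottom element $\ell_{r-1}+1$ to $\ell_r$ and every other $p\in B_r$ to $p-1$; this is a bijection of $B_r$, hence $\sigma$ is a bijection of $\N_1$ (put $\sigma:=\mathrm{id}$ on the null event that $\mathcal L$ is finite). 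The motivating picture — which also produces $\sigma$ recursively by conditioning on $U_1$ — is that whenever the piece $P_i$ split off at step $i$ is the larger of the two pieces of $M_{i-1}$ it exceeds $\tfrac12M_{i-1}$, hence is the unique mass larger than half of all the mass remaining from step $i$ onward, and so should be pushed to the front of its block, where the lemma asks for at most the trivial bound.

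It then remains to verify $P_{\sigma(p)}\le\tfrac12\prod_{j=1}^{p-2}V_j$ for every $p\ge 2$. If $p$ is not the bottom element of its block then $\sigma(p)=p-1\notin\mathcal L$ (it lies in $\{\ell_{r-1}+1,\dots,\ell_r-1\}$), so $U_{p-1}\le 1/2$, and facts (ii) then (i) give $P_{\sigma(p)}=U_{p-1}M_{p-2}\le\tfrac12 M_{p-2}\le\tfrac12\prod_{j=1}^{p-2}V_j$. If instead $p=\ell_{r-1}+1$ — which forces $r\ge 2$ once $p\ge 2$, so $\ell_{r-1}\in\mathcal L$ and $1-U_{\ell_{r-1}}<\tfrac12$ — then $\sigma(p)=\ell_r$, and using monotonicity of $(M_k)$ and fact (i),
\[
P_{\sigma(p)}=U_{\ell_r}M_{\ell_r-1}\le M_{\ell_r-1}\le M_{\ell_{r-1}}=(1-U_{\ell_{r-1}})M_{\ell_{r-1}-1}\le\tfrac12 M_{\ell_{r-1}-1}\le\tfrac12\prod_{j=1}^{\ell_{r-1}-1}V_j,
\]
which is the claim since $\ell_{r-1}-1=p-2$. (The base case $p=2$ is subsumed: the products are empty and both sub-cases give $P_{\sigma(2)}\le\tfrac12$.)

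I expect the only genuine obstacle to be conceptual — finding the right bijection. Once one sees that $\max(U_i,1-U_i)$ furnishes exactly the required Uniform$[1/2,1]$ factors and that large masses must be routed to the fronts of the blocks cut out by $\mathcal L$, the verification collapses to the two elementary facts above together with the monotonicity of the residual masses, with the decisive factor $\tfrac12$ always supplied either by a step with $U_i\le 1/2$ or by the single $\mathcal L$-step between $\ell_{r-1}$ and $\ell_{r-1}+1$. The one thing to keep in mind is that the statement is only almost sure, which is precisely what lets us discard the null event on which $\mathcal L$ is finite.
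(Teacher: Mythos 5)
Your proof is correct, and it takes a genuinely different route from the paper's. The paper uses only the \emph{first} index $K=\inf\{i\ge 1: U_i>1/2\}$, performs a single cyclic shift on $\{1,\dots,K\}$ (and the identity beyond $K$), and manufactures the uniforms as $V_i=1-U_i$ for $i\le K-1$ and $V_i=1-U_{i+1}/2$ for $i\ge K$; the price is a nontrivial verification, by decomposing over the values of $K$, that these $V_i$ are jointly independent and Uniform$[1/2,1]$. You instead take $V_i=\max(U_i,1-U_i)$, for which independence and the marginal law are immediate, and pay for it with a more elaborate bijection: a cyclic shift inside each of the infinitely many blocks cut out by the almost surely infinite set $\mathcal{L}=\{i:U_i>1/2\}$. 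Your case analysis checks out: when $p$ is not the bottom of its block, $\sigma(p)=p-1\notin\mathcal{L}$ and the factor $\tfrac12$ comes from $U_{p-1}\le \tfrac12$; when $p=\ell_{r-1}+1$ is the bottom of its block, the factor comes from $1-U_{\ell_{r-1}}<\tfrac12$ together with monotonicity of the residual masses $M_k$, and the index bookkeeping ($\ell_{r-1}-1=p-2$) is right. In both cases the remaining factors are absorbed via $1-U_j\le V_j$, and the null event where $\mathcal{L}$ is finite is correctly discarded. The trade-off is roughly a wash in length, but your construction of the $(V_i)$ is cleaner --- each $V_i$ is a function of $U_i$ alone, with no stopping-time conditioning --- which also makes the independence across siblings needed later (in the analogue of Corollary~\ref{corr}) entirely transparent; the paper's single-shift bijection is, in exchange, simpler to state.
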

\begin{proof} 
First, let $K = \inf \big\{i \ge 1 : U_i > \frac{1}{2}\big\}$ which is almost surely finite. Then for $i \ge 1$ set
\begin{align*}
    V_i = \begin{cases}
        1 - U_i &\text{ if } i \le K-1, \\
        1 - \frac{U_{i+1}}{2} &\text{ if } i \ge K ,
    \end{cases}
\end{align*}
and define $\sigma:\N_1\to\N_1$ by
\begin{align*}
    \sigma(i) = \begin{cases}
        K &\text{ if } i=1, \\
        i-1 &\text{ if } 2 \le i \le K, \\
        i &\text{ if } i \ge K+1.
    \end{cases}
\end{align*}
For $2 \le i \le K$, it holds that $U_{i-1} \le \frac{1}{2}$ by construction, so
\begin{align*}
    P_{\sigma(i)} =P_{i-1}= U_{i-1}\prod_{j=1}^{i-2}(1-U_j) \le \frac{1}{2}\prod_{j=1}^{i-2}V_j\,.
\end{align*}
Also, for $i \ge K+1$, since $U_K > 1/2$, we have that
\begin{align*}
    P_{\sigma(i)}  = P_i &= U_i \left(\prod_{j=1}^{K-1}(1-U_j)\right)(1-U_K)\left(\prod_{j=K+1}^{i-1}(1-U_j)\right) \\
    & \leq \left(\prod_{j=1}^{K-1}V_j\right) \cdot \frac{1}{2}\cdot \left(\prod_{j=K+1}^{i-1} V_{j-1} \right)\,.
\end{align*}
Hence, for all $i \ge 2$, we indeed have that $P_{\sigma(i)} \leq \frac{1}{2}\prod_{j=1}^{i-2}V_j$ almost surely.
\smallskip

We now show that the $V_i$'s are independent and Uniform$[1/2,1]$-distributed. Let $i\geq 1$, let $f_1,\ldots,f_i:\R\to\R$ be bounded measurable functions, and let $W\sim \mathrm{Uniform}[1/2,1]$.  It suffices to show that $\E\left[\prod_{j=1}^i f_j(V_j)\right] = \prod_{j=1}^i \E[f_j(W)]$.  We write the left-hand side as a sum over every possible value of $K$ using indicators:
\begin{equation*}
    \E\left[\prod_{j=1}^i f_j(V_j)\right]\! = \sum_{k \geq 1} \E\left[\left(\prod_{j=1}^{k-1}\I{U_j\leq \frac{1}{2}}f_j(1-U_j)\right)\!\cdot\I{U_k>\frac{1}{2}}\!\cdot\left(\prod_{j=k}^if_j\left(1-\frac{U_{j+1}}{2}\right)\right)\right].
\end{equation*}
Notice that each factor in the product on the right is a function of a different random variable, so since the $(U_i)_{i\ge 1}$ are \textsc{iid}, we obtain
\begin{equation*}
    \E\bigg[\prod_{j=1}^i f_j(V_j)\bigg] \!= \sum_{k \geq 1} \bigg(\prod_{j=1}^{k-1}\E\bigg[\I{U_1\leq \frac{1}{2}}f_j(1-U_1)\bigg]\bigg)\!\cdot\prob\big(U_1>\tfrac{1}{2}\big)\cdot\!\bigg(\prod_{j=k}^i\E\bigg[ f_j(1-U_1/2)\bigg]\bigg).
\end{equation*}
Finally, we observe that $W$ has the same law as $1-U_1/2$, and that the conditional law of $1-U_1$ given that $U_1\leq 1/2$ is also Uniform$[1/2,1]$-distributed. This entails that $\E\left[\prod_{j=1}^i f_j(V_j)\right]= \sum_{k \ge 1} 2^{-k}\prod_{j=1}^i\E[f_j(W)]$ which implies the desired identity.
\end{proof}

\begin{corollary}\label{corr}
There exists a family $(V_u)_{u\in\bbU\setminus\{\varnothing\}}$ of \textsc{iid}~Uniform$[1/2,1]$-distributed random variables and a random bijection $\sigma:\bbU\to\bbU$ such that $\sigma(\varnothing)=\varnothing$ and $\sigma(\ola{u})=\ola{\sigma(u)}$ for all $u\in\bbU\setminus\{\varnothing\}$, and $P_{\sigma(u)}\leq Q_u$ for all $u\in\bbU$ almost surely, where $(Q_u)_{u\in\bbU}$ is inductively defined by
\begin{itemize}
    \item $Q_\varnothing=1$;
    \item $Q_{u*1}=Q_u$ for all $u\in\bbU$;
    \item $Q_{u*i}=Q_u\cdot \frac{1}{2}\prod_{j=1}^{i-2} V_{u*j}$ for all $i\geq 2$ and all $u\in\bbU$.
\end{itemize}
\end{corollary}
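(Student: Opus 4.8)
The plan is to bootstrap from Lemma~\ref{Unif}, which handles a single node's children, to the whole tree by gluing together independent copies of the construction at every node of $\bbU$. First I would introduce, for each $u\in\bbU$, an independent sequence $(U_{u*i})_{i\ge 1}$ of \textsc{iid} Uniform$[0,1]$ random variables (these are exactly the variables of Proposition~\ref{Independence U in UA}, reindexed so that the children of $u$ get their own fresh sequence). Applying Lemma~\ref{Unif} to this sequence produces a sequence $(V_{u*i})_{i\ge 1}$ of \textsc{iid} Uniform$[1/2,1]$ variables and a random bijection $\sigma_u:\N_1\to\N_1$ such that, writing $P_{u*i}=U_{u*i}\prod_{j<i}(1-U_{u*j})$ for the ``local proportions'' at $u$, we have $P_{u*\sigma_u(i)}\le \frac12\prod_{j=1}^{i-2}V_{u*j}$ for all $i\ge 2$, and trivially $P_{u*\sigma_u(1)}\le 1 = Q_{u*1}/Q_u$ since $P_{u*1}=U_{u*1}\le 1$. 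Crucially, since distinct nodes $u$ get disjoint blocks of the $U$-variables, all the families $((V_{u*i})_{i\ge 1})_{u\in\bbU}$ and bijections $(\sigma_u)_{u\in\bbU}$ are mutually independent, and each $(V_{u*i})_{i\ge1}$ is \textsc{iid} Uniform$[1/2,1]$; relabelling, $(V_v)_{v\in\bbU\setminus\{\varnothing\}}$ is \textsc{iid} Uniform$[1/2,1]$ as required.

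Next I would define the global bijection $\sigma:\bbU\to\bbU$ by recursion on height: set $\sigma(\varnothing)=\varnothing$, and if $\sigma(u)=w$ has been defined, set $\sigma(u*i)=w*\sigma_u(i)$ for each $i\ge 1$. An immediate induction shows $\sigma$ is a bijection of $\bbU$ onto itself which preserves the parent relation, i.e.\ $\sigma(\ola{u})=\ola{\sigma(u)}$, and which is height-preserving. Then I would prove $P_{\sigma(u)}\le Q_u$ for all $u$ by induction on $\hgt(u)$. The base case $u=\varnothing$ is $1\le 1$. For the inductive step, write $u=v*i$, so $\sigma(u)=\sigma(v)*\sigma_v(i)$, and use the product formula~\eqref{formula_P2}, which gives $P_{\sigma(v)*\sigma_v(i)} = P_{\sigma(v)}\cdot \big(U_{\sigma(v)*\sigma_v(i)}\prod_{j<\sigma_v(i)}(1-U_{\sigma(v)*j})\big)$; the parenthesised factor is exactly the local proportion ``$P_{\sigma_v(i)}$'' of the $U$-sequence attached to node $\sigma(v)$. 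The one subtlety here is a bookkeeping point: the local proportions governing the children of $\sigma(v)$ are computed from the $U$-block attached to $\sigma(v)$, whereas Lemma~\ref{Unif} was applied to the block attached to $v$. The clean way around this is to attach the $U$-blocks (and hence run Lemma~\ref{Unif}) at the \emph{image} nodes rather than the source nodes --- that is, define $\sigma_w$ and $(V_{w*i})$ for each $w\in\bbU$ directly, and build $\sigma$ so that the children of $v$ are mapped into the children of $\sigma(v)$ via $\sigma_{\sigma(v)}$. With that convention the local proportion at $\sigma(v)$ indexed by $\sigma_{\sigma(v)}(i)$ is bounded by $\frac12\prod_{j=1}^{i-2}V_{\sigma(v)*j}$ when $i\ge2$ and by $1$ when $i=1$. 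Combining with the inductive hypothesis $P_{\sigma(v)}\le Q_v$ gives, for $i\ge2$, $P_{\sigma(v*i)}\le Q_v\cdot\frac12\prod_{j=1}^{i-2}V_{\sigma(v)*j} = Q_{v*i}$ after relabelling $V_{\sigma(v)*j}$ as $V_{v*j}$ (both are just \textsc{iid} Uniform$[1/2,1]$, so this relabelling is harmless, or one simply indexes the $V$'s by image nodes throughout); and for $i=1$, $P_{\sigma(v*1)}\le Q_v = Q_{v*1}$.

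The main obstacle is precisely this indexing/measurability consistency: one must set things up so that the family $(V_u)_{u}$ that appears in the final statement really is \textsc{iid} and independent of nothing problematic, while simultaneously the bijection $\sigma$ --- which is itself random and depends on the $U$'s --- relates $P_{\sigma(u)}$ to $Q_u$ built from those very $V$'s. The resolution, as sketched above, is to run the Lemma~\ref{Unif} construction independently at every node of $\bbU$ (using the independence of the disjoint $U$-blocks supplied by Proposition~\ref{Independence U in UA}), define $\sigma$ recursively from the per-node bijections, and index the resulting $V$'s consistently (say, by the image node, or equivalently note any fixed relabelling of an \textsc{iid} family is still \textsc{iid}). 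Everything else --- that $\sigma$ is a parent-preserving bijection, that the $Q$'s satisfy the stated recursion, and the inductive inequality --- is routine once the framework is fixed. I would also remark that $\sigma(\varnothing)=\varnothing$ and $\sigma(\ola u)=\ola{\sigma(u)}$ hold by construction, completing the verification of all clauses in the statement.
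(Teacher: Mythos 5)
Your proposal is correct and follows essentially the same route as the paper: apply Lemma~\ref{Unif} independently at each node using the independent blocks supplied by Proposition~\ref{Independence U in UA}, build $\sigma$ recursively from local bijections attached to the \emph{image} nodes, and conclude $P_{\sigma(u)}\le Q_u$ by induction. The one point to tighten is your parenthetical that the relabelling of the $V$'s is harmless because ``any fixed relabelling of an \textsc{iid} family is still \textsc{iid}'' --- here $\sigma$ is random and built from the same underlying randomness, so one instead argues, as the paper does, that the restriction of $\sigma$ to nodes of height at most $h$ is measurable with respect to the variables attached to nodes of height at most $h-1$, whence the reindexed family $(V_{u*i})=(V'_{\sigma(u)*i})$ is \textsc{iid} Uniform$[1/2,1]$ by induction on height.
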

\begin{proof}
For $u \in \bbU$ write $\mathbf{U}_u=(U_{u*i})_{i \ge 1}$. 
From Proposition~\ref{Independence U in UA} we have that the random vectors $(\mathbf{U}_u)_{u \in \bbU}$ are independent, and within each vector the entries $(U_{u*i})_{i\geq 1}$ are \textsc{iid}~and Uniform$[0,1]$-distributed. For each $u \in \bbU$, Lemma~\ref{Unif} thus gives us a sequence $(V_{u*i}')_{i \ge 1}$ of independent Uniform$[1/2,1]$-distributed random variables and a bijection $\sigma_u:\N_1 \to \N_1$ such that $U_{u*\sigma_u(i)}\prod_{j=1}^{\sigma_u(i)-1}(1-U_{u*j}) \le \frac{1}{2}\cdot\prod_{j=1}^{i-2}V_{u*j}'$ for all $i\geq 2$ almost surely. 
Moreover, the independence of the vectors $(\mathbf{U}_u)_{u \in \bbU}$ implies that the sequences $(\sigma_u(i),V_{u*i}')_{i\geq 1}$ for $u\in\bbU$ can be chosen to be independent. 
\smallskip

Now, let us define $\sigma:\bbU \to \bbU$ inductively by setting $\sigma(\varnothing) = \varnothing$ and $\sigma(u*i) = \sigma(u) * \sigma_{\sigma(u)}(i)$ for all $u\in\bbU$ and $i\geq 1$, so that $\overleftarrow{\sigma(u)} = \sigma(\overleftarrow{u})$ for all $u \in \bbU\setminus\{\varnothing\}$. It is then easy to check that $\sigma$ is a bijection on $\bbU$ which preserves the height, i.e.~$\hgt(\sigma(u))=\hgt(u)$ for all $u\in\bbU$. Furthermore, for any $h\geq 0$, the restriction of $\sigma$ to $\{v\in\bbU:\hgt(v)\leq h\}$ is measurable with respect to the family $(\sigma_v(i),V_{v*i}'\, ;\, i\geq 1,v\in\bbU,\hgt(v)\leq h-1)$. Setting $V_{u*i}=V_{\sigma(u)*i}'$ for all $u\in\bbU$ and $i\geq 1$, it follows via induction on the height that the random variables $(V_u)_{u\in\bbU\setminus\{\varnothing\}}$ are independent Uniform$[1/2,1]$-distributed random variables as desired.
\smallskip

From (\ref{formula_P2}), for all $u\in\bbU$ and $i \ge 2$, we see that $P_{\sigma(u*1)}=P_{\sigma(u)*\sigma_{\sigma(u)}(1)}\leq P_{\sigma(u)}$ and that
\[P_{\sigma(u*i)} = P_{\sigma(u) * \sigma_{\sigma(u)}(i)} \le P_{\sigma(u)}\cdot \frac{1}{2}\cdot\prod_{j=1}^{i-2}V_{\sigma(u)*j}'=P_{\sigma(u)}\cdot \frac{1}{2}\cdot\prod_{j=1}^{i-2}V_{u*j}=P_{\sigma(u)}\frac{Q_{u*i}}{Q_u}\, .\] A straightforward induction then yields that $P_{\sigma(u)}\leq Q_u$ for all $u\in\bbU$ almost surely.
\end{proof}

By contrast with regular trees, in the uniform attachment case, we cannot invoke a deterministic bound such as Proposition~\ref{deterministicBound}. Nonetheless, we can provide a probabilistic replacement of Proposition~\ref{deterministicBound} that relies on the distribution of the family $P=(P_u)_{u\in\bbU}$ by using Corollary~\ref{corr}. This result represents the main goal of the present section and will be crucial to prove Theorem~\ref{thm:mainua} later. Recall from (\ref{def_E_and_N}) that for $f:\bbU\to [0,\infty)$ and $x>0$, we write $E_x(f)=\big\{u\in \mathbb{U}: x\prod_{\varnothing\prec v\preceq u}\frac{f(v)}{2}\geq 1\big\}$ and  $N_x(f)=|E_x(f)|$.

\begin{proposition}\label{main}
There exists a constant $c>0$ such that for any $x \geq 1$ and $y\geq 1$,
\[\prob\big(N_x(P)\geq y \exp(c+c\sqrt{\log x}) \big)\leq 2e^{-y}.\]
\end{proposition}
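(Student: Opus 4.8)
The plan is to reduce the problem to a statement about the auxiliary flow $Q$ constructed in Corollary~\ref{corr}, then to bound $N_x(Q)$ using the deterministic machinery of Proposition~\ref{deter_geo_flow} together with the exponential concentration of products of the $V_u$'s around their typical exponential decay rate. First, since $\sigma$ from Corollary~\ref{corr} is a height-preserving bijection with $\sigma(\ola u)=\ola{\sigma(u)}$, it maps ancestral paths to ancestral paths, so for any $u$ we have $\prod_{\varnothing\prec v\preceq u}\tfrac{P_{\sigma(v)}}{2}=\prod_{\varnothing\prec v'\preceq\sigma(u)}\tfrac{P_{v'}}{2}$; combined with $P_{\sigma(v)}\le Q_v$ this gives $\sigma(E_x(P))\subseteq E_x(Q)$ in the sense that $E_x(P)$ and a subset of the nodes on which $\prod \tfrac{Q_v}{2}\ge 1/x$ are in bijection, hence $N_x(P)\le N_x(Q)$ almost surely. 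So it suffices to prove the tail bound for $N_x(Q)$.

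Next I would analyze $N_x(Q)$ directly from the definition of $Q$. Writing out $\prod_{\varnothing\prec v\preceq u}\tfrac{Q_v}{2}$ along a path $u=(u_1,\dots,u_h)$ and using $Q_{u*1}=Q_u$ and $Q_{u*i}=Q_u\cdot\tfrac12\prod_{j=1}^{i-2}V_{u*j}$, one gets an explicit expression: each step $i\mapsto u*u_i$ with $u_i\ge 2$ contributes a factor $\tfrac12\cdot\tfrac12\prod_{j=1}^{u_i-2}V_{u*j}$, while a step with $u_i=1$ contributes only $\tfrac12$. Taking $-\log_2$, and recalling each $V\in[1/2,1]$ so that $-\log_2 V\in[0,1]$, membership in $E_x(Q)$ becomes a linear inequality: the number of $2$-or-larger coordinates, plus $h$ (from the $\tfrac12$ factors), plus $\sum_i\sum_{j=1}^{(u_i-2)\vee 0}(-\log_2 V_{u*\cdots})\le\log_2 x$. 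The term $\sum_{j=1}^{u_i-2}(-\log_2 V)$ is stochastically at least a constant multiple of $u_i-2$ in expectation, so in expectation $N_x(Q)$ behaves like the deterministic count controlled by Proposition~\ref{deter_geo_flow} with $\alpha$ a fixed constant $>1$, giving the $\exp(c+c\sqrt{\log x})$ scale. To upgrade expectation to an exponential tail, I would expose the randomness node-by-node along a breadth-first exploration and set up a supermartingale (or directly estimate $\E[\lambda^{N_x(Q)}]$ for a suitable $\lambda>1$): conditionally on the $V$'s at heights $<h$, the contribution of each height-$h$ node to $E_x(Q)$ is controlled, and the Laplace transform of $\sum_{j}(-\log_2 V_{u*j})$ decays geometrically in the number of children, which tames the branching. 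A cleaner route is to note $N_x(Q)=\sum_{u}\I{u\in E_x(Q)}$, bound $\E[N_x(Q)]\le\tfrac12\exp(c+c\sqrt{\log x})=:\mu$ via the deterministic estimate applied to the ``typical'' geometric flow $\gamma_\alpha$ after replacing each random $V$-product by its mean (using that $E_x(Q)$ depends monotonically on the $V$'s so a change-of-measure / domination argument applies), and then argue that $N_x(Q)$ concentrates: since adding the contributions of all descendants of a fixed node gives an independent-across-siblings decomposition, $N_x(Q)$ is dominated by a sum of a geometric number of iid-ish pieces, yielding $\prob(N_x(Q)\ge y\mu)\le 2e^{-y}$ after adjusting constants.

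The main obstacle I anticipate is precisely this last step: turning the (easy) bound on $\E[N_x(Q)]$ into an exponential upper tail with the clean form $2e^{-y}$, uniformly in $x$. The difficulty is that $N_x(Q)$ is a sum over a random tree of dependent indicators, and the ``width'' of $E_x(Q)$ at large depth is itself random; a naive union bound over the $2^h$ potential nodes at height $h$ loses too much. The right tool is likely a recursive/self-similar estimate: let $M(z)$ denote (a stochastic upper bound on) $N_z(Q)$ started from an arbitrary node, observe that $M(z)$ satisfies a distributional fixed-point inequality $M(z)\preceq_{\mathrm{st}} \I{z\ge 2}+\sum_{i\ge 1}M_i(z\cdot\tfrac12\prod_{j<i-1}V_j)$ with the $M_i$ iid copies independent of the $V$'s, and then bound the moment generating function $\E[e^{tM(z)}]$ by induction on $\lfloor\log_2 z\rfloor$, choosing $t\asymp 1/\sqrt{\log x}$ so that the $\sqrt{}$-scale of the mean translates into the stated tail. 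Carrying out this induction — verifying the base case, controlling the geometric branching so the recursion closes, and extracting the explicit constants $c$ and the prefactor $2$ — is where the real work lies; everything upstream (the reduction to $Q$, the logarithmic linearization, the appeal to Proposition~\ref{deter_geo_flow} and the Hardy–Ramanujan/Erd\H{o}s bound) is routine.
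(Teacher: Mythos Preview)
Your reduction $N_x(P)\le N_x(Q)$ via Corollary~\ref{corr} is correct and matches the paper exactly. The gap is everything after that: your three suggested routes for the exponential tail (breadth-first supermartingale; bound $\E[N_x(Q)]$ then argue concentration as ``a sum of a geometric number of iid-ish pieces''; MGF induction on a distributional fixed point) are all heuristic. None of them is carried out, and each has a real obstacle: the first-child relation $Q_{u*1}=Q_u$ means the tree has an infinite non-branching spine along which no decay comes from $Q$, so a naive MGF recursion at scale $t\asymp 1/\sqrt{\log x}$ does not obviously close; and ``replace each $V$ by its mean'' does not control $\prob(u\in E_x(Q))$, which is a threshold event on a product, not a linear functional.

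The paper's proof supplies a concrete idea you are missing. It \emph{discretizes} $Q$ to a coarser flow $\Gamma$: set $\tau_{v,i}$ to be the $i$th time among $j=1,2,\ldots$ that $V_{v*j}\le 3/4$ (so the increments $\tau_{v,i}-\tau_{v,i-1}$ are independent Geometric$(1/2)$), and let $\Gamma_{v*i}=\Gamma_v(3/4)^{j-1}$ whenever $\tau_{v,j-1}+1<i\le\tau_{v,j}+1$ (and $\Gamma_{v*1}=\Gamma_v$). Then $Q_u\le\Gamma_u$, so $N_x(Q)\le N_x(\Gamma)$. The point of the discretization is that there is now an explicit ``collapse'' map $\chi:\bbU\to\bbU$ with $\chi(v*i)=\chi(v)*j$ for $i$ in the $j$th block, under which $(\Gamma_w)_{w\preceq v}=(\gamma(w'))_{w'\preceq\chi(v)}$ for the \emph{deterministic} flow $\gamma=\gamma_{4/3}$. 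Hence $E_x(\Gamma)=\chi^{-1}(E_x(\gamma))$ and $N_x(\Gamma)=\sum_{u\in E_x(\gamma)}Z_u$ with $Z_u=|\chi^{-1}(u)|$. Each $Z_u$ is the size of generation $\rr(u)$ in a Galton--Watson process with Geometric$(1/2)$ offspring, so $Z_u\sim\mathrm{Geometric}(2^{-\rr(u)})$; and a short deterministic lemma (Lemma~\ref{Bound r(u)}) gives $\rr(u)\le c_1\sqrt{\log x}$ for all $u\in E_x(\gamma)$. Now the tail is immediate: $\prob(Z_u\ge z\cdot 2^{c_1\sqrt{\log x}})\le 2e^{-z}$, and a union bound over the $N_x(\gamma)\le\exp(c_2+c_2\sqrt{\log x})$ nodes of $E_x(\gamma)$, with $z=y+\log N_x(\gamma)$, yields $\prob(N_x(P)\ge y\exp(c+c\sqrt{\log x}))\le 2e^{-y}$. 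The crucial step you did not find is this discretize-then-collapse construction, which replaces the hard concentration problem for $N_x(Q)$ by a finite union bound over a deterministic index set with explicitly geometric fibers.
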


We now introduce some notation and prove a technical lemma that will be used to show Proposition~\ref{main}.  
In what follows, we use the function $\gamma=\gamma_{4/3}$ defined in Proposition~\ref{deter_geo_flow}; recall this function is given by $\gamma(u)=(4/3)^{-\sum_{i=1}^h (u_i-1)}$ for $u=(u_1,\ldots,u_h) \in \bbU$. Also, for $u = (u_1, \dots, u_h) \in \mathbb{U}$, let $\rr(u) = \sum_{i = 1}^{l}\I{u_i \geq 2}$ be the number of nodes on the ancestral path of $u$ which are not oldest children.

\begin{lemma}\label{Bound r(u)}
There is a constant $c>0$ such that $\sup_{u\in E_x(\gamma)}\rr(u)\leq c\sqrt{\log x}$ for all $x\geq 1$.
\end{lemma}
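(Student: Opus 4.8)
The plan is to unwind the definition of $E_x(\gamma)$ via the explicit product formula already established in the proof of Proposition~\ref{deter_geo_flow}, and then to read off the bound on $\rr(u)$ from a short combinatorial minimization.

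Set $\alpha=4/3$ and recall from (\ref{product_deter_geo_flow}) that for $u=(u_1,\dots,u_h)\in\bbU$,
\[
\prod_{\varnothing \prec v \preceq u}\frac{\gamma(v)}{2} \;=\; \frac{1}{2^h}\prod_{i=1}^h \alpha^{-(h+1-i)(u_i-1)}\,.
\]
Taking logarithms, $u\in E_x(\gamma)$ holds if and only if
\[
h\log 2 + (\log\alpha)\sum_{i=1}^h (h+1-i)(u_i-1) \;\le\; \log x\,,
\]
and since $h\log 2\ge 0$ this forces $(\log\alpha)\sum_{i=1}^h(h+1-i)(u_i-1)\le \log x$ for every $u\in E_x(\gamma)$.

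The remaining step is to bound $\sum_{i=1}^h(h+1-i)(u_i-1)$ below in terms of $\rr(u)$. Writing $S=\{i\in[h]:u_i\ge 2\}$, so that $|S|=\rr(u)$ and $u_i-1\ge\I{i\in S}$ for all $i$, we get $\sum_{i=1}^h(h+1-i)(u_i-1)\ge \sum_{i\in S}(h+1-i)$. As $i$ ranges over $[h]$ the weights $h+1-i$ range over $\{1,\dots,h\}$, so $\sum_{i\in S}(h+1-i)$ is minimized over $\rr(u)$-element subsets $S$ by taking the $\rr(u)$ largest indices, which gives $\sum_{i\in S}(h+1-i)\ge 1+2+\cdots+\rr(u)\ge \rr(u)^2/2$. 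Combining the two displays yields $(\log\alpha)\rr(u)^2/2\le \log x$ for all $u\in E_x(\gamma)$, i.e.\ $\rr(u)\le \sqrt{2/\log(4/3)}\,\sqrt{\log x}$, so $c=\sqrt{2/\log(4/3)}$ works (the case $x=1$, where one checks directly that $E_1(\gamma)=\{\varnothing\}$ and $\rr(\varnothing)=0$, being trivial).

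There is no genuine obstacle here: once the product formula from Proposition~\ref{deter_geo_flow} is in hand the lemma is essentially a one-line estimate. The only point requiring a moment's thought is the combinatorial minimization — one must notice that the steps counted by $\rr(u)$ contribute least to the exponent when they sit at the very bottom of the ancestral path (large $i$, small weight $h+1-i$), and that even in this worst case they force a lower bound on the exponent that is quadratic in $\rr(u)$, which is precisely what produces the $\sqrt{\log x}$ scaling.
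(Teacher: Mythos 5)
Your proof is correct and follows essentially the same route as the paper's: both start from the product identity (\ref{product_deter_geo_flow}) with $\alpha=4/3$ and lower-bound the exponent $\sum_{i=1}^h(h+1-i)(u_i-1)$ by $\tfrac12\rr(u)(\rr(u)+1)$ via the observation that the indices with $u_i\ge 2$ receive weights at least $1,2,\dots,\rr(u)$. Your combinatorial-minimization phrasing is just a restatement of the paper's ``if $i$ is the $k$'th index with $u_i\ge 2$ then $h+1-i\ge \rr(u)+1-k$'' step, and you arrive at the same constant $c=\sqrt{2/\log(4/3)}$.
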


\begin{proof}
Fix $u=(u_1,\ldots,u_h) \in \bbU$. Notice that if $i$ is the $k$'th index for which $u_i \ge 2$, then $h+1-i \ge \rr(u)+1-k$. This entails that $\sum_{i=1}^h (h+1-i)(u_i-1) \ge \sum_{k=1}^{\rr(u)} (\rr(u)+1-k)=\tfrac{1}{2}\rr(u)(\rr(u)+1)$. Therefore, it follows from the identity (\ref{product_deter_geo_flow}) with $\alpha=4/3$  that 
\[\prod_{\varnothing\prec v \preceq u} \frac{\gamma(v)}{2} 
= \frac{1}{2^h}\prod_{i=1}^{h} \left(\frac34\right)^{(h+1-i)(u_i-1)} \le \left(\frac34\right)^{\rr(u)(\rr(u)+1)/2}\, ,\]
so if $u \in E_x(\gamma)$ then $(4/3)^{\rr(u)^2/2} \le x$, and thus $\rr(u) \le \sqrt{2\log_{4/3} x}$. The result follows.
\end{proof}

\begin{proof}[Proof of Proposition~\ref{main}]
Our strategy for bounding $N_x(P)$ is to compare it with $N_x(\gamma)$, where $\gamma=\gamma_{4/3}$ is as above and in Proposition~\ref{deter_geo_flow}. Let $Q=(Q_u)_{u\in\bbU}$ and $\sigma:\bbU\to\bbU$ be as in Corollary~\ref{corr}.
From the identity $\sigma(\overleftarrow{u})=\overleftarrow{\sigma(u)}$, we observe that the ancestors of $\sigma(u)$ are the nodes $\sigma(v)$ with $v\preceq u$. Since $P_{\sigma(u)} \le Q_u$, it follows that if $\sigma(u)\in E_x(P)$ then $u\in E_x(Q)$, so $N_x(P)\leq N_x(Q)$ almost surely.
\smallskip

For $v\in\bbU$, let $\tau_{v,1}=0$ and, for all $i\geq2$, let  $\tau_{v,i}=\inf\{j>\tau_{v,i-1}:V_{v*j}\leq \frac{3}{4}\}$, where $(V_{u})_{u \in \bbU\setminus\{\varnothing\}}$ are given by Corollary~\ref{corr}. 
Note that the increments $\tau_{v,i}-\tau_{v,i-1}$ are Geometric$(1/2)$-distributed.
Moreover, it follows from the independence of the random variables 
$(V_u)_{u \in \bbU\setminus\{\varnothing\}}$ that 
the increments $(\tau_{v,i}-\tau_{v,i-1}\, ;\, v \in\bbU,i \ge 2)$ are also independent. Now define random variables $(\Gamma_{u})_{u \in \bbU}$ inductively by setting $\Gamma_\varnothing=1$ and, for $v \in \bbU$,
\begin{itemize}
    \item $\Gamma_{v*1}=\Gamma_v$; and
    \item for each $i,j\geq 2$, $\Gamma_{v*i} = \Gamma_v\left(\frac{3}{4}\right)^{j-1}$ if and only if $\tau_{v,j-1} +1< i \leq \tau_{v,j}+1$. 
\end{itemize}
We have $Q_{v*i}\leq Q_v\cdot (3/4)^{j-1}$ for all $i>\tau_{v,j-1}+1$: indeed, $j-2$ factors $3/4$ come from the fact that $V_{v*\tau_{v,k}}\leq 3/4$ for each $2\leq k\leq j-1$, and the last factor $3/4$ comes from the $1/2$ involved in the recursive expression of $Q_{v*i}$. By construction of $Q$ and $\Gamma=(\Gamma_u)_{u\in\bbU}$, it then follows that $Q_u \leq \Gamma_u$ for all $u\in\bbU$, so $N_x(Q)\leq N_x(\Gamma)$.

\smallskip

\begin{figure}
    \begin{tikzpicture}[
  dot/.style = {circle, fill=black, inner sep=2pt, draw=none},
  square/.style = {regular polygon, regular polygon sides=4, fill=black, inner sep=2pt, draw=none},
  triangle/.style = {regular polygon, regular polygon sides=3, fill=black, inner sep=1.5pt, draw=none},
  level distance = 15mm,
  grow = up,
  level 1/.style = {sibling distance=20mm},
  level 2/.style = {sibling distance=3mm}
]

\begin{scope}[xshift=-20mm] 
\node[dot] (root1) {}
  child[edge from parent/.style={draw=red!70!black, line width=0.8pt}] {node[dot] (a) {}
    child[edge from parent/.style={draw=red!70!black, line width=0.8pt}] {node[triangle] (a1) {}}
    child[edge from parent/.style={draw=RoyalBlue!60!white, line width=1.8pt}] {node[dot] (a2) {}}
    child[edge from parent/.style={draw=black, line width=2.2pt}] {node[dot] (a3) {}}
  }
  child[edge from parent/.style={draw=RoyalBlue!60!white, line width=1.8pt}] {node[dot] (b) {}
    child[edge from parent/.style={draw=red!70!black, line width=0.8pt}] {node[dot] (b1) {}}
    child[edge from parent/.style={draw=RoyalBlue!60!white, line width=1.8pt}] {node[dot] (b2) {}}
    child[edge from parent/.style={draw=black, line width=2.2pt}] {node[square] (b3) {}}  
  }
  child[edge from parent/.style={draw=black, line width=2.2pt}] {node[dot] (c) {}
    child[edge from parent/.style={draw=red!70!black, line width=0.8pt}] {node[dot] (c1) {}} 
    child[edge from parent/.style={draw=RoyalBlue!60!white, line width=1.8pt}] {node[dot] (c2) {}}
    child[edge from parent/.style={draw=black, line width=2.2pt}] {node[dot] (c3) {}}
  };

\node[above=35mm of root1] {The geometric flow $\gamma(u)$};
\node[draw=none, fill=none, right=1mm] at (a) {$\cdots$};
\node[draw=none, fill=none, right=1mm] at (a1) {$\cdots$};
\node[draw=none, fill=none, right=1mm] at (b1) {$\cdots$};
\node[draw=none, fill=none, right=1mm] at (c1) {$\cdots$};

\node[draw=none, fill=none, above=1mm] at (b3) {a};
\node[draw=none, fill=none, above=1mm] at (a1) {b};
\end{scope}

\begin{scope}[shift={(0,-45mm)}] 
\node (root2) {};
\node[dot] {}
  child[edge from parent/.style={draw=red!70!black, line width=0.8pt}] {node[dot] (a) {}
    child[edge from parent/.style={draw=red!70!black, line width=0.8pt}] {node[triangle] (a1) {}}
    child[edge from parent/.style={draw=RoyalBlue!60!white, line width=1.8pt}] {node[dot] (a2) {}}
    child[edge from parent/.style={draw=RoyalBlue!60!white, line width=1.8pt}] {node[dot] (a3) {}}
    child[edge from parent/.style={draw=RoyalBlue!60!white, line width=1.8pt}] {node[dot] (a4) {}}
    child[edge from parent/.style={draw=black, line width=2.2pt}] {node[dot] (a5) {}}
  }
  child[edge from parent/.style={draw=red!70!black, line width=0.8pt}] {node[dot] (b) {}
    child[edge from parent/.style={draw=red!70!black, line width=0.8pt}] {node[triangle] (b1) {}}
    child[edge from parent/.style={draw=red!70!black, line width=0.8pt}] {node[triangle] (b2) {}}
    child[edge from parent/.style={draw=red!70!black, line width=0.8pt}] {node[triangle] (b3) {}} 
    child[edge from parent/.style={draw=RoyalBlue!60!white, line width=1.8pt}] {node[dot] (b4) {}}
    child[edge from parent/.style={draw=black, line width=2.2pt}] {node[dot] (b5) {}}
  }
  child[edge from parent/.style={draw=RoyalBlue!60!white, line width=1.8pt}] {node[dot] (c) {}
    child[edge from parent/.style={draw=RoyalBlue!60!white, line width=1.8pt}] {node[dot] (c1) {}} 
    child[edge from parent/.style={draw=RoyalBlue!60!white, line width=1.8pt}] {node[dot] (c2) {}}
    child[edge from parent/.style={draw=black, line width=2.2pt}] {node[square] (c3) {}}
  }
  child[edge from parent/.style={draw=RoyalBlue!60!white, line width=1.8pt}] {node[dot] (d) {}
    child[edge from parent/.style={draw=RoyalBlue!60!white, line width=1.8pt}] {node[dot] (d1) {}}
    child[edge from parent/.style={draw=RoyalBlue!60!white, line width=1.8pt}] {node[dot] (d2) {}}
    child[edge from parent/.style={draw=RoyalBlue!60!white, line width=1.8pt}] {node[dot] (d3) {}}
    child[edge from parent/.style={draw=black, line width=2.2pt}] {node[square] (d5) {}}
  }
  child[edge from parent/.style={draw=RoyalBlue!60!white, line width=1.8pt}] {node[dot] (e) {}
    child[edge from parent/.style={draw=RoyalBlue!60!white, line width=1.8pt}] {node[dot] (e1) {}}
    child[edge from parent/.style={draw=black, line width=2.2pt}] {node[square] (e2) {}}
  }
  child[edge from parent/.style={draw=black, line width=2.2pt}] {node[dot] (f) {}
    child[edge from parent/.style={draw=RoyalBlue!60!white, line width=1.8pt}] {node[dot] (f1) {}}
    child[edge from parent/.style={draw=RoyalBlue!60!white, line width=1.8pt}] {node[dot] (f2) {}}
    child[edge from parent/.style={draw=black, line width=2.2pt}] {node[dot] (f3) {}}
  };
  
\node[above=32mm of root2] {The random flow $\Gamma(u)$};
\node[draw=none, fill=none, right=1mm] at (a) {$\cdots$};
\node[draw=none, fill=none, right=1mm] at (a1) {$\cdots$};
\node[draw=none, fill=none, right=1mm] at (b1) {$\cdots$};
\node[draw=none, fill=none, right=1mm] at (c1) {$\cdots$};
\node[draw=none, fill=none, right=1mm] at (d1) {$\cdots$};
\node[draw=none, fill=none, right=1mm] at (e1) {$\cdots$};
\node[draw=none, fill=none, right=1mm] at (f1) {$\cdots$};
\end{scope}

\begin{scope}[xshift=30mm, yshift=25mm] 
\node[label=right:{Equal to parent}] (l1) {};
\node[below=5mm of l1, label=right:{3/4 of parent}] (l2) {};
\node[below=5mm of l2, label=right:{$(3/4)^2$ of parent}] (l3) {};
\draw[black, line width=2.2pt] (l1.west) -- ++(-10mm,0);
\draw[RoyalBlue!60!white, line width=1.8pt] (l2.west) -- ++(-10mm,0);
\draw[red!70!black, line width=0.8pt] (l3.west) -- ++(-10mm,0);
\end{scope}

\end{tikzpicture}
    \caption{Illustration of the map $\chi$ and of the variables $Z_u$. The colour and thickness of each edge indicate the ratios of the values taken by the flow between the upper-end and the lower-end of the edge. Two vertices $a=(2,1)$ and $b=(3,3)$ are marked in the tree at the top. In the tree at the bottom, three nodes have $a$ as their $\chi$-image (marked with squares), meaning that $Z_a = 3$. Four nodes have $b$ as their $\chi$-image (marked with triangles), meaning that $Z_b = 4$. }
    \label{fig:Gamma_vs_gamma}
\end{figure}
Next, define a function $\chi:\bbU\to\bbU$ as follows. First, $\chi(\varnothing)=\varnothing$. Inductively, given that $\chi(v)=u$, set $\chi(v*1)=u*1$ and, for $i \ge 2$, set $\chi(v*i)=u*j$ if and only if $\tau_{v,j-1} +1< i \leq \tau_{v,j}+1$. Equivalently, for all $v \in \bbU$, $\chi(v)$ is the unique node $u$ for which $(\gamma(w))_{\varnothing \preceq w \preceq u}=(\Gamma_w)_{\varnothing \preceq w \preceq v}$ (see Figure~\ref{fig:Gamma_vs_gamma} for an example). It follows that 
$\chi(v) \in E_x(\gamma)$ if and only if $v \in E_x(\Gamma)$, which means that
\[E_x(\Gamma) = \{v \in \bbU: \chi(v) \in E_x(\gamma)\}.\]
So, writing $Z_u = |\chi^{-1}(u)|$ (again, see Figure~\ref{fig:Gamma_vs_gamma} for an example), then 
\begin{equation}
\label{P_sum-Z}
N_x(P)\leq N_x(Q)\le N_x(\Gamma)= |E_x(\Gamma)|=\sum_{u \in E_x(\gamma)}Z_u.
\end{equation}
Additionally, by the definition of $\chi$, for all $v \in \bbU$ we have $Z_{v*1}=Z_v$, and for all $i \ge 2$, 
\[
Z_{v*i}=\sum_{u \in \chi^{-1}(v)} (\tau_{v,i}-\tau_{v,i-1})\, .
\]
Using the independence of the increments $\tau_{v,i}-\tau_{v,i-1}$, it follows by induction that for all $u \in \bbU$, $Z_u$ is distributed as the size $S_{\rr(u)}$ of the $\rr(u)$'th generation in a branching process with Geometric$(1/2)$ offspring distribution, and thus (see Harris~\cite[Chapter I, Section 7.1]{Harris}) that $Z_u$ is itself Geometric$(2^{-\rr(u)})$-distributed.

\smallskip
The above fact, together with the bound $\log(1-s)\leq -s$, yields that 
\[\prob(Z_u\geq z)\leq (1-2^{-\rr(u)})^{z-1}\leq 2\exp(-2^{-\rr(u)}z)\]
for any $z\geq 1$. Lemma \ref{Bound r(u)} provides that $\sup_{u\in E_x(\gamma)}\rr(u)\leq c_1\sqrt{\log x}$, where $c_1>0$ is some constant; thus, for all $u\in E_x(\gamma)$ and $z\geq 1$,
\begin{equation}
\label{Bound Z_u}
\prob(Z_u\geq z 2^{c_1\sqrt{\log x}})
\le \prob(Z_u \ge z\cdot 2^{\rr(u)}) 
\leq 2 e^{-z}.
\end{equation}
Recalling that $N_x(\gamma)=|E_x(\gamma)|$, and using (\ref{Bound Z_u}), (\ref{P_sum-Z}) and a union bound, it follows that 
\[\prob\big(N_x(P)\geq  N_x(\gamma) \cdot z \exp(c_1\log(2) \sqrt{\log x})\big) 
    \le 2N_x(\gamma)e^{-z}.\]

Finally, choose $z=y+\log N_x(\gamma)\leq y N_x(\gamma)$. By Proposition~\ref{deter_geo_flow} 
applied with $\alpha=4/3$, we have $N_x(\gamma) \le \exp(c_2+c_2\sqrt{\log_{4/3}(x)})$, where $c_2>0$ is another constant. This yields that
\begin{align*}
& \prob\bigg(N_x(P) \geq y\exp\Big(2c_2+2c_2\sqrt{\log_{4/3}x}+c_1\log(2)\sqrt{\log x}\Big)\bigg)\\
& \le \prob\big(N_x(P)\geq N_x(\gamma) \cdot z \exp(c_1\log(2) \sqrt{\log x})\big) \\
&    \le 2N_x(\gamma)e^{-z}\\
& = 2 e^{-y}\, ,
\end{align*}
which implies the desired inequality.
\end{proof}

\subsection{Competitive ratio}

The goal of this section is to prove the following analogue of Proposition~\ref{Distribution of Phi(T)} for the $\mathrm{UA}$ model.
\begin{proposition}
\label{Distribution of Phi(T) in UA}
 For any $n\geq 1$, let $T_n \sim{\mathrm{UA}}(n)$. There exist universal constants $C, c > 0$ such that $\limsup_{n\to\infty}\prob(\Phi(T_n) \geq x) \leq Cx^{-c}$ for all $x \geq 1$. 
\end{proposition}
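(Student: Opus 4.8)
The plan is to mirror the strategy used for Proposition~\ref{Distribution of Phi(T)} in the regular case: combine the deterministic upper bound for the competitive ratio from Lemma~\ref{Upper bound Phi} with the asymptotic description of subtree proportions from Section~\ref{presentationAndAsymptoticsUA}, and then control the resulting tail using Lemma~\ref{Bound x}. First I would apply Lemma~\ref{Upper bound Phi} to $T_n$: this constructs a path $\varnothing = u_0 \preceq u_1 \preceq \cdots \preceq u_k$ where $u_{i+1}$ is the unique child of $u_i$ with $|\theta_{u_{i+1}}T_n| \ge \frac12|T_n|$, and gives $\Phi(T_n) \le \prod_{i=1}^k (1 - |\theta_{u_i}T_n|/|T_n|)^{-1}$. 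The idea is to pass to the limit: using the almost sure convergences $|\theta_u T_n|/n \to P_u$ (valid simultaneously for the finitely many nodes of bounded weight that we need, by Lemma~\ref{Size subtree}), the ``majority child'' path stabilizes to a deterministic-structure random path $(\widehat u_i)_{i\ge 0}$ in $\bbU$, where $\widehat u_{i+1}$ is the a.s.\ unique child of $\widehat u_i$ with $P_{\widehat u_{i+1}} > \frac12$. One subtlety: unlike the regular case, a node may have infinitely many children, but since the $P_{u*i}$ sum to at most $P_u$, at most one child can carry more than half the parent's mass, and the argmax is a.s.\ well-defined and continuously distributed, so the path and its length $G = \inf\{k : P_{\widehat u_k} \le \frac12\}$ are a.s.\ finite.

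Next I would identify the relevant factors. By \eqref{formula_P2}, if $\widehat u_{i+1} = \widehat u_i * m_i$ then $P_{\widehat u_{i+1}} = P_{\widehat u_i}\cdot U_{\widehat u_i * m_i}\prod_{j=1}^{m_i-1}(1-U_{\widehat u_i * j})$, so the multiplicative increments $D_i := P_{\widehat u_{i+1}}/P_{\widehat u_i}$ are the relevant $V_i$'s for Lemma~\ref{Bound x}. The key structural point — analogous to the independence of the $(\mathbf D_{u*j(u)}; j(u))$ in the regular proof — is that by Proposition~\ref{Independence U in UA} the vectors $\mathbf U_u = (U_{u*i})_{i\ge1}$ are independent across $u$, and each $D_i$ is a (measurable) function only of $\mathbf U_{\widehat u_i}$ and of which node $\widehat u_i$ is (which is itself determined by $\mathbf U_{\widehat u_0}, \dots, \mathbf U_{\widehat u_{i-1}}$); hence the $(D_i)_{i\ge 1}$ are independent, and the ones with $i \ge 1$ (after the root, where the first coordinate plays no special role in the UA model) are identically distributed. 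Then $\limsup_{n\to\infty}\Phi(T_n) \le \prod_{i=1}^{G-1}(1-P_{\widehat u_i})^{-1} = \prod_{i=1}^{G-1}(1 - D_1\cdots D_i)^{-1}$ a.s., so by the Portmanteau theorem $\limsup_n \prob(\Phi(T_n)\ge x) \le \prob\big(\prod_{i=1}^{G-1}(1-D_1\cdots D_i)^{-1} \ge x\big)$ with $G = \inf\{i\ge1 : D_1\cdots D_i \le \frac12\}$, which is exactly the shape of the random variable $X$ in Lemma~\ref{Bound x}.

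Finally, to invoke Lemma~\ref{Bound x} with $\beta = \frac12$ I need a uniform moment bound $\E[(1-D_i)^{-1/2}] \le b$ for some finite $b$ and all $i$. This is the UA analogue of Proposition~\ref{moment_max_-1/2}, and I expect it to be the main obstacle, since $D_i = U_{\widehat u_i * m}\prod_{j<m}(1 - U_{\widehat u_i * j})$ is a more complicated object than a single coordinate of a symmetric Dirichlet vector — it is the \emph{largest} of the proportions $\big(U_{u*i}\prod_{j<i}(1-U_{u*j})\big)_{i\ge1}$, which form a GEM/stick-breaking-type sequence built from i.i.d.\ Uniform$[0,1]$ variables. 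I would prove this auxiliary estimate separately: condition on the event $\{D_i > 1/2\}$ (on the complement, $(1-D_i)^{-1/2} \le \sqrt2$), note that on this event $D_i = U_{u*1}$ where $u = \widehat u_i$ and $U_{u*1} \sim \mathrm{Uniform}[0,1]$ (the first stick-break equals the first coordinate and is the unique one exceeding $1/2$ when such a coordinate exists), so $\E[(1-D_i)^{-1/2}] \le \sqrt2 + \E[\I{U>1/2}(1-U)^{-1/2}] = \sqrt2 + \int_{1/2}^1 (1-x)^{-1/2}\,\dd x = \sqrt2 + \sqrt2 = 2\sqrt2$. With this bound in hand, Lemma~\ref{Bound x} immediately yields $\limsup_n \prob(\Phi(T_n)\ge x) \le C x^{-c}$ for universal $C,c>0$, completing the proof. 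The remaining care is bookkeeping: justifying the a.s.\ stabilization of the finite initial segment of the majority-child path (which uses Lemma~\ref{Size subtree} to confine everything relevant to finitely many nodes) and checking the independence/identical-distribution claims for the $(D_i)$ precisely, both of which follow the template already executed in the proof of Proposition~\ref{Distribution of Phi(T)}.
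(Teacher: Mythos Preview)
Your overall strategy is correct and essentially identical to the paper's: follow the argmax-child path, apply Lemma~\ref{Upper bound Phi}, pass to the limit using the almost sure convergence of subtree proportions, and feed the resulting product into Lemma~\ref{Bound x}. Your identification of the independence of the increments $D_i$ along the argmax path via the i.i.d.\ structure of the families $(\mathbf U_u)_{u\in\bbU}$ is also right.

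There is, however, a genuine error in your moment bound. You claim that on the event $\{D_i>\tfrac12\}$ one has $D_i=U_{u*1}$, reasoning that the first stick-break is the unique coordinate exceeding $\tfrac12$. This is false: with $p_j=U_{u*j}\prod_{\ell<j}(1-U_{u*\ell})$, take $U_{u*1}=0.2$ and $U_{u*2}=0.9$; then $p_1=0.2$ and $p_2=0.72$, so $\max_j p_j=0.72>\tfrac12$ but it is not $U_{u*1}$. The correct (and equally short) argument is the one the paper uses: since $p_1=U_{u*1}$ and $\sum_{j\ge 2}p_j=1-U_{u*1}$, one has $D_i=\max_j p_j\le \max(U_{u*1},1-U_{u*1})=:V_u$, and then $\E[(1-D_i)^{-1/2}]\le \E[(1-V_u)^{-1/2}]=2\sqrt 2$. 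So your conclusion $\E[(1-D_i)^{-1/2}]\le 2\sqrt 2$ survives, but the justification needs this replacement.

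Two smaller points. First, your path is described as the ``majority child'' path (child with $P>\tfrac12$), which terminates; to apply Lemma~\ref{Bound x} you need an infinite i.i.d.\ sequence, so you should extend by always taking the argmax child (the two coincide while a majority child exists). The paper in fact works with the upper bounds $V_u$ throughout rather than with $D_i$ directly, introducing an auxiliary stopping time $G\ge K$; your approach of applying Lemma~\ref{Bound x} with $V_i=D_i$ is slightly more direct and works once the moment bound is fixed. Second, the ``infinitely many children'' issue you flag does require care: the paper handles it not via Lemma~\ref{Size subtree} but by a short direct argument showing that for large $n$ all children of $\hat u_{K-1}$ have subtree size below $n/2$ (using that a finite initial segment already accounts for more than half of $P_{\hat u_{K-1}}$).
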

Before proceeding with the proof, recall from Section~\ref{presentationAndAsymptoticsUA} that $P_u \aseq \lim_{n \to \infty} |\theta_u T_n|/n$ and that there exists a family $(U_u)_{u \in \bbU}$ of independent Uniform$[0,1]$-distributed random variables such that for all $u \in \bbU$ and $i \in \N_1$, $P_{u*i}\aseq P_u\cdot U_{u*i} \prod_{j=1}^{i-1}(1-U_{u*j})$.

\begin{proof}
     For $u \in \bbU$, let $j(u) = \arg\max(U_{u*i}\prod_{j=1}^{i-1}(1-U_{u*j}), i \ge 1)$, so that $P_{u*j(u)}=\max_{i \ge 1} P_{u*i} = P_u\cdot U_{u*j(u)} \prod_{j=1}^{j(u)-1}(1-U_{u*j})$. The preceding maximum is almost surely achieved for a unique $i \ge 1$, so $j(u)$ is a.s.\ well-defined. Write $V_u=\max(U_{u*1},1-U_{u*1})\sim \mathrm{Uniform}[1/2,1]$ for $u \in \bbU$. Note that for all $u \in \bbU$ we have $U_{u*j(u)}\prod_{j=1}^{j(u)-1} (1-U_{u*j}) \le \max(U_{u*1},1-U_{u*1})=V_{u}$. Thus, for all $u \in \bbU$, we have $P_u \le \prod_{v \prec u} V_v$.

     \smallskip
     Now define a random path $(u_i)_{i \ge 0}$ in $\bbU$ as follows. Let $u_0=\varnothing$ and, inductively, for $i \ge 1$, given $u_i$ set $u_{i+1}=u_i*j(u_i)$. Since the families $((U_{u*i})_{i \ge 1}),u \in \bbU)$ are \textsc{iid}, the pairs $(V_{u};j(u))_{u \in \bbU}$ are \textsc{iid}, which in turn implies that the random variables $(V_{u_i})_{i\geq 0}$ are \textsc{iid}.
     Then let 
     \[
     K=\inf\{k \ge 1: P_{u_k} \le 1/2\} \le \inf\bigg\{g \ge 1: \prod_{i=1}^{g} V_{u_{j-1}} \le 1/2\bigg\}=G\, ,
     \]
     the final equality constituting the definition of $G$; it is straightforward to see that $G<\infty$ almost surely. Also, by (\ref{formula_P1}) each $P_u$ is a product of a finite number of independent uniform random variables so has a continuous distribution; thus, almost surely none of the $P_u$ equal $1/2$, so in particular  $P_{u_K} < 1/2$ almost surely.
     \smallskip
    
     Since $|\theta_{u} T_n|/n \convas P_{u}$ for all $u \in \bbU$ and $K$ is a.s.\ finite, it follows that there exists a random variable $n_0 \in \N_1$ such that almost surely, for all $n \ge n_0$, $|\theta_{u_i} T_n| > n/2$ for all $i \in [K-1]$. We claim that by choosing $n_0$ large enough, we may also ensure that $|\theta_{u_{K-1}*j} T_n|<n/2$ for all $j \ge 1$. (This would be immediate from the convergence fact that $|\theta_{u} T_n|/n \convas P_{u}$ for all $u$, if we were only to consider finitely many values of $j$.) To see this, let $J$ be the smallest integer $j \ge j(u_{K-1})$ for which $P_{u_{K-1}*1}+\ldots+P_{u_{K-1}*j} > P_{u_{K-1}}/2$. Then $J$ is a.s.\ finite, so for $n$ sufficiently large, 
     $|\theta_{u_{K-1}*\ell}T_n| < n/2$ for $1 \le \ell \le J$. 
     For such $n$ we also have 
     \[\frac{|\theta_{u_{K-1}*1}T_n|}{n}+\ldots+\frac{|\theta_{u_{K-1}*J}T_n|}{n}\convas P_{u_{K-1}*1}+\ldots+P_{u_{K-1}*J}> \frac{P_{u_{K-1}}}2\, ,
     \]
     so for $n$ sufficiently large
     \[
     \frac1n \sum_{\ell > J}|\theta_{u_{K-1}*\ell}T_n| 
     = 
    \frac1n\Big(|\theta_{u_{K-1}}T_n| -1-|\theta_{u_{K-1}*1}T_n|-\ldots-|\theta_{u_{K-1}*J}T_n|\Big) < \frac{P_{u_{K-1}}}{2} \le \frac12\, .
     \]
     Thus, for such $n$ we  also have $|\theta_{u_{K-1}*\ell}T_n| < n/2$ for $\ell > J$, as claimed.

     \smallskip
     It follows from the preceding paragraph that for $n \ge n_0$, $\varnothing = u_0 \preceq \dots \preceq u_{K-1}$ is the sequence of nodes described in Lemma~\ref{Upper bound Phi}, so almost surely, for $n \ge n_0$,  
     \[
     \Phi(T_n) \le \prod_{i = 1}^{K-1} \frac{1}{1 - |\theta_{u_i}T_n|/n}\,.
     \]
    Since $|\theta_{u_i}T_n|/n \convas P_{u_i}$, it follows that a.s.\ $\limsup_{n \to \infty} \Phi(T_n) \le \prod_{i=1}^{K-1} (1-P_{u_i})^{-1}$, so  
    \[
    \limsup_{n \to \infty} \prob(\Phi(T_n) \ge x) \le \prob\left(\prod_{i=1}^{K-1} \frac{1}{1-P_{u_i}} \ge x \right)
    \le 
    \prob\left(\prod_{i=1}^{G-1} \frac{1}{1-V_{u_0}\cdot \ldots \cdot V_{u_{i-1}}} \ge x \right)\, ,  
    \]
    where for the last inequality we have used that $K \le G$ and that $P_{u_i} \le \prod_{j=1}^{i} V_{u_{j-1}}$. 

    Finally, we have 
    \[\E\big[(1-V_1)^{-1/2}\big]=\int_0^1 \frac{\dd x}{\sqrt{1-\max(x,1-x)}}=2\int_0^{1/2}\frac{\dd x}{\sqrt{x}}=2\sqrt{2}\, ,\]
    so the result follows by Lemma~\ref{Bound x}.
\end{proof}

\subsection{Proof of the main theorem}

\begin{proof}[Proof of Theorem~\ref{thm:mainua}]

Recall that $|T_n|=n$: this will be used below to write some expressions more succinctly. Exactly like in the proof of Theorem~\ref{thm:dary_weak}, we partition the set $B_n=\{v\in T_n: \varphi_{T_n}(v)\leq \varphi_{T_n}(\varnothing)$\} of all better candidates than the root into subtrees stemming from children of elements of $H_n=\{v\in T_n:|\theta_v T_n|\geq \frac{1}{3}n\}$. Namely, if $v\in B_n\setminus H_n$ then it has a unique ancestor $u\in T_n\setminus H_n$ whose parent is in $H_n$. 

Here, in contrast to the proof of Theorem~\ref{thm:dary_weak}, the nodes in $H_n$ can have an arbitrarily large number of children. To control the number of children that matter, i.e.~those that have a descendant in $B_n$, we rely on Lemma~\ref{Condition on size of subtrees}. Indeed, for $u$ an ancestor of $v\in B_n$, Lemma~\ref{Condition on size of subtrees} gives us that $|\theta_v T_n|/n\geq (1+\Phi(T_n))^{-1}$ so also $|\theta_u T_n|/n\geq (1+\Phi(T_n))^{-1}$. Therefore, setting 
\[M_n=\max_{u'\in H_n} \big|\{u\in T_n : \ola{u}=u',(1+\Phi(T_n))|\theta_u T_n|\geq n\}\big|,\]
we obtain that
\begin{equation}
\label{alt_deterministic bound B_n}
|B_n|\leq |H_n|+ M_n\cdot |H_n| \cdot \max_{u\in T_n\setminus H_n}\I{(1+\Phi(T_n))|\theta_u T_n|\geq n}\big|\{v\in\theta_u T_n: u*v\in B_n\}\big|.
\end{equation} 

Like for Theorem~\ref{thm:dary_weak}, we now want to control the sizes of the sets $\{v\in \theta_u T_n: u*v\in B_n\}$, for $u$ appearing in the above maximum, by using Proposition~\ref{main}. However, unlike the bound from Lemma~\ref{flow binary}, the bound provided by Proposition~\ref{main} only holds with high probability and only concerns the asymptotic proportions of nodes lying in subtrees of $T_n$. Hence, to avoid summing the probabilities of too many bad events or trying to prove rate of convergence bounds, we will basically show that $B_n$ is contained with high probability in a deterministic finite set of words. To do so, we rely on Lemmas~\ref{Condition on size of subtrees} and~\ref{Size subtree}, and Proposition~\ref{Distribution of Phi(T) in UA}, to obtain that with high probability, all nodes $u$ with $\vp_{T_n}(u)\le\vp_{T_n}(\varnothing)$ will satisfy that $|\theta_u T_n|$ is large and $\rw(u)$ is small. This will also be useful for bounding $M_n$ and $|H_n|$. We now proceed to the details.

\smallskip
By Proposition~\ref{Distribution of Phi(T) in UA}, there exist constants $C,c_1>0$ such that for all $\varepsilon\in(0,1)$,
\begin{align}\label{Bound Phi in thm 2}
    \limsup_{n\to\infty}\prob(\Phi(T_n) \geq \eps^{-c_1})\leq C\eps.
\end{align}
Moreover, letting $c_2=\tfrac{1+2c_1}{\log(3/2)}$, Lemma~\ref{Size subtree} yields that 
\begin{equation}
    \label{finite}
    \limsup_{n\to\infty}\prob\big(\exists u\in\mathbb{U}\, :\, \rw(u)\geq c_2\log \tfrac{1}{\varepsilon}, |\theta_u T_n|\geq \tfrac{1}{2}\varepsilon^{c_1} n\big)\leq 6\varepsilon
\end{equation}
for any $\varepsilon\in(0,1)$. 

\smallskip
The above bound motivates us to define $\bbU^{(\varepsilon)}=\{v\in \bbU: \rw(v)\leq c_2\log (\sfrac1{\varepsilon})\}$; note that this set is deterministic and does not depend on $n$. Letting $E(n,\eps)$ be the event that
$\Phi(T_n)\leq \varepsilon^{-c_1}$ and that $\{u\in T_n: |\theta_u T_n|\geq \tfrac{1}{2}\varepsilon^{c_1} n\}\subset \bbU^{(\varepsilon)}$, 
we have shown that $\liminf_{n \to \infty} \prob(E(n,\eps)) \ge 1-(C+6)\eps$. We now prove several bounds assuming that $E(n,\eps)$ occurs; in the below list we work on the event $E(n,\eps)$.
\begin{enumerate}
\item Since $(1+\Phi(T_n))^{-1}\geq (2\Phi(T_n))^{-1}\geq \tfrac12\varepsilon^{c_1}$, 
if $u\in T_n$ has  $|\theta_u T_n| \ge (1+\Phi(T_n))^{-1}n$ then $u \in \bbU^{(\varepsilon)}$.
\item For $u'\in T_n$, the number of children $u$ of $u'$ with $|\theta_u T_n|\geq (1+\Phi(T_n))^{-1}n$ is bounded by the maximum of their last letters, which is itself bounded by the maximum of their weights. Hence, $M_n$ is at most the maximum weight of a node $u\in\bbU^{(\varepsilon)}$, so $M_n\leq c_2\log(\sfrac1\eps)$.
\item Since $H_n$ has at most $3$ leaves, we have $|H_n|\leq 3\max_{u\in H_n}\hgt(u)\leq 3\max_{u\in H_n}\rw(u)$, and so if $\tfrac{1}{3}\geq \tfrac{1}{2}\varepsilon^{c_1}$ then  $H_n\subset \bbU^{(\varepsilon)}$ and $|H_n|\leq  3 c_2\log(\sfrac1\eps)$.
\item For any $u \in \bbU$, by Lemma~\ref{Condition on size of subtrees}, if $u \in B_n$ then $|\theta_u T_n| \ge (1+\Phi(T_n))^{-1}n \ge \tfrac{1}{2}\varepsilon^{c_1}n$, 
so $u\in\bbU^{(\varepsilon)}$. 
It follows that if $u*v \in B_n$ then $\rw(v)\leq\rw(u*v)\leq c_2\log(\sfrac1\varepsilon)$. 
\item Finally, it holds that $\Phi(T_n)^{-1}\leq \varphi_{T_n}(u)/\varphi_{T_n}(\varnothing)$ by definition, so if $u*v\in B_n$ then  $\varphi_{T_n}(u)/\varphi_{T_n}(u*v)\geq \varepsilon^{c_1}$.
\end{enumerate}
For $\eps \in (0,1/3)$, on $E(n,\eps)$, the above bounds and (\ref{alt_deterministic bound B_n}) entail that
\begin{align}\label{deterministic bound Bn}
    |B_n|\leq 3c_2\log(\tfrac{1}{\varepsilon})+3c_2^2\log^2(\tfrac{1}{\varepsilon})\max_{u\in \bbU^{(\varepsilon)}, u\in T_n\setminus H_n }
    \bigg|\bigg\{v\in \bbU^{(\eps)}: \frac{\varphi_{T_n}(u)}{\varphi_{T_n}(u*v)}\geq \varepsilon^{c_1}\bigg\}\bigg|\, .
\end{align}

We are now almost ready to apply Proposition~\ref{main}. Let $u\in\bbU$. For all $v\in \bbU$, we define $P_v^{(u)}=P_{u*v}/P_u\aseq\lim_{n \to \infty}|\theta_{u*v}T_n|/|\theta_u T_n|$. It then follows from (\ref{eq:phiu-ancestor}) that if $P_u\leq \frac{1}{3}$ then almost surely 
\begin{align*}
\lim_{n\to\infty}\frac{\varphi_{T_n}(u)}{\varphi_{T_n}({u*v})} = \prod_{\varnothing\prec w\preceq v}\frac{P_{u*w}}{1-P_{u*w}} \leq\prod_{ \varnothing\prec w \preceq v}\frac{P^{(u)}_w}{2}
\end{align*}
for all $v \in \bbU$, and so in particular for all $v\in\bbU^{(\varepsilon)}$.
Also recall that $|\theta_u T_n|/n$ almost surely converges to $P_u$, which yields that $\limsup_{n\to\infty}\I{u\in T_n\setminus H_n}\leq \I{P_u\leq 1/3}$ almost surely. Since $\bbU^{(\varepsilon)}$ is deterministic and finite, it follows from the above bounds that almost surely
\begin{align*}
\limsup_{n\to\infty} 
\I{u\in T_n\setminus H_n}
\bigg|\bigg\{v\in \bbU^{(\eps)}: \frac{\varphi_{T_n}(u)}{\varphi_{T_n}(u*v)}\geq \varepsilon^{c_1}\bigg\}\bigg|
& \leq \bigg|\bigg\{ v \in \bbU^{(\varepsilon)}:  \prod_{\varnothing\prec w\preceq v} \frac{P^{(u)}_w}{2} \geq \eps^{c_1}\bigg\}\bigg|\\
& \le
 \bigg|\bigg\{ v \in \bbU:  \prod_{\varnothing\prec w\preceq v} \frac{P^{(u)}_w}{2} \geq \eps^{c_1}\bigg\}\bigg| \\
 & \ed N_{\eps^{-c_1}}(P)\, ,
\end{align*}
where the final identity in distribution follows from (\ref{formula_P1}) and Proposition~\ref{Independence U in UA}. To obtain a tail bound on $N_{\eps^{-c_1}}(P)$, 
we apply Proposition~\ref{main} with $y=(\log(2)c_2+1)\log(\sfrac{1}{\varepsilon})$; this yields 
\[
\prob\Big(N_{\eps^{-c_1}}(P)
\ge y\exp\big(c+c\sqrt{\log(\eps^{-c_1})}\big)\Big)
\le 2\exp(-y) = 2\eps^{c_2\log 2 + 1}. 
\]
Since $s \le \exp(\sqrt{s})$ for all $s > 0$, there exists $c_3 > 0$ such that 
$y\exp\big(c+c\sqrt{\log(\eps^{-c_1})}\big) \le \exp(c_3+c_3\sqrt{\log\tfrac1\eps})$, which together with the two preceding displays yields that
\[
\limsup_{n\to\infty}\prob\left(u\in T_n\!\setminus \! H_n\,;\left|\left\{v\in\bbU^{(\eps)} : \frac{\varphi_{T_n}(u)}{\varphi_{T_n}(u*v)}\geq \varepsilon^{c_1}\right\}\right|\geq e^{c_3 + c_3\sqrt{\log \tfrac{1}{\eps}}}\right) \leq 2^{-c_2\log 1/\eps} 2\varepsilon. 
\]
Since $|\bbU_{\eps}| \le 2^{c_2\log1/\eps}$, a union bound then gives 
\begin{equation}
\label{mass_better_outside_H}
\limsup_{n\to\infty}\prob\left(\max_{u\in \bbU^{(\eps)}} \I{u\in T_n\setminus H_n}\left|\left\{v\in\bbU^{(\eps)} : \frac{\varphi_{T_n}(u)}{\varphi_{T_n}(u*v)}\geq \varepsilon^{c_1}\right\}\right| \geq e^{c_3 + c_3\sqrt{\log \tfrac{1}{\eps}}}\right)\leq  2\varepsilon.
\end{equation}

Finally, we combine (\ref{deterministic bound Bn}) and (\ref{mass_better_outside_H}) with the fact that $\limsup_{n \to \infty} \prob(E(n,\eps)^c) \le (C+6)\eps$ to deduce that for all $\varepsilon\in(0,1/3)$,
\begin{align*}
    \limsup_{n\to\infty}\prob\Big(|B_n|\geq 3c_2\log(\tfrac{1}{\eps})+3c_2^2\log(\tfrac{1}{\eps})^2\exp\big(c_3+c_3\sqrt{\log \tfrac{1}{\eps}}\big)\,\Big)\leq  (8+C)\varepsilon.
\end{align*}
Since we always find the root when $K\geq |B_n|$, the desired result follows.
\end{proof}

\subsection{Proof of Theorem~\ref{thm:dary}}
\label{sec:dependence_d}

Let $d\geq 2$. Here, we use the notation of Section~\ref{presentationAndAsymptotics}, so that $T_n$ stands for a random plane tree with $T_n\sim\mathrm{UA}_{d+1}(n)$. In particular, recall the random variables $(D_u)_{u\in\bbU\setminus\{\varnothing\}}$ and that for $u=(u_1,\ldots,u_h)$,
\[P_u=\prod_{i=1}^h D_{(u_1,\ldots,u_i)}.\]
As already explained in Section~\ref{rem:dependence_d}, the only step of the proof of Theorem~\ref{thm:dary_weak} that introduced a dependence on $d$ in the constant $c_d^*$ is the deterministic control of the number of competitors in small subtrees given by Lemma~\ref{flow binary} and Proposition~\ref{deterministicBound}.  To remove this dependence, we can instead rely on a probabilistic analysis of the $\mathrm{UA}_{d+1}$ model and follow the strategy used for studying the general uniform attachment model more closely. Indeed, one can check that by rerunning the proof of Theorem~\ref{thm:mainua}, but using Proposition~\ref{Independence U binary} in place of Proposition~\ref{Independence U in UA} and replacing $\bbU^{(\varepsilon)}=\{u\in\bbU\, :\, \rw(u)\leq c_2\log 1/\varepsilon\}$ with the finite set $\{u\in\bbU_d\, :\, \hgt(u)\leq c_2\log 1/\varepsilon\}$, we may remove the $d$-dependence of the constants in Theorem~\ref{thm:dary_weak}, and thereby prove Theorem~\ref{thm:dary}, provided we establish the following analogue of Proposition~\ref{main}.

\begin{proposition}
\label{adaptation_main}
Let us set $P'=(P_{1*u}/P_1)_{u\in\bbU}$. Then there exists a universal constant $c>0$, that does not depend on $d$, such that for any $x,y\geq 1$,
\[\prob\big(N_x(P')\geq y\exp(c+c\sqrt{\log x})\big)\leq 2e^{-y}.\]
\end{proposition}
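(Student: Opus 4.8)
The plan is to rerun the proof of Proposition~\ref{main} essentially verbatim, changing only its single model-dependent ingredient, namely the passage through Lemma~\ref{Unif} and Corollary~\ref{corr}. First, by Proposition~\ref{Independence U binary} together with \eqref{formula_P1_binary}, the family $P'=(P_{1*u}/P_1)_{u\in\bbU}$ is a random $d$-ary flow on $\bbU$ whose vector of child-ratios $(P'_{u*i}/P'_u)_{i\ge1}$ at each node $u$ is, padded by zeros, an independent $\mathrm{Dir}_d(\tfrac{1}{d-1})$-vector. So it suffices to prove the following analogue of Corollary~\ref{corr}: there exist a family $(V_u)_{u\in\bbU\setminus\{\varnothing\}}$ of independent $[0,1]$-valued random variables with $\prob(V_u\le\tfrac34)\ge\tfrac12$ for every $u$, and a random height- and parent-preserving bijection $\sigma:\bbU\to\bbU$ with $\sigma(\varnothing)=\varnothing$, such that $P'_{\sigma(u)}\le Q_u$ almost surely for all $u$, where $Q_\varnothing=1$, $Q_{u*1}=Q_u$, and $Q_{u*i}=\tfrac12 Q_u\prod_{j=1}^{i-2}V_{u*j}$ for $i\ge2$. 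Granting this, the proof of Proposition~\ref{main} applies with only cosmetic changes: $N_x(P')\le N_x(Q)$ by monotonicity of $N_x$; one defines $\tau_{v,i}$, $\Gamma$, $\chi$ and $Z_u$ from the $(V_u)$ exactly as there; and since $\prob(V_u\le\tfrac34)\ge\tfrac12$, each waiting time $\tau_{v,i}-\tau_{v,i-1}$ is stochastically dominated by a $\mathrm{Geometric}(\tfrac12)$ variable, so $Z_u$ is stochastically dominated by the size of generation $\rr(u)$ in a branching process with $\mathrm{Geometric}(\tfrac12)$ offspring, hence by a $\mathrm{Geometric}(2^{-\rr(u)})$ variable; Lemma~\ref{Bound r(u)}, Proposition~\ref{deter_geo_flow} with $\alpha=4/3$, and a union bound then conclude, with all constants universal.

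Just as Corollary~\ref{corr} reduces to Lemma~\ref{Unif} using the independence of the $(\mathbf{U}_u)$, the displayed statement reduces, using the independence of the Dirichlet child-ratio vectors, to a per-node claim: for $\mathbf{X}=(X_1,\dots,X_d)\sim\mathrm{Dir}_d(\tfrac{1}{d-1})$ there exist independent $[0,1]$-valued $(V_i)_{i\ge1}$ with $\prob(V_i\le\tfrac34)\ge\tfrac12$ and a random bijection $\rho:\N_1\to\N_1$ (with $X_j:=0$ for $j>d$) such that $X_{\rho(i)}\le\tfrac12\prod_{j=1}^{i-2}V_j$ a.s.\ for all $i\ge2$. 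To prove this I would use the residual-allocation (size-biased) representation: the size-biased permutation $(\tilde X_1,\dots,\tilde X_d)$ of $\mathbf{X}$ satisfies $\tilde X_m=B_m\prod_{j=1}^{m-1}(1-B_j)$ with the $B_j$ independent, $B_j\sim\mathrm{Beta}(\tfrac{d}{d-1},\tfrac{d-j}{d-1})$ for $1\le j\le d-1$ and $B_d\equiv1$, so $\tilde X_m\le\prod_{j=1}^{m-1}(1-B_j)$. The uniform-in-$d$ input is that for $1\le j\le d-1$ the density of $B_j$ is proportional to $x^{1/(d-1)}(1-x)^{(1-j)/(d-1)}$, hence non-decreasing on $(0,1)$ (a positive power of $x$ times a non-positive power of $1-x$); this gives $\prob(B_j\ge\tfrac12)\ge\tfrac12$, and, since conditioning a non-decreasing density on $(0,\tfrac12)$ only pushes mass towards $\tfrac12$, also $\prob(B_j\ge\tfrac14\mid B_j<\tfrac12)\ge\tfrac12$; equivalently, for the residual fractions $W_j:=1-B_j$, $\prob(W_j\le\tfrac12)\ge\tfrac12$ and $\prob(W_j\le\tfrac34\mid W_j>\tfrac12)\ge\tfrac12$. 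One now argues exactly as in the proof of Lemma~\ref{Unif}, with $W_j,B_j$ in the roles of $1-U_j,U_j$: set $K=\inf\{j\ge1:W_j\le\tfrac12\}$ (a.s.\ finite and $\le d$), let $\rho$ be the composition of the size-biased permutation with the index reshuffling of Lemma~\ref{Unif} (which sends new position $1$ to size-biased position $K$, positions $2,\dots,K$ to $1,\dots,K-1$, and fixes the rest), take the $V$'s to be $W_1,\dots,W_{K-1},W_{K+1},W_{K+2},\dots$ in that order (padding with independent $\mathrm{Uniform}[0,1]$ variables once the finitely many $W_j$ are used up), and use $W_K\le\tfrac12$ as the factor $\tfrac12$; the telescoping inequality $X_{\rho(i)}\le\tfrac12\prod_{j=1}^{i-2}V_j$ for $i\ge2$ then follows as in Lemma~\ref{Unif}, and an argument analogous to that lemma's shows the $(V_i)$ are independent, each being some $W_l$ (or a padding variable) and so satisfying $\prob(V_i\le\tfrac34)\ge\tfrac12$.

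The crux, and the step I expect to require the most care, is the uniformity in $d$ of the per-node claim. Purely deterministic information about $\mathbf{X}$, such as the sorting bound $X_{(k)}\le 1/k$, is not enough: as the remark following Theorem~\ref{thm:dary} shows, the deterministic bound of Lemma~\ref{flow binary} genuinely degrades with $d$. One must instead extract from the law of $\mathrm{Dir}_d(\tfrac{1}{d-1})$ that its size-biased residual fractions decay geometrically on average at a $d$-independent rate, which is precisely what the non-decreasing-density observation above provides. A secondary and routine point is that the $(V_u)$ produced here are independent but not identically distributed with a $d$-free law; this is immaterial, since the branching-process step of Proposition~\ref{main} uses only the stochastic-domination bound $\prob(V_u\le\tfrac34)\ge\tfrac12$, not the exact law of $V_u$.
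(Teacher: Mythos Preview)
Your overall strategy matches the paper's: both reduce to a per-node analogue of Lemma~\ref{Unif} for $\mathrm{Dir}_d(\tfrac{1}{d-1})$ via the stick-breaking representation with independent factors $B_j\sim\mathrm{Beta}(\tfrac{d}{d-1},\tfrac{d-j}{d-1})$, and then feed the result into the machinery of Proposition~\ref{main}. Your observation that the density of each $B_j$ is non-decreasing on $(0,1)$, whence $\prob(B_j\geq 1/2)\geq 1/2$ and $\prob(B_j\geq 1/4\mid B_j<1/2)\geq 1/2$, is correct and is in fact a cleaner substitute for the paper's estimate~(\ref{estimate_tails_beta_conditioned}).

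There is, however, a genuine gap at the independence step. You set $V_i=W_i$ for $i<K$ and $V_i=W_{i+1}$ for $i\geq K$ and assert that ``an argument analogous to that lemma's shows the $(V_i)$ are independent''. This is false. The independence proof in Lemma~\ref{Unif} hinges on a coincidence of laws: the conditional law of $1-U_j$ given $U_j\leq 1/2$ \emph{equals} the unconditional law of $1-U_{j+1}/2$ (both $\mathrm{Uniform}[1/2,1]$), so the sum over the value of $K$ factors. In your setting the two candidate laws for $V_i$ are the conditional law of $W_i$ given $W_i>1/2$ and the unconditional law of $W_{i+1}$; these differ from each other and both vary with the index, so the sum does not factor. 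A direct computation with $p_j=\prob(W_j>1/2)$ shows that $\prob(V_1\in A,V_2\in B)\neq\prob(V_1\in A)\prob(V_2\in B)$ in general. Since the branching-process comparison requires \emph{joint} stochastic domination of the waiting times $\tau_{v,i}-\tau_{v,i-1}$ by independent geometrics, marginal bounds on the $V_i$'s are not enough, and the argument breaks here.

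The paper repairs precisely this point (Lemma~\ref{geometric_rearrangement}) by not taking the $V_i$'s to be the residuals themselves but by targeting a two-point law $\mu=p\,\delta_{\alpha^{-1}}+(1-p)\,\delta_1$: using external randomization, one sets $V_i\in\{\alpha^{-1},1\}$ with $V_i=\alpha^{-1}$ only when the relevant lower bound ($1-X_i$ if $i<K$, or $1-X_{i+1}$ if $i\geq K$) is at most $\alpha^{-1}$, tuning the extra coin so that $\prob(V_i=\alpha^{-1})$ equals $p$ exactly in each case. Conditionally on $K$, the two cases depend on disjoint blocks of the independent $X_j$'s, which yields genuinely i.i.d.\ $V_i\sim\mu$. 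Your monotone-density bounds would make this work with $\alpha^{-1}=3/4$ and $p=1/2$; the paper, using~(\ref{estimate_tails_beta_conditioned}), ends up with $\mu=\tfrac12\delta_{16/17}+\tfrac12\delta_1$.
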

The reason we need $P'$ rather than $P$ in this proposition is that in the $(d+1)$-regular model, the split at the root is different from that of all other nodes in $\mathbb{U}_d$; we could equally have defined $P'=(P_{v*u}/P_v)_{u \in \bbU}$ for any $v \in \bbU_d\setminus \{\varnothing\}$.

\smallskip
Next, carefully revisiting the proofs of Proposition~\ref{main}, Lemma~\ref{Bound r(u)}, and Corollary~\ref{corr}, one can verify that Proposition~\ref{adaptation_main} follows from the adaptation of Lemma~\ref{Unif} stated below (in which, informally, the change from the $\mathrm{UA}$ to the $\mathrm{UA}_{d+1}$ model forces us to replace the values $4/3$ and $1/2=\prob(V_1\leq 3/4)$ with other universal constants $\alpha>1$ and $p=\prob(W_1\leq \alpha^{-1})>0$). 

\begin{lemma}
\label{geometric_rearrangement}
There exists a probability measure $\mu$ on $[0,1]$ with $\mu(\{1\})<1$ such that the following holds for all $d\geq 2$. If $(D_1,\ldots,D_{d})\sim \mathrm{Dir}_{d}(\tfrac{1}{d-1})$, then there is a sequence $(W_i)_{i\geq 1}$ of \textsc{iid}~random variables with distribution $\mu$ and a random permutation $\sigma$ of $[d]$ such that almost surely $D_{\sigma(i)}\leq \tfrac{1}{2}\prod_{j=1}^{i-2} W_j$ for all $2\leq i\leq d$.
\end{lemma}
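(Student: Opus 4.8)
The goal is to produce, from a single $\mathrm{Dir}_d(\tfrac{1}{d-1})$ vector, a permuted version whose entries (after the first one) are dominated by a geometrically-decaying sequence of iid random variables, with all constants uniform in $d$. The natural choice is to order the components in decreasing size: let $\sigma$ be the (a.s.\ unique) permutation of $[d]$ with $D_{\sigma(1)}\ge D_{\sigma(2)}\ge\cdots\ge D_{\sigma(d)}$, and let $(D_{(1)},\ldots,D_{(d)})$ denote the order statistics. Since the components sum to $1$, we automatically get $D_{\sigma(i)}=D_{(i)}\le 1/i\le \tfrac12$ for all $i\ge 2$, which already handles the ``$\tfrac12$'' prefactor. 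The task is then to find iid $(W_j)_{j\ge1}\sim\mu$ with $\mu(\{1\})<1$ such that $D_{(i)}\le \tfrac12\prod_{j=1}^{i-2}W_j$ a.s.\ for $2\le i\le d$. Equivalently, writing $R_i=D_{(i+1)}/D_{(i)}\in[0,1]$ for the successive ratios of order statistics, it would suffice to couple so that $D_{(i)}\le \tfrac12\prod_{j=1}^{i-2}W_j$; one clean way is to take $W_j$ to dominate $R_{j+1}$ stochastically in an appropriate sense. The key structural fact to exploit is a \emph{uniform-in-$d$ stochastic bound on the ratios $D_{(i+1)}/D_{(i)}$ of successive order statistics of a symmetric Dirichlet vector}, which should hold because the relevant marginal/joint Beta comparisons only improve as $d$ grows.

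First I would set up the ordering and the trivial bound $D_{(i)}\le 1/i$, then reduce the problem to controlling the order-statistic ratios. The main work is the following claim: there is a universal $p\in(0,1)$ and $\alpha>1$ such that, for every $d\ge2$, one can construct iid $W_j$ with $\prob(W_j\le \alpha^{-1})=p$ (so $\mu$ can be taken supported on $\{\alpha^{-1},1\}$, or on $[0,1]$ more generally) and a coupling with $(D_1,\ldots,D_d)$ under which $D_{(i)}\le \tfrac12\prod_{j=1}^{i-2}W_j$. To get this I would proceed inductively in $i$: condition on $(D_{(1)},\ldots,D_{(i)})$, and show that the conditional law of $D_{(i+1)}/D_{(i)}$ is stochastically dominated by a fixed distribution on $[0,1]$ that is bounded away from $1$ with at least a fixed probability $p$ — uniformly in $d$ and in the conditioning. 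A convenient route: $D_{(i+1)}\le D_{(i)}$ deterministically, and among the remaining ``mass'' $1-\sum_{k\le i}D_{(k)}$ split into $d-i$ exchangeable pieces, at least a constant fraction of the time the largest remaining piece $D_{(i+1)}$ is at most, say, $\tfrac{2}{3}D_{(i)}$; this kind of statement can be extracted from the residual-allocation (stick-breaking) representation of the Dirichlet, or from Beta stochastic domination \eqref{eq:betadomination} applied to the conditional distributions, since the conditional law of the vector of remaining normalized pieces is again symmetric Dirichlet with the same parameter $\tfrac{1}{d-1}\le 1$. Assembling these conditional bounds and choosing $W_j$ to encode the event ``$D_{(j+2)}\le \alpha^{-1}D_{(j+1)}$'' (with $\alpha=3/2$, say, padded to $W_j=1$ otherwise) gives the telescoped bound $D_{(i)}\le D_{(2)}\prod_{j=1}^{i-2}W_j\le \tfrac12\prod_{j=1}^{i-2}W_j$, after noting $D_{(2)}\le 1/2$.

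The main obstacle I anticipate is making the ``constant fraction of the time, the next order statistic drops by a fixed factor'' statement both \emph{uniform in $d$} and \emph{valid conditionally} on the larger order statistics — i.e.\ a genuine stochastic-domination statement for $D_{(i+1)}/D_{(i)}$ rather than just a one-point tail bound. Two features help here and should be emphasized in the write-up. (1) The Dirichlet parameter $\tfrac{1}{d-1}\le 1$ is bounded above uniformly, and parameters $\le 1$ only make the mass \emph{more} concentrated on few coordinates, so the ratio of consecutive order statistics is, if anything, smaller for larger $d$; a monotonicity/coupling comparison in $d$ (e.g.\ adding coordinates, or using that $\mathrm{Beta}(\tfrac1{d-1},\tfrac{d-1-1}{d-1})$-type marginals are stochastically ordered via \eqref{eq:betadomination}) lets one reduce to a single worst case, e.g.\ $d=2$ or a limiting regime. (2) Conditionally on $(D_{(1)},\ldots,D_{(i)})$, the normalized remaining vector is symmetric Dirichlet of order $d-i$ with the same parameter, so the same analysis recurses cleanly. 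If a fully clean uniform domination of the ratio is awkward, a robust fallback is to prove the weaker but sufficient statement that $\prob\big(D_{(i+1)}\le \tfrac{2}{3}D_{(i)}\,\big|\,D_{(1)},\ldots,D_{(i)}\big)\ge p$ for a universal $p>0$, and then take $W_j$ with $\prob(W_j=\tfrac23)=p$, $\prob(W_j=1)=1-p$; the product $\prod_{j=1}^{i-2}W_j$ still only needs to dominate the telescoped ratio, and a standard coupling of a sequence of conditionally-$\ge p$ indicator events with iid Bernoulli$(p)$'s closes the argument. Finally I would check $\mu(\{1\})=1-p<1$, record that $\alpha$ and $p$ are universal, and remark (as the paper does) that the same proof works with $P_1$ replaced by any $P_v$, $v\in\bbU_d\setminus\{\varnothing\}$, since all such splits are $\mathrm{Dir}_d(\tfrac1{d-1})$.
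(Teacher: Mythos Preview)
There is a genuine gap in your plan. The structural claim you lean on in feature~(2) --- that conditionally on the top $i$ order statistics $(D_{(1)},\ldots,D_{(i)})$, the normalized remaining vector is again symmetric Dirichlet of order $d-i$ with the same parameter --- is false. Conditioning on \emph{specific} coordinates $D_1,\ldots,D_i$ does give back a (rescaled) $\mathrm{Dir}_{d-i}(\tfrac1{d-1})$ on the complement, but conditioning on the \emph{largest} $i$ coordinates additionally forces every remaining coordinate to lie below $D_{(i)}$, so the residual law is a Dirichlet \emph{conditioned on its maximum being small}, which is a different object.

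This is not just a cosmetic issue: your ``robust fallback'' bound
\[
\prob\big(D_{(i+1)}\le \tfrac{2}{3}D_{(i)}\ \big|\ D_{(1)},\ldots,D_{(i)}\big)\ \ge\ p
\]
cannot hold with a universal $p>0$. Take $d=3$ and condition on $D_{(1)}=0.4$. Then $D_{(2)}+D_{(3)}=0.6$ with $D_{(2)}\ge D_{(3)}$, so necessarily $D_{(2)}\in[0.3,0.4]$, whence $D_{(2)}/D_{(1)}\in[0.75,1]$ and the conditional probability that $D_{(2)}\le \tfrac{2}{3}D_{(1)}$ is zero. More generally, whenever $D_{(i)}$ is close to its minimal admissible value (i.e.\ the remaining mass is nearly equidistributed), the next ratio $D_{(i+1)}/D_{(i)}$ is forced to be near $1$, and no fixed $\alpha>1$ works. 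Hence neither the order-statistic-ratio domination nor the coupling with \textsc{iid} $W_j$'s that you describe can be carried out as stated.

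The paper sidesteps exactly this obstruction by \emph{not} using order statistics at all. It realises $(D_1,\ldots,D_d)$ via a size-biased stick-breaking: with $X_i\sim\mathrm{Beta}(\tfrac{d}{d-1},1-\tfrac{i-1}{d-1})$ independent, the sticks $Y_i=X_i\prod_{j<i}(1-X_j)$ (and a uniform random relabelling) give a $\mathrm{Dir}_d(\tfrac1{d-1})$ vector. Because the $X_i$ are genuinely \emph{independent}, one can run the argument of Lemma~\ref{Unif} verbatim --- define $K=\min\{i:X_i>1/2\}$, put $\sigma(1)=K$, and bound each stick by $\tfrac12$ times a product of factors of the form $1-X_j$ --- and the only thing to check is a uniform-in-$d$ one-dimensional Beta estimate, namely that $\prob(X<1/17\mid X\le 1/2)\le 1/2$ for $X\sim\mathrm{Beta}(a,b)$ with $a\in[1,2]$, $b\in(0,1]$. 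This yields $\mu=\tfrac12\delta_{16/17}+\tfrac12\delta_1$. The crucial point is that independence is built into the stick-breaking factors, whereas successive order-statistic ratios of a Dirichlet are neither independent nor uniformly bounded away from $1$ in the sense you need.
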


We now explain how to prove Lemma~\ref{geometric_rearrangement}. We seek an admissible probability measure $\mu$ of the form $p\delta_{\alpha^{-1}}+(1-p)\delta_{1}$ with $\alpha>1$ and $p>0$. Our argument is based on a sampling method for the symmetric Dirichlet distribution via \emph{stick-breaking} which builds the components of the random vector in an expression similar to $U_i\cdot\prod_{j=1}^{i-1}(1-U_j)$. This will put us in a good position to use the proof method of Lemma~\ref{Unif}.

\smallskip

Let $\pi$ be a uniformly random permutation of $[d]$. Independently from $\pi$, let $X_1,\ldots,X_{d-1}$ be independent random variables such that $X_i\sim\mathrm{Beta}(\tfrac{d}{d-1},1-\tfrac{i-1}{d-1})$ for all $i\in[d-1]$. Then, set $Y_{d}=\prod_{j=1}^{d-1}(1-X_j)$ and $Y_i=X_i\cdot \prod_{j=1}^{i-1}(1-X_j)$ for all $i\in[d-1]$. It holds that
\begin{equation}
\label{stick_breaking_exchangeable}
(Y_{\pi(1)},\ldots,Y_{\pi(d)})\sim\mathrm{Dir}_{d}\big(\tfrac{1}{d-1}\big)\, .
\end{equation}
This identity in distribution can be derived from the fact that for any integers $j_1,\ldots,j_k\in\N_1$, $\mathrm{Dir}_k(\tfrac{j_1}{d-1},\ldots,\tfrac{j_k}{d-1})$ is the limit law of the proportion-vector in a $k$-colour P\'olya urn starting with $j_i$ balls of the $i$'th colour for all $i\in[k]$, and where each ball drawn is returned along with $d-1$ new balls of the same colour; see e.g.~\cite{athreya_lost} or \cite[Section 6]{CMP15}. Then (\ref{stick_breaking_exchangeable}) describes the case where $j_1=\ldots=j_{d-1}=1$: the permutation $\pi^{-1}$ represents the order in which the colours are drawn for the first time and for $i\in[d-1]$, $X_i$ is the relative proportion of colour $\pi^{-1}(i)$ with respect to the colours that have yet to be drawn.

\smallskip
Thanks to (\ref{stick_breaking_exchangeable}), we only need to construct \textsc{iid}~random variables $W_1,\ldots, W_{d-2}$ with common distribution $\mu=p\delta_{\alpha^{-1}}+(1-p)\delta_{1}$ and a random permutation $\sigma$ of $[d]$ such that $Y_{\sigma(i)}\leq \tfrac{1}{2}\prod_{j=1}^{i-2} W_j$ for all $2\leq i\leq d$. Similarly to as in the proof of Lemma~\ref{Unif}, we consider $K=\min \{d\}\cup \{i\in[d-1]\, :\, X_i>\tfrac{1}{2}\}$ and define $\sigma(i)=K\I{i=1}+(i-1)\I{2\leq i\leq K}+i\I{i\geq K+1}$ for all $i\in[d]$. Then, to be able to obtain the same deterministic inequalities as in the proof of Lemma~\ref{Unif}, we must construct the $W_1,\ldots,W_{d-2}$ so that it holds that 
\[W_i\geq 
\begin{cases}
    1-X_i&\text{ if }i\leq K-1,\\
    1-X_{i+1}&\text{ if }i\geq K,
\end{cases}\]
almost surely for all $i\in[d-2]$ while ensuring their independence. By independence of $X_1,\ldots,X_{d-1}$, this is possible as soon as for all $i\in[d-2]$, \[\prob(1-X_i>\alpha^{-1}\, |\, X_i\leq 1/2)\leq 1-p\quad\text{ and }\quad\prob(1-X_{i+1}>\alpha^{-1})\leq 1-p.\]

It follows from the above discussion that Lemma~\ref{geometric_rearrangement} --- and, hence, Theorem~\ref{thm:dary}
--- is a consequence of the following uniform estimate for Beta distributions. 
\begin{equation}
\label{estimate_tails_beta_conditioned}
\forall a\in[1,2],\forall b\in(0,1],\quad \text{if}\quad X\sim\mathrm{Beta}(a,b)\quad\text{then}\quad \prob(X<1/17\, |\, X\leq 1/2)\leq \frac{1}{2}.
\end{equation}
More concretely, this bound yields that Lemma~\ref{geometric_rearrangement} holds with $\mu=\tfrac{1}{2}\delta_{16/17}+\tfrac{1}{2}\delta_1$.

\begin{proof}[Proof of (\ref{estimate_tails_beta_conditioned})]
Let $B,B'$ be two random variables such that $B'\sim\mathrm{Beta}(1,b)$ and $B\sim\mathrm{Beta}(2,b)$. From (\ref{eq:betadomination}), we have the stochastic dominations $B'\preceq_{\mathrm{st}} X\preceq_{\mathrm{st}} B$, and so 
\[\prob(X < 1/17\, |\, X\leq 1/2)\leq\frac{\prob(B'< 1/17)}{\prob(B\leq 1/2)}\, .\]
We bound the numerator and the denominator separately. Since $\Gamma(1+b)=b\Gamma(b)\Gamma(1)$,
\[\prob(B'< 1/17)=b\int_0^{1/17}(1-x)^{b-1}\, \dd x \leq b\int_0^{1/17}(1-x)^{-1}\, \dd x \leq \frac{b}{17}(16/17)^{-1}=\frac{b}{16}\]
because $0<b\leq 1$. Similarly, since $\Gamma(2+b)=(1+b)\Gamma(1+b)=(1+b)b\Gamma(b)\Gamma(2)$,
\[\prob(B\leq 1/2)=(1+b)b\int_0^{1/2} x (1-x)^{b-1}\, \dd x \geq b\int_0^{1/2}x \, \dd x =\frac{b}{8},\]
which concludes the proof.
\end{proof}

\begin{acks}[Acknowledgments]
LAB, CF and ST acknowledge support from NSERC during the preparation of this research. RK has also been supported by NSERC via a Banting postdoctoral fellowship [BPF-198443]. LAB additionally acknowledges support from the Canada Research Chairs program. LRL acknowledges support from the Institut des Sciences Math\'ematiques. All authors thank Th\'eodore Conrad-Frenkiel for useful discussions about the project. 
\end{acks}

\begin{appendix}

\section{Proof of the tightness of Theorem~\ref{thm:dary}}
\label{appendix_tightness}

Let $d\geq 3$ be a positive integer, and let $(T_n)_{n\geq 1}$ follow the $\mathrm{UA}_d$ model. Recall that this means that $T_1$ consists of a single root node $\oo$ and $d$ leaves, and that for each $n \ge 1$, $T_{n+1}$ is built from $T_n$ by adding $d-1$ new neighbours to a single leaf of $T_n$ chosen uniformly at random. In this appendix, we show that any root-finding algorithm for $(T_n)$ with error less than $\varepsilon>0$ must have size bigger than $C^*\exp(c^*\sqrt{\log\sfrac1\eps})$ for some constants $C^*,c^*>0$.

\begin{proposition}\label{prop_appendix}
    For any $d\geq 3$, there exist $C^*,c^*>0$ such that the following holds. For all $\eps>0$, any root-finding algorithm $A$ such that $\limsup_{n\to\infty} \prob(\oo\notin A(T_n)) \leq \varepsilon$ must have size $s(A)\geq C^*\exp(c^*\sqrt{\log\sfrac1\eps})$, where recall that $s(A)=\sup\{ |A(t)| : t\mbox{ a finite tree}\}$.
\end{proposition}

To prove this proposition, we work with the equivalent model discussed in Remark~\ref{remove leaves}, i.e., the diffusion process on the infinite $d$-regular tree. Namely, for each $n\geq 1$, let $D_n$ be (deterministically) obtained from $T_n$ by removing all the leaves of $T_n$, so that $D_n$ and $T_n$ share the same root. Conversely, $T_n$ can be (deterministically) recovered from $D_n$ by attaching, for each vertex $u$ of $D_n$, $d-\mathrm{deg}_n(u)$ leaves to $u$, where $\mathrm{deg}_n(u)$ stands for the degree of $u$ in $D_n$. Therefore, Proposition~\ref{prop_appendix} is a direct consequence of the following result.

\begin{lemma}\label{lem_appendix}
    For any $d\geq 3$, there exist $C^*,c^*>0$ such that the following holds. For all $\eps>0$, any root-finding algorithm $A$ such that $\limsup_{n\to \infty} \prob(\oo\notin A(D_n)) \leq \varepsilon$ must have size $s(A)\geq C^*\exp(c^*\sqrt{\log\sfrac1\eps})$.
\end{lemma}

Observe that the random sequence $(D_n)_{n\geq 1}$ can be sampled as follows: $D_1$ consists of a single root node $\oo$, and for each $n\geq 1$, $D_{n+1}$ is built from $D_{n}$ by adding a single new neighbour to a vertex $u$ of $D_n$ chosen with probability proportional to $\max(0,d-\mathrm{deg}_n(u))$. 
From this recursive scheme, we know from Crane and Xu~\cite[Theorem~3]{Inference} that the algorithm $\mathcal{A}_K$ (which outputs the $K$ most central vertices in the sense of (\ref{def phi in graph})) is the optimal size-$K$ root-finding algorithm for $D_n$ --- specifically, $\varphi_{D_n}$ is inversely proportional to the likelihood function of being the root~\cite[Eq.~(9)]{Rumors}. Thus, to prove Lemma~\ref{lem_appendix}, we precisely need to show that there exist two constants $C^*,c^*>0$ such that for all $\varepsilon>0$ and $K\geq 1$,
\[\limsup_{n\to\infty} \prob(\oo \notin \mathcal{A}_K(D_n))\leq \varepsilon \Longrightarrow K\geq C^*\exp(c^*\sqrt{\log\sfrac1\eps}).\]

\begin{proof}[Proof of Lemma~\ref{lem_appendix}]

    This proof is an adaptation of the proof of \cite[Theorem 4]{FindingAdam}. In what follows we omit floors and ceilings for readability. Moreover, we label the nodes of the $(D_n)_{n\geq 1}$ by their order of arrival, so that $\oo=1$ and $n+1$ is the only node of $D_{n+1}$ that do not belong to $D_n$, for each $n\geq 1$. 
    
    Set $K= \exp\left(\sqrt{\frac{1}{30}\log\frac{1}{2\eps}}\right)$, which is assumed to be bigger than some universal threshold $K_0>0$ to be determined in the course of the proof. By the optimality of $\mathcal{A}_K(D_n)$, it suffices to show that we have $\varepsilon< \limsup_{n\to\infty}\prob(1 \notin \mathcal{A}_K(D_n))$. Since $\prob(1 \notin \mathcal{A}_K(D_n))$ is non-decreasing in $n$ (again by optimality), we only need to show that $\prob(1 \notin \mathcal{A}_K(D_{K+1}))> \varepsilon$. To do so, we consider a set $\mathcal{T}$ of labeled trees with vertices $\{1,\ldots,K+1\}$, rooted at $1$, that satisfies $\mathbb{P}(D_{K+1}\in \mathcal{T}) > \varepsilon$. We will then show that for any $T\in \mathcal{T}$, $\varphi_T(1)>\varphi_T(v)$ for all nodes $v$ of $T$ distinct from $1$.
    \smallskip

    The trees in the set $\mathcal{T}$ are those labeled trees with vertices $\{1,\ldots,K+1\}$, rooted at $1$, that satisfy the following conditions:
    \begin{enumerate}
        \item[(C1)] The oldest $10\log K$ nodes form a path.
        \item[(C2)] The remaining vertices are descendants of the node with label $10\log K$.
        \item[(C3)] The height of the subtree defined in (C2) is at most $8\log K $.
    \end{enumerate}
    Each tree in $\mathcal{T}$ corresponds to a path of length $10\log K$, with the root at one end and a subtree of the infinite $d$-regular tree attached at the other end. We now show that $\prob(D_{K+1}\in \mathcal{T}) >\eps$. The probability that in the process $(D_n, n \ge 1)$, (C1) occurs as prescribed above is
    \[\prod_{i=2}^{10\log K-1}\frac{d-1}{i(d-2)+2} \geq \prod_{i=2}^{10\log K-1}\frac{d-1}{i(d-1)}=\frac{1}{(10\log K-1)!} > \exp(- 10\log^2K),\]
    the last inequality holding provided that $K$ is large enough. \smallskip
    
    Similarly, given that (C1) has occurred, the probability that $D_{K+1}$ satisfies (C2) is
    \[\prod_{i=1}^{K+1-10\log K}\frac{i(d-2)+1}{(i+10\log K-1)(d-2)+2}  >\exp(- 20\log^2K),\]
    the last inequality holding provided that $K$ is large enough. \smallskip
    
    Finally, the conditional probability that $D_{K+1}$ satisfies (C3) given that it satisfies (C1) and (C2) is larger than $\sfrac{1}{2}$ for large enough $K$. Indeed, given that (C1) and (C2) occur, the subtree defined in (C2) is distributed as {\color{black} a \emph{uniform $(d-1)$-ary search tree} of size $K+2-10\log K$,\footnote{A \emph{uniform $(d-1)$-ary search tree of size $n$} is a random rooted tree $D_n'$, which is a slight variant of $D_n$ whose root has degree at most $d-1$ instead of $d$. Namely, $(D_n')_{n\geq 1}$ can be recursively sampled as follows:~$D_1'$ consists of a single root vertex, and for each $n\geq 1$, $D_{n+1}'$ is built from $D_{n}'$ by adding a single new child to a vertex $u$ of $D_n$ chosen with probability proportional to $\max(0,d-1-\rk_u(D_n'))$, where $\rk_u(D_n')$ stands for the number of children of $u$ in $D_n'$.} and by the main theorem of \cite{devroye1990height}, we know that the height of a uniform $(d-1)$-ary search tree of size $n$ scaled by $\log(n)$ converges in probability to some $\gamma<8$ as $n \to \infty$.} Thus, 
    \[\prob(D_{K+1}\in \mathcal{T}) > \frac{1}{2}\exp(-10\log^2K)\exp(-20\log^2K)=\varepsilon,\]
    by the choice of $K$.
    \smallskip
    
    We next show that for $T\in \mathcal{T}$, we have $\varphi_T(1)>\varphi_T(i)$ for all nodes $i\in \{2,\ldots,K+1\}$. For convenience, and analogously as in Section~\ref{words_section}, for $i,j\in\{1,\ldots,K+1\}$, we write $j\preceq i$ when $i$ is a descendant of $j$ in $(T,1)$ --- i.e., $j$ is on the path of $T$ between $1$ and $i$ --- and we define $\theta_i T$ as the set of the descendants of $i$ in $(T,1)$. We write $j\prec i$ when $j\preceq i$ and $j\neq i$. By (\ref{eq:phiu-ancestor}), it suffices to show that for all $i\in\{2,\ldots,K+1\}$,
    \begin{equation}
    \label{eq:last_step_appendix}
    \prod_{1\prec j\preceq i}|\theta_j T|> \prod_{1\prec j\preceq i}(|T|-|\theta_j T|).
    \end{equation}
    For any node $j$ such that $1\prec j\preceq 10\log K$, we have that $|\theta_j T|\geq |T|-10\log K$, due to the subtree formed in (C2). Provided that $K$ is large enough so that $20\log K<K+1=|T|$, it follows that $|\theta_j T|>|T|-|\theta_j T|$. Thus, (\ref{eq:last_step_appendix}) holds for all nodes $i$ belonging to the path of length $10\log K$ formed in (C1). \smallskip  
    
    For the nodes $i$ in the subtree $\theta_{10\log K}T$ rooted at the end of the path of length $10\log K$, the set $\{j:1\prec j\prec i\}$ is partitioned by $\{j:1\prec 10\log K\}$ and $\{10\log K\prec j\preceq i\}$. By (C1) and (C2), the product $\prod_{1\prec j\preceq 10\log K}(|T|-|\theta_j T|)$ is equal to $(10\log K-1)!$. For the second product, (C3) yields that there are at most $8\log K$ vertices $j$ such that $10\log K\prec j\preceq i$. Since we always have that $|T|-|\theta_jT|\leq K+1$, we can write 
    \[\prod_{1\prec j\preceq i}(|T|-|\theta_j T|)\leq (K+1)^{8\log K}(10\log K)!\leq (K+1)^{8\log K}(10\log K)^{10\log K}.\]
    Furthermore, we also have that
    \[\prod_{1\prec j\preceq i}|\theta_j T|\geq \prod_{1\prec j\preceq 10\log K}|\theta_j T|\geq (K+1-10\log K)^{10\log K},\]
    which is strictly bigger than $(K+1)^{8\log K}(10\log K)^{10\log K}$ for $K$ large enough. This shows (\ref{eq:last_step_appendix}), and thus completes the proof.   
\end{proof}
\end{appendix}

\bibliographystyle{imsart-number}
\bibliography{biblio}

\end{document}